\newcommand{\be}{\begin{equation*}}
\newcommand{\ee}{\end{equation*}}
\newcommand{\ben}[1]{\begin{equation}\label{#1}}
\newcommand{\een}{\end{equation}}
\newcommand{\bea}{\begin{eqnarray}}
\newcommand{\eea}{\end{eqnarray}}
\newcommand{\bean}{\begin{eqnarray*}}
\newcommand{\eean}{\end{eqnarray*}}
\newcommand{\R}{\mathbb{R}}
\newcommand{\T}{\mathbb{T}}
\newcommand{\J}{\mathbb{J}}
\newcommand{\p}{\partial}
\renewcommand{\r}{{\overline{r}}}
\renewcommand{\O}[1]{\mathcal{O}\left( #1 \right)}
\newcommand{\N}{\mathbb{N}}
\newcommand{\scri}{\mathscr{I}}
\newcommand{\abs}[1]{\left|#1 \right|}
\newcommand{\eq}[1]{({\ref{#1}})}
\newtheorem{Theorem}{Theorem}
\newtheorem{Proposition}{Proposition}
\newtheorem{conj}{Conjecture}
\newtheorem{Lemma}{Lemma}
\newtheorem{Corollary}[Theorem]{Corollary}
\newtheorem{Remark}{Remark}
\newtheorem{Definition}{Definition}
\numberwithin{Theorem}{section}
\numberwithin{Lemma}{section}
\numberwithin{Proposition}{section}
\newcommand{\f}{\frac}
\newcommand{\rd}{\partial}
\def\th{\theta}
\title{Asymptotic Properties of Linear Field Equations in Anti-de Sitter Space}
\author{Gustav Holzegel}
\thanks{\texttt{g.holzegel@imperial.ac.uk} \\
\phantom{1   }\hspace{.05cm} Dept.~of Mathematics, South Kensington Campus, Imperial College London, SW7 2AZ, UK}
\author{Jonathan Luk}
\thanks{\vspace{.1cm} \texttt{jluk@dpmms.cam.ac.uk}\\
\phantom{1   }\hspace{.05cm} Dept.~of Mathematics, Stanford University, 450 Serra Mall, Building 380, Stanford, CA 94305-2125}
\author{Jacques Smulevici}
\thanks{\vspace{.1cm} \texttt{jacques.smulevici@math.u-psud.fr}\\
\phantom{1   }\hspace{.05cm} Laboratoire de Math\'ematiques, Universit\'e Paris-Sud 11, b\^at. 425, 91405 Orsay, France}
\author{Claude Warnick}
\thanks{\vspace{.1cm} \texttt{c.m.warnick@maths.cam.ac.uk}\\
\phantom{1   }\hspace{.05cm} DPMMS and DAMTP, University of Cambridge, Wilberforce~Road,  Cambridge CB3 0WA, UK}
\date{\today \vspace{.1cm}}  
\begin{document}

\begin{abstract}
We study the global dynamics of the wave equation, Maxwell's equation and the linearized Bianchi equations on a fixed anti-de Sitter (AdS) background. Provided dissipative boundary conditions are imposed on the dynamical fields we prove uniform boundedness of the natural energy as well as both degenerate (near the AdS boundary) and non-degenerate integrated decay estimates. Remarkably, the non-degenerate estimates ``lose a derivative". We relate this loss to a trapping phenomenon near the AdS boundary, which itself originates from the properties of (approximately) gliding rays near the boundary. Using the Gaussian beam approximation we prove that non-degenerate energy decay without loss of derivatives does not hold. As a consequence of the non-degenerate integrated decay estimates, we also obtain pointwise-in-time decay estimates for the energy. Our paper provides the key estimates for a proof of the non-linear stability of the anti-de Sitter spacetime under dissipative boundary conditions. Finally, we contrast our results with the case of reflecting boundary conditions. 
\end{abstract}

\maketitle

\section{Introduction}
The non-linear Einstein vacuum equations with cosmological constant $\Lambda$,
\begin{align} \label{eve}
Ric \left[g\right]= \Lambda g \, ,
\end{align}
constitute a complicated coupled quasi-linear hyperbolic system of partial differential equations for a Lorentzian metric $g$. The past few decades have seen fundamental progress in understanding the global dynamics of solutions to (\ref{eve}). In particular, a satisfactory answer -- asymptotic stability -- has been given for the dynamics of (\ref{eve}) with $\Lambda=0$ near  Minkowski space \cite{ChrKla}; the dynamics of (\ref{eve}) with $\Lambda>0$ near de Sitter space \cite{Ringstroem, FriedrichdS}, the maximally symmetric solution of (\ref{eve}) with $\Lambda>0$; as well as for the ($\Lambda >0$) Kerr-dS black hole spacetimes  \cite{Hintz:2016gwb}. Today, the dynamics near black hole solutions of (\ref{eve}) for $\Lambda=0$  is a subject of intense investigation \cite{Mihalisnotes, YakovKerrFull}, with the current state-of-the-art being the results  in \cite{Dafermos:2016uzj} establishing the  linear stability of Schwarzschild, and \cite{Klainerman:2017nrb} establishing nonlinear stability for Schwarzschild within a restricted symmetry class. 

In contrast to the above, the global dynamics of (\ref{eve}) with $\Lambda<0$ near anti-de Sitter space (AdS), the maximally symmetric solution of the vacuum Einstein equations with $\Lambda<0$, is mostly unknown. Part of the problem is that in the case of $\Lambda<0$, the PDE problem associated with (\ref{eve}) takes the form of an initial boundary value problem.
 Therefore, even to construct local in time solutions, one needs to understand what appropriate (well-posed, geometric) boundary conditions are. It also suggests that the global behaviour of solutions starting initially close to the anti-de Sitter geometry may depend crucially on the choice of these boundary conditions \cite{HelmutonAdS, Friedrich}. 

Using his conformal field equations, Friedrich \cite{Friedrich} constructed local in time solutions to (\ref{eve}) with $\Lambda<0$. (See also \cite{Enciso} for a recent proof for Dirichlet conditions using harmonic gauge.) While in general it is quite intricate to isolate the geometric content inherent in the boundary conditions imposed (partly due to the large gauge freedom present in the problem), there is nevertheless a ``conformal" piece of boundary data that does admit a physical interpretation. This goes back to the Bianchi equations,
\begin{align} \label{bianchi}
\left[\bold{\nabla}_g\right]^a W_{abcd} = 0 \, ,
\end{align}
satisfied by the Weyl-tensor of a metric $g$ satisfying (\ref{eve}). As is well known \cite{ChrKla}, the equations (\ref{bianchi}) can be used to estimate the curvature components of a dynamical metric $g$. The boundary data required for well-posed evolution of (\ref{bianchi}) will generally imply a condition on the energy-flux of curvature through the timelike boundary. Two ``extreme" cases seem particularly natural and interesting: The case when this flux vanishes, corresponding to reflecting (Dirichlet or Neumann) conditions, and the case when this flux is ``as large as possible", corresponding to ``optimally dissipative" conditions. While any such boundary data for (\ref{bianchi}) will have to be complemented with other data (essentially the choice of a boundary defining function and various gauge choices) to estimate the full spacetime metric  \cite{HelmutonAdS}, it is nevertheless reasonable, in view of the strong non-linearities appearing in the Einstein equations, to conjecture the following loose statement for the global dynamics of perturbations of AdS under the above ``extreme" cases of boundary conditions:

\begin{conj} \label{conj:ads}
Anti-de Sitter spacetime is non-linearly unstable for reflecting and asymptotically stable for optimally dissipative boundary conditions.
\end{conj}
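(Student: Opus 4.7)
Since Conjecture~\ref{conj:ads} has two parts of opposing character, I would split the argument, using the linear estimates of the paper as scaffolding. For the asymptotic stability half, I would formulate~\eqref{eve} either through Friedrich's conformal field equations (to retain a clean geometric treatment of the timelike boundary $\scri$) or through a suitable wave-gauge formulation in the spirit of Enciso--Kamran, reducing the problem to a quasilinear symmetric hyperbolic system with quadratic nonlinearities in the unknowns and their first derivatives, coupled to the dissipative boundary conditions derived from the optimally dissipative curvature flux at $\scri$. The plan would then be a bootstrap in a high-regularity weighted $L^2$-based energy norm: assume smallness of $N$ commuted energies of the perturbation, apply the linearised uniform boundedness and integrated-decay estimates announced in the abstract (together with their higher-order analogues obtained by commuting with the AdS Killing fields and a boundary-adapted redshift vector field), and use the pointwise-in-time energy decay to absorb the nonlinear error terms.

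\textbf{Main obstacle on the stability side.} The single most delicate point is the derivative loss in the non-degenerate integrated decay estimate. Iterating a lossy estimate to close a nonlinear argument would generate an unbounded regularity hierarchy, so one must interleave degenerate (lossless but vanishing at $\scri$) and non-degenerate (lossy) estimates, paying the derivative only in the near-boundary trapping region and using the degenerate bounds everywhere else; this is analogous to Dafermos--Rodnianski's handling of trapping on Schwarzschild, but complicated by the fact that here trapping sits \emph{on} the conformal boundary. A parallel difficulty is that the optimally dissipative condition is natural on the Weyl tensor, and one must promote it to boundary conditions on the metric/frame unknowns that are simultaneously well-posed, gauge-invariant, and propagated by the nonlinear flow; this amounts to constructing a boundary gauge along $\scri$ compatible with Bianchi-level dissipation.

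\textbf{For the instability half} (reflecting boundary conditions), the strategy is to exploit the fully resonant spectrum of AdS: linear eigenfrequencies are integer-spaced, so generic quadratic nonlinearities produce secular growth through $3\to 1$ resonant mode interactions (the Bizo\'n--Rostworowski mechanism). I would construct a one-parameter family of initial data of amplitude $\varepsilon$ supported on a small collection of low-lying modes, derive a two-time-scale resonant normal form for the mode equations on the timescale $\varepsilon^{-2}$, and show that along this family a Sobolev norm of order $s>1$ grows unboundedly. To upgrade weakly turbulent growth to genuine nonlinear instability one would either graft a trapped-surface formation argument at the end of the cascade (in the style of Christodoulou's short-pulse method, adapted to AdS asymptotics) or use the Gaussian beam construction mentioned in the abstract to produce explicit approximate solutions that reflect indefinitely off $\scri$ and concentrate their energy in space.

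\textbf{Expected principal obstacle.} The instability half is by far the harder part: controlling the remainder of the resonant normal form uniformly over the long timescale $\varepsilon^{-2}$ (and beyond), in a setting with infinitely many exact resonances and no available small-divisor separation, is open even in much simpler model systems, and establishing the precise endpoint of the cascade (e.g., trapped surface formation) is essentially the full content of the AdS instability conjecture. By comparison, the stability half is conditional mostly on the derivative-loss bookkeeping described above, for which the Gaussian beam lower bound of the paper shows there is no room for improvement at the linear level, so the nonlinear argument must genuinely exploit the structure of the Einstein equations rather than treat them as a generic quasilinear perturbation.
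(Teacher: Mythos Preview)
The statement you are attempting to prove is labelled \emph{Conjecture} in the paper, and the paper does not prove it. As the authors state explicitly, ``The present paper is the first of a series of papers establishing the stability part of Conjecture~\ref{conj:ads}. Here we contribute the first fundamental ingredient, namely robust decay estimates for the associated linear problem.'' There is therefore no proof in the paper against which your proposal can be compared; the paper restricts itself to Theorem~\ref{wave full decay} (linear boundedness and integrated decay for spin $0,1,2$ fields on a fixed AdS background with dissipative boundary conditions), Corollary~\ref{uniform decay} (pointwise-in-time decay), and Theorem~\ref{theo:gb} (necessity of the derivative loss via Gaussian beams). The full nonlinear stability argument is deferred to the forthcoming companion paper \cite{globalnlsads}, and the instability half is not addressed at all beyond citing the numerical and heuristic literature.

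That said, two remarks on the content of your outline are worth making. First, your identification of the derivative loss as the principal obstacle on the stability side is well taken, but the paper already indicates how it expects this to be handled: in the remarks following Theorem~\ref{wave full decay} the authors note that ``the degeneration is sufficiently weak in the sense that the degenerate estimate (2) will be sufficient to deal with the non-linear error-terms.'' In other words, the nonlinear argument is expected to close using the \emph{degenerate} (lossless) integrated decay estimate, so the lossy estimate need not be iterated. Second, your Gaussian beam suggestion for the instability half is not how the paper uses Gaussian beams: here they are deployed under \emph{dissipative} boundary conditions to prove Theorem~\ref{theo:gb} (that the derivative loss is necessary), not under reflecting conditions to exhibit instability.
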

The instability part of Conjecture \ref{conj:ads}  was first made in \cite{DafHol, Newtontalk} in connection with work on five-dimensional gravitational solitons. See also \cite{Anderson}. By now there exist many refined versions of this part of the conjecture as well as strong heuristic and numerical support in its favour \cite{Bizon, Dias, Buchel}. In a series of papers \cite{Moschidis:2017lcr, Moschidis:2017llu, Moschidis:2018kcf}, Moschidis has considered the stability problem for the AdS spacetime within spherical symmetry, in the presence of null dust or massless Vlasov matter, culminating in a proof of the instability of AdS for the Einstein--Massless Vlasov system in \cite{ Moschidis:2018ruk}.

The present paper is the first of a series of papers establishing the stability part of Conjecture \ref{conj:ads}. Here we contribute the first fundamental ingredient, namely robust decay estimates for the associated linear problem. Our interest in the case of dissipative boundary conditions in part goes back to the original work of Friedrich in \cite{Friedrich}, which establishes that local well posedness does not single out a preferred boundary condition at null infinity. It is then reasonable to ask how questions of global existence depend on the choice of boundary condition. We also observe that to date the only solutions to the vacuum Einstein equations (regardless of boundary conditions) which are future-complete are necessarily stationary. See for example \cite{Chrusciel:2005uy}  for  general constructions of such spacetimes. A positive resolution of the stability part of Conjecture \ref{conj:ads} would necessarily imply the existence of truly dynamical, future-complete, solutions  to \eq{eve}.

\subsection{Linear field equations on AdS}
An important prerequisite for any non-linear stability result is that the associated \emph{linear} problem is robustly controlled \cite{ChKlLinear}. In the present (dissipative) context, this means that the mechanisms and obstructions for the decay of linear waves in the \emph{fixed} AdS geometry should be understood and decay estimates with constants depending on the initial data available. We accomplish this by giving a complete description of the decay properties of three fundamental field equations of mathematical physics:
\begin{enumerate}
\item  [(W)] The conformal wave equation for a scalar function on the AdS manifold\footnote{To simplify the algebra we choose $\Lambda=-3$ throughout the paper.},
\begin{align} \label{introwave}
\Box_{g_{AdS}} u + 2u = 0 \, .
\end{align}
\item [(M)] Maxwell's equations for a two-form $F$ on the AdS manifold,
\begin{align} \label{intromax}
dF = 0 \ \ \  \textrm{and} \ \ \ d \star_{g_{AdS}} F = 0 \, .
\end{align}
\item [(B)] The Bianchi equations for a Weyl field $W$ (see Definition \ref{def:weyl} below) on the AdS manifold
\begin{align} \label{introweyl}
\left[ \nabla_{g_{AdS}}\right]^a W_{abcd} = 0 \, .
\end{align}
\end{enumerate}
Note that since AdS is conformally flat, equation (\ref{introweyl}) is precisely the linearization of the full non-linear Bianchi equations (\ref{bianchi}) with respect to the anti-de Sitter metric. 

The models (W), (M), and (B) will be accompanied by dissipative boundary conditions. In general, to even \emph{state} the latter, one requires a choice of boundary defining function for AdS and a choice of timelike vectorfield (or, alternatively, the choice of an outgoing null-vector) at each point of the AdS boundary. For us, it is easiest to state these conditions in coordinates which suggest a canonical choice for both these vectors. We write the AdS metric in spherical polar coordinates on $\mathbb{R}^4$, where it takes the simple familiar form 
\begin{align} \label{adsmetric}
g_{AdS} = - \left(1+r^2\right) dt^2 + \left(1+r^2\right)^{-1} dr^2 + r^2 d\Omega_{\mathbb S^2}^2 \, ,
\end{align}
with the asymptotic boundary corresponding to the timelike hypersurface ``$r=\infty$".\footnote{The causal nature of this boundary is clearer in the Penrose picture discussed in Section \ref{sec:cof} below.} Note that $1/r$ is a boundary defining function and that
\[
e_0 = \frac{1}{\sqrt{1+r^2}} \partial_t \ \ \ , \ \ \ e_{\r} = \sqrt{1+r^2} \partial_r \ \ \ , \ \ \ e_{A} = \slashed{e}_A 
\]
with $\slashed{e}_A$ ($A=1,2$) an orthonormal frame on the sphere of radius $r$ defines an orthonormal frame for AdS. Finally, the vector $\partial_t$ singles out a preferred timelike direction.

In the case of the wave equation, the optimally dissipative boundary condition can then be stated as
\begin{equation} \label{introwd}
\frac{\partial (r u)}{\partial t} + r^2\frac{\partial (r u)}{\partial r} \to  0, \qquad \textrm{ as }r\to \infty \, .
\end{equation}
Note that $\partial_t + \left(1+r^2\right) \partial_r$ is an outgoing null-vector. See also Section \ref{sec:bc}.

In the case of Maxwell's equations, we define the electric and magnetic field $E_i = F\left(e_0,e_i\right)$ and $H_i = \star_{g_{AdS}} F\left(e_0,e_i\right)$ respectively. Here $\star_{g_{AdS}}$ is the Hodge dual with respect to the AdS metric. The dissipative boundary condition then takes the form
\ben{intromd}
r^2 \left(E_A + \epsilon_A{}^{B} H_B\right) \to 0, \qquad \textrm{ as } r \to \infty \, ,
\een
which means that the Poynting vector points outwards, allowing energy to leave the spacetime. 

Finally, in the case of the Bianchi equations, we define the electric and magnetic part of the Weyl tensor $E_{AB} = W\left(e_0,e_A,e_0,e_B\right)$ and $H_{AB}=\star_{g_{AdS}} W \left(e_0,e_A,e_0,e_B\right)$ respectively.  Introducing the trace-free part of $E_{AB}$ as $\hat{E}_{AB} = E_{AB} - \frac{1}{2}\delta_{AB} E_C{}^C$ and similarly for $H_{AB}$,  the dissipative boundary conditions can then be expressed as:
\ben{introweyld}
r^3 \left( \hat{E}_{AB}+ \epsilon_{(A}{}^{C} \hat{H}_{B) C}\right) \to 0, \qquad \textrm{ as } r \to \infty \, .
\een
Interpreting the Bianchi equations as a linearisation of the full vacuum Einstein equations, we can understand these boundary conditions in terms of the metric perturbations as implying a relation between the Cotton-York tensor of the conformal metric on $\scri$ and the ``stress-energy tensor'' of the boundary\footnote{We use here the nomenclature of the putative AdS/CFT correspondence.}.

That the above boundary conditions are indeed correct, naturally dissipative, boundary conditions leading to a well-posed boundary initial value problem will be a result of the energy identity. While this is almost immediate in the case of the wave equation and Maxwell's equations, we will spend a considerable amount of time on the ``derivation" of (\ref{introweyld}) in the Bianchi case, see Section \ref{sec:mfs}. 

Note that we could consider a more general Klein-Gordon equation than (W):
\begin{align}
\Box_{g_{AdS}} u + a u = 0 \, .
\end{align}
Here the boundary behaviour of $u$ is more complicated \cite{Warnick:2012fi}, but a well posedness theory with dissipative boundary conditions is still available \cite{Gannot:2015twa} for $a$ in a certain range. We restrict attention to the case $a=2$ for two reasons. Firstly, for this choice of $a$ the equation has much in common with the systems (M), (B), owing to their shared conformal invariance. This is the correct `toy model' for the Einstein equations. Secondly, the problem appears to be considerably more challenging for $a \neq2$. In particular, it does not appear that the methods of this paper can be directly applied, even once allowance is made using the formalism of \cite{Warnick:2012fi} for the more complicated boundary behaviour of solutions.

\subsection{The main theorems}
We now turn to the results.
\begin{Theorem}\label{wave full decay}
Let one of the following hold
\begin{enumerate}
\item [(W)] $\Psi$ is a scalar function and a smooth solution of \eq{introwave} subject to dissipative boundary condition \eq{introwd}. We associate with $\Psi$ the energy density
\begin{align} 
\varepsilon \left[\Psi\right] &:= \sqrt{1+r^2} \left( \frac{\left( \p_t \Psi\right)^2 + \Psi^2  }{1+r^2}  + \left[\p_r \left( \sqrt{1+r^2} \Psi\right)\right]^2 + \abs{\slashed{\nabla} \Psi}^2 \right) \nonumber 
\end{align}
\item [(M)] $\Psi$ is a Maxwell-two-form and a smooth solution of \eq{intromax} subject to dissipative boundary conditions \eq{intromd}. We associate with $\Psi$ the energy density
\begin{align}
\varepsilon\left[\Psi \right] = \sqrt{1+r^2} \left( |E|^2 + |H|^2 \right) \nonumber
\end{align}
where $E$ and $H$ denote the electric and magnetic part of $\Psi$ respectively.
\item [(B)] $\Psi$ is a Weyl-field and a smooth solution of \eq{introweyl} subject to dissipative boundary conditions \eq{introweyld}. We associate with $\Psi$ the energy density
\begin{align}
\varepsilon\left[\Psi\right] &=  \left(1+r^2\right)^\frac{3}{2} \left( |E|^2 + |H|^2 \right)  \nonumber
\end{align}
where $E$ and $H$ denote the electric and magnetic part of the Weyl-field respectively.
\end{enumerate}
\vspace{.3cm}
Then we have the following estimates
\begin{enumerate}
\item Uniform Boundedness: For any $0<T<\infty$ we have
\be
\int_{\Sigma_T}  \frac{\varepsilon \left[\Psi\right]}{\sqrt{1+r^2}}r^2 dr d\omega \lesssim \int_{\Sigma_0}  \frac{\varepsilon \left[\Psi\right]}{\sqrt{1+r^2}}r^2 dr d\omega \, ,
\ee
where the implicit constant is independent of $T$.
\item Degenerate (near infinity) integrated decay without derivative loss:
\be
\int_0^\infty dt \int_{\Sigma_t} \frac{\varepsilon[\Psi]}{1+r^2} r^2 dr d\omega \lesssim \int_{\Sigma_0}  \frac{\varepsilon\left[\Psi\right]}{\sqrt{1+r^2}}r^2 dr d\omega \, .
\ee
\item Non-degenerate (near infinity) integrated decay with derivative loss:
\be
\int_0^\infty dt \int_{\Sigma_t} \frac{\varepsilon[\Psi]}{\sqrt{1+r^2}} r^2 dr d\omega \lesssim \int_{\Sigma_0}  \frac{\varepsilon \left[\Psi\right]+\varepsilon \left[\partial_t \Psi\right]}{\sqrt{1+r^2}}r^2 dr d\omega \, .
\ee
\end{enumerate}
\end{Theorem}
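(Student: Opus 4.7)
My plan is to treat the three systems (W), (M), (B) uniformly through the vector field method, exploiting the Killing field $\partial_t$ together with the fact that each field equation admits a natural divergence-free stress current with good positivity properties: the energy-momentum tensor of the scalar for (W), the Maxwell stress tensor for (M), and the Bel--Robinson tensor for (B). The three conclusions share a common three-step skeleton: a Killing energy identity for (1), a radial Morawetz-type multiplier for (2), and a commutation-plus-elliptic-estimate argument for (3), in which the derivative loss is forced by the trapping of (nearly) gliding rays at $\scri$.

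For (1) I would integrate the divergence of the current $T_{\mu\nu}(\partial_t)^\nu$ over the slab bounded by $\Sigma_0$ and $\Sigma_T$. The induced flux on each $\Sigma_t$ is pointwise equivalent to $\varepsilon[\Psi]/\sqrt{1+r^2}$, while the flux on $\scri$ reduces, via the form of $\partial_t$ at the boundary and the chosen orthonormal frame, to the norm squared of the outgoing transport component of $\Psi$. The dissipative boundary conditions \eq{introwd}, \eq{intromd}, \eq{introweyld} have been engineered so that exactly this quantity appears with a favourable sign; the boundary flux can therefore be discarded and (1) follows with a constant independent of $T$.

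For (2) I would apply a current built from a radial multiplier $X = f(r)\,\partial_r$ together with, in the (M) and (B) cases, a lower-order correction analogous to the function $q(r)\Psi^2$ one adds in the scalar Morawetz identity. The function $f$ is chosen to vanish at $r=0$, to remain bounded as $r\to\infty$, and to produce a bulk term bounded below by $\varepsilon[\Psi]/(1+r^2)$. The ceiling on how fast $f$ can grow at infinity is exactly what forces the weight $(1+r^2)^{-1}$ in (2); pushing harder is obstructed precisely by the near-trapping at $\scri$. Boundary terms on $\scri$ are once again absorbed by the dissipative conditions, while the $\Sigma_0$ and $\Sigma_T$ boundary terms are controlled by (1). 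For the Weyl case the positivity of the bulk term requires a careful use of the algebraic symmetries of the Bel--Robinson tensor.

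The non-degenerate estimate (3) is the most delicate step and is where the derivative loss enters. Because $\partial_t$ is Killing and the boundary conditions are manifestly $t$-translation invariant, $\partial_t \Psi$ satisfies the same system as $\Psi$, so (2) applied at the commuted level controls $\int_0^\infty dt \int_{\Sigma_t} \varepsilon[\partial_t \Psi]/(1+r^2)\, r^2\, dr\, d\omega$. To promote the weight from $(1+r^2)^{-1}$ to $(1+r^2)^{-1/2}$, I would use the field equation to re-express the principal spatial part of $\varepsilon[\Psi]$ near $\scri$ in terms of $\partial_t^2 \Psi$ plus lower-order terms, and then apply a slice-wise elliptic estimate adapted to the asymptotically hyperbolic geometry of $\Sigma_t$ and to the dissipative boundary conditions. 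Integrating this elliptic estimate in $t$ yields (3), at the cost of one derivative. The main obstacle I anticipate is establishing the elliptic estimate in the Bianchi case: one must show that the constraint system satisfied by $(E,H)$ on $\Sigma_t$, combined with the boundary relation \eq{introweyld}, is coercive enough to control the full symmetric-traceless tensor pair from its time-derivative data, which requires a careful treatment of the conformal weights at $\scri$. A secondary difficulty, present for all three systems, is designing the Morawetz multiplier so that the bulk error, after the lower-order modification, is nonnegative both at $r=0$ and uniformly up to the conformal boundary.
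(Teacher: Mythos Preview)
Your skeleton is broadly correct for (W) and (M), but it contains a genuine gap in the Bianchi case (B), and the gap is precisely the point the paper singles out as the main subtlety.

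You propose to obtain (1) for all three systems by integrating the $\partial_t$-current and discarding the $\scri$-flux because ``the dissipative boundary conditions \ldots\ have been engineered so that exactly this quantity appears with a favourable sign.'' For (W) and (M) that is indeed the case. For (B) it is \emph{not}: the Bel--Robinson flux $Q(e_0,e_0,e_0,e_{\bar r})$ through $\scri$ equals, up to a factor, $E_{AC}\epsilon^{AB}H_B{}^C+E_{A\bar r}\epsilon^{AB}H_{B\bar r}$, which the two scalar conditions in \eq{introweyld} do \emph{not} force to have a sign (the $A\bar r$-components are unconstrained at the boundary). Your approach therefore cannot close the energy estimate for (B) as written. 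The paper's resolution is to prove (1) and (2) simultaneously for (B): one contracts the Bel--Robinson tensor with $Y=T+\tfrac{1}{\sqrt{3}}X$ three times, and it is only after expanding $Q(e_{\bar r},Y,Y,Y)$ that the boundary term becomes a sum of manifestly signed pieces once \eq{introweyld} is imposed. The factor $\tfrac{1}{\sqrt{3}}$ is chosen precisely to produce a cancellation between the indefinite $Q_{\bar r000}$ and $Q_{\bar r\bar r\bar r0}$ contributions. So your plan to treat (1) and (2) as separate steps, each with its own multiplier, breaks down for the spin-2 system.

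Two smaller points. First, you have the lower-order correction backwards: it is (W) that needs the modified current $J_{X,w_1,w_2}$ to kill the trace term, whereas for (M) and (B) the stress tensor/Bel--Robinson tensor is trace-free and the bare $X$-current suffices. Second, the paper's $X=r\sqrt{1+r^2}\,\partial_r$ has $f(r)\sim r^2$, not bounded; what matters is that ${}^X\pi=(1+r^2)^{-1/2}(e^0)^2+\sqrt{1+r^2}\,g$, so contracting with a traceless tensor yields exactly the degenerate bulk weight you want. Your description of step (3) is close in spirit; the paper carries it out via weighted elliptic estimates on $\Sigma_t$ (using the modified/reduced Bianchi system in case (B)) followed by Hardy inequalities to upgrade the $r$-weight, rather than by directly substituting $\partial_t^2\Psi$.
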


\begin{Remark}
Similar statements hold for higher order energies by commuting with $\partial_t$ and doing elliptic estimates. As this is standard we omit the details.
\end{Remark}


\begin{Corollary}[Uniform decay]\label{uniform decay}
Under the assumptions of the previous theorem the following uniform-in-time decay estimate holds for any integer $n\geq 1$
\be
\int_{\Sigma_t}  \frac{\varepsilon \left[\Psi\right]}{\sqrt{1+r^2}}r^2 dr d\omega \lesssim \frac{1}{\left(1+t\right)^n} \int_{\Sigma_0}  \frac{\varepsilon \left[\Psi\right]+\varepsilon \left[\partial_t \Psi\right]+ \ldots + \varepsilon\left[\partial_t^n \Psi\right]}{\sqrt{1+r^2}}r^2 dr d\omega \, .
\ee
\end{Corollary}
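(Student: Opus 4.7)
The plan is to combine the three conclusions of Theorem \ref{wave full decay} with the time-translation symmetry of AdS, via a dyadic pigeonhole argument and an induction on $n$. First I would verify that $\partial_t$, which is a Killing field of $g_{AdS}$, is compatible with the dissipative boundary conditions \eq{introwd}, \eq{intromd} and \eq{introweyld}: the frame $(e_0, e_{\r}, \slashed{e}_A)$ is $\partial_t$-invariant and the boundary conditions are algebraic relations between components taken in this frame, so $\partial_t^k \Psi$ solves the same field equation subject to the same class of dissipative boundary condition. Writing
\[
E_k(t) := \int_{\Sigma_t} \frac{\varepsilon[\partial_t^k \Psi]}{\sqrt{1+r^2}}\, r^2\, dr\, d\omega,
\]
Theorem \ref{wave full decay} applied to $\partial_t^k \Psi$ then yields, for each $k \geq 0$, both the uniform bound $E_k(t) \lesssim E_k(t_0)$ for all $t \geq t_0 \geq 0$ (obtained by restarting the estimate at time $t_0$) and the time-integrated decay $\int_{t_0}^\infty E_k(s)\, ds \lesssim E_k(t_0) + E_{k+1}(t_0)$.

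Next I would prove the statement $E_0(t) \lesssim (1+t)^{-n} \sum_{k=0}^n E_k(0)$ by induction on $n$, which is equivalent to the content of the corollary. The base case $n=0$ is uniform boundedness. For the inductive step, applying the induction hypothesis to $\partial_t \Psi$ in place of $\Psi$ gives
\[
E_1(t) \lesssim (1+t)^{-n} \sum_{k=0}^{n+1} E_k(0),
\]
and combining this with the restarted integrated-decay bound starting at time $t/2$ produces
\[
\int_{t/2}^\infty E_0(s)\, ds \lesssim E_0(t/2) + E_1(t/2) \lesssim \frac{\sum_{k=0}^{n+1} E_k(0)}{(1+t)^n}.
\]
A mean-value argument on $[t/2, t]$ (valid for $t \geq 1$; for smaller $t$ one invokes uniform boundedness directly) then selects $t^{\star} \in [t/2, t]$ with $E_0(t^{\star}) \lesssim (1+t)^{-(n+1)} \sum_{k=0}^{n+1} E_k(0)$, after which propagating by uniform boundedness from $t^{\star}$ to $t$ yields the same bound at $t$, closing the induction.

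I do not anticipate any serious obstacle in this scheme: it is the standard output of having simultaneously a uniform boundedness statement, a time-integrated decay bound (albeit with one derivative loss), and a time-translation symmetry compatible with the boundary condition. The only substantive point requiring verification is the $\partial_t$-invariance of the optimally dissipative boundary condition \eq{introweyld} for the Bianchi system, which is immediate because $E_{AB}$ and $H_{AB}$ are defined through contractions with the $\partial_t$-invariant orthonormal frame. Higher-order energies of $\partial_t^k \Psi$ at $t=0$ can, if desired, be re-expressed in terms of spatial derivatives of the initial data via the equation, but this is not needed for the statement as written.
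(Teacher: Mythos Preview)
Your argument is correct and follows the same overall architecture as the paper: induction on $n$, commutation with $\partial_t$, and a dyadic restart at $t/2$. The difference lies in how you extract a pointwise-in-time bound from the integrated estimate. The paper writes
\[
(1+t-T_1)\,\mathcal{E}^{(0)}(t)=\mathcal{E}^{(0)}(t)+\int_{T_1}^t\bigl[\mathcal{E}^{(0)}(s)+(s-T_1)\dot{\mathcal{E}}^{(0)}(s)\bigr]\,ds
\]
and drops the last term using \emph{monotonicity} of $\mathcal{E}^{(0)}$; this forces, in the Bianchi case, a passage to the auxiliary monotone energy $\int_{\Sigma_t}Q(e_0,Y,Y,Y)\,r^2(1+r^2)^{-1/2}\,dr\,d\omega$, since the natural energy $\int \varepsilon[\Psi](1+r^2)^{-1/2}r^2\,dr\,d\omega$ is only bounded, not non-increasing. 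Your pigeonhole/mean-value step requires only the uniform boundedness statement $E_0(t)\lesssim E_0(t^\star)$ for $t\ge t^\star$, which is already part (1) of Theorem~\ref{wave full decay}, and therefore handles all three spins uniformly without this detour. In that sense your route is marginally more elementary.
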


What is remarkable about the above theorem is that the derivative loss occurring in (3) allows one to achieve integrated decay of the energy without loss in the asymptotic weight $r$. While it is likely that more refined methods can reduce the loss of a full derivative in (3), we shall however establish that \emph{some} loss is necessary and in fact reflects a fundamental property of the hyberbolic equations on AdS: the presence of trapping at infinity.

\begin{Theorem} \label{theo:gb}
With the assumptions of Theorem \ref{wave full decay}, the term $\varepsilon[\partial_t \Psi]$
on the right hand side of estimate (3) of Theorem \ref{wave full decay} is necessary: The estimate fails (for general solutions) if it is dropped.
\end{Theorem}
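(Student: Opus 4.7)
The plan is to construct, for each large $k$, an approximate solution $\Psi_k$ concentrated along a null geodesic that grazes the conformal boundary, and then to pass to genuine solutions by perturbation. The goal is to produce a sequence whose $\Sigma_0$-energy is $O(1)$ but whose spacetime energy integral on the left hand side of (3) in Theorem \ref{wave full decay} diverges as $k \to \infty$. The underlying geometric fact is that the null geodesics of AdS with conserved energy $E$ and angular momentum $L$ obey $\dot r^2 = E^2 - L^2(1+r^2)/r^2$, so the inner radial turning point $r_\star = L/\sqrt{E^2-L^2}$ tends to infinity as $L/E \to 1^-$. The corresponding \emph{gliding} rays reflect off the boundary at a vanishingly small angle of incidence $\theta \sim \sqrt{1-L^2/E^2}$, while their coordinate time between consecutive boundary reflections tends to a finite positive limit (equal to $\pi$ after passing to the Einstein cylinder $\rho = \arctan r$).

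For each $k$, pick frequencies $(\omega_k,\ell_k)$ with $\omega_k \sim \ell_k \sim k$ and $\theta_k := \sqrt{1-\ell_k^2/\omega_k^2}\to 0$, and fix a gliding null geodesic $\gamma_k$ carrying these data. The Gaussian beam ansatz is $\Psi_k = a_k(t,x)\, e^{ik\phi(t,x)}$, where $\phi$ is a complex phase whose real part solves the eikonal equation along $\gamma_k$ and whose transverse Hessian has positive imaginary part, so that $\Psi_k$ is Gaussian-localized of width $k^{-1/2}$ off $\gamma_k$; $a_k$ then satisfies a first-order transport equation along $\gamma_k$. In the Maxwell and Bianchi cases $a_k$ is replaced by a tensorial polarization parallel-transported along $\gamma_k$ and compatible with the algebraic constraints of \eqref{intromax}, respectively \eqref{introweyl}, but the construction is otherwise identical. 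After normalization,
\[
\int_{\Sigma_0} \frac{\varepsilon[\Psi_k]}{\sqrt{1+r^2}}\, r^2\, dr\, d\omega \sim 1, \qquad \int_{\Sigma_0} \frac{\varepsilon[\partial_t \Psi_k]}{\sqrt{1+r^2}}\, r^2\, dr\, d\omega \sim k^2,
\]
since each $\partial_t$ brings down a factor of $k$ from the rapidly oscillating phase.

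The key analytic step is a lifetime bound. A direct geometric-optics analysis of the dissipative condition \eqref{introwd}, carried out at $\rho = \pi/2$, shows that a wave packet of frequency $\omega_k$ and tangential wave number $\ell_k$ is reflected with amplitude $(\omega_k-\sqrt{\omega_k^2-\ell_k^2})/(\omega_k+\sqrt{\omega_k^2-\ell_k^2})$, so that its energy is reduced by a factor $1-O(\theta_k)$ per bounce; the same leading order behaviour holds for \eqref{intromd} and \eqref{introweyld} once the polarization is accounted for. Since the time between bounces is bounded below, the $\Sigma_t$-energy of $\Psi_k$ remains of order $1$ for times $t \lesssim T_k$ with $T_k \sim \theta_k^{-1} \to \infty$, whence
\[
\int_0^\infty \int_{\Sigma_t} \frac{\varepsilon[\Psi_k]}{\sqrt{1+r^2}}\, r^2\, dr\, d\omega\, dt \gtrsim T_k \longrightarrow \infty.
\]
Passing from the approximate beam to the genuine solution with the same Cauchy data is done by treating the WKB residual as a forcing term for the linear IBVP and absorbing it via the inhomogeneous version of the uniform boundedness statement in Theorem \ref{wave full decay}; the resulting error is $o_{k\to\infty}(1)$ on the time scale $T_k$. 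If the non-degenerate integrated decay estimate (3) held without the $\varepsilon[\partial_t \Psi]$ term on the right, applying it to this sequence would force $T_k \lesssim 1$, a contradiction.

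The main obstacle I anticipate is the quantitative reflection analysis near grazing. The dissipative conditions mix normal with temporal derivatives, and the energy loss per bounce has to be pinned down uniformly in $\theta_k$ with the correct leading-order behaviour: stronger dissipation would destroy the lifetime bound, while an $O(1)$ loss would not produce the divergence. For the higher-spin systems one must additionally verify that the polarization chosen along $\gamma_k$ remains admissible for \eqref{intromd}, respectively \eqref{introweyld}, through each reflection — an algebraic rather than analytic point, but for the Weyl field in (B) the polarization is a rank-four tensor constrained by both the Bianchi symmetries and the tangential boundary condition, which requires care. Working throughout in the conformally compactified Einstein cylinder simplifies both the geodesic and the reflection computations.
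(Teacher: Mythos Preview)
Your proposal is correct and follows essentially the same route as the paper. Both arguments build Gaussian beams along grazing null geodesics, compute the reflection coefficient at the dissipative boundary as $(E-\sqrt{E^2-L^2})/(E+\sqrt{E^2-L^2})$ so that the energy loss per bounce is $O(\theta)$ while the time between bounces is uniformly bounded below, and then pass from the approximate beam to a true solution via an inhomogeneous energy estimate; this produces solutions whose spacetime energy integral over $[0,T]$ is of order $T$ while the initial energy is normalised, contradicting estimate~(3) without the $\varepsilon[\partial_t\Psi]$ term. The paper carries this out only for the conformal wave equation (W), working explicitly on the Einstein cylinder: the approximate solution is written as a sum (see \eqref{u.tot.def}) of the original beam restricted to one hemisphere and its reflections across the equator, each reflection weighted by the factor $R$; the verification that the dissipative boundary condition is satisfied up to an $L^2$-bounded error is then reduced to a stationary-phase estimate (Lemma~\ref{gaussian lemma}) exploiting the positive imaginary part of the transverse Hessian of $\varphi$. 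Your ``main obstacle'' is precisely this step, and the paper's explicit superposition with the reflection factor built in is what makes it tractable.
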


We will prove Theorem \ref{theo:gb} only for the case of the wave equation (W), see (\ref{Etud}) and Corollary \ref{cor:ert}. The proof is based on the Gaussian beam approximation for the wave equation.\footnote{See \cite{Sbierski:2013mva}  for a discussion of the Gaussian beam approximation on general Lorentzian manifolds.} In particular, we construct a solution of the conformal wave equation in AdS which contradicts estimate (3) of Theorem \ref{wave full decay} without the $\varepsilon[\partial_t \Psi]$-term on the right hand side. Similar constructions can be given for the Maxwell and the Bianchi case.\footnote{These will be more involved in view of the fact that the equations involve constraints.}

\subsection{Overview of the proof of Theorem \ref{wave full decay} and main difficulties}
The proof of Theorem \ref{wave full decay} is a straightforward application of the vectorfield method once certain difficulties have been overcome. Let us begin by discussing the proof in case of the wave- (W) and Maxwell's equation (M) as it is conceptually easier.  

In the case of (W) and (M), statement (1) follows immediately from integration of the divergence identity $\nabla^a \left(\T_{ab} \left(\partial_t\right)^b\right)=0$ with $\T$ being the energy momentum tensor of the scalar or Maxwell field respectively. In addition, in view of the dissipative condition, this estimate gives control over certain derivatives of $u$ (certain components of the Maxwell field, namely $E_A$ and $H_A$) integrated along the boundary. 

The statement (2) of the main theorem is then obtained by constructing a vectorfield $X$ (see (\ref{Xdef})) which is almost conformally Killing near infinity. The key observations are that firstly, the right hand side of the associated divergence identity $\nabla^a \left(\T_{ab} X^b\right)={}^X\pi \cdot \T$ controls all derivatives of $u$ in the wave equation case (components of $F$ in the Maxwell case). Secondly, when integrating this divergence identity, the terms appearing on the boundary at infinity come either with good signs (angular derivatives in the case of the wave equation, $E_{\r}$ and $H_{\r}$ components in the Maxwell case) or are components already under control from the previous $\T_{ab} \left(\partial_t\right)^b$-estimate. The integrated decay estimate thus obtained comes with a natural degeneration in the $r$-weight, as manifest in the estimate (2) of Theorem \ref{wave full decay}. 

To remove this degeneration we first note that in view of the fact that the vectorfield $\partial_t$ is Killing, the estimates (1) and (2) also hold for the $\partial_t$-commuted equations. In the case of the wave equation, controlling $\partial_t \partial_t \psi$ in $L^2$ on spacelike slices implies an estimate for all spatial derivatives of $\psi$ through an elliptic estimate. The crucial point here is that \emph{weighted} estimates are required. Similarly, one can write Maxwell's equation as a three-dimensional div-curl-system with the time derivatives of $E$ and $H$ on the right hand side. Again the crucial point is that \emph{weighted} elliptic estimates are needed to prove the desired results. 

Once all (spatial) derivatives are controlled in a weighted $L^2$ sense on spacelike slices one can invoke Hardy inequalities to improve the weight in the lower order terms and remove the degeneration in the estimate (2). 

For the case of the spin 2 equations (B), the proof follows a similar structure. However, the divergence identity for the Bel-Robinson tensor (the analogue of the energy momentum tensor in cases (W) and (M)) alone \emph{will not} generate the estimate (1). In fact, the term appearing on the boundary after integration does not have a sign unless one imposes an additional boundary condition! On the other hand, one can show (by proving energy estimates for a reduced system of equations, see Section \ref{sec:reds}) that the boundary conditions (\ref{introweyld}) already uniquely determine the solution.\footnote{The introduction of this system in the context of the full non-linear problem goes back to \cite{Friedrich}.} The resolution is that in the case of the Bianchi equations one needs to prove (1) and (2) at the same time: Contracting the Bel-Robinson tensor with a suitable combination of the vectorfields $\partial_t$ and $X$ ensures that the boundary term on null-infinity does have a favorable sign and so does the spacetime-term in the interior. Once (1) and (2) are established, (3) follows from doing elliptic estimates for the reduced system of Bianchi equations similar to the Maxwell case (M).

\subsection{Remarks on Theorem \ref{wave full decay}}

The estimates of Theorem  \ref{wave full decay} remain true for a class of $C^k$ perturbations of AdS which preserve the general properties of the deformation tensor of the timelike vectorfield $\sqrt{3}\partial_t +X$, cf.~the proof of Proposition \ref{weyl derivative decay}. This includes perturbations which may be dynamical. This fact is of course key for the non-linear problem \cite{globalnlsads}. 

The estimates of Theorem  \ref{wave full decay} are stable towards perturbations of the optimally dissipative boundary conditions. In the cases of (W) and (M) one can in fact establish these estimates for any (however small) uniform dissipation at the boundary, cf.~Section \ref{sec:gen}. Whether this is possible also in case of (B) is an open problem and (if true) will require a refinement of our techniques. In Section \ref{sec:Dirichlet} we discuss the case of Dirichlet conditions for (B)  which fix the conformal class of the induced metric at infinity to linear order. The Dirichlet boundary conditions may be thought of as a limit of dissipative conditions in which the dissipation vanishes. We outline a proof of boundedness for solutions of the Bianchi equations in this setting. We also briefly discuss the relation of our work to the Teukolsky formalism in Section \ref{Teukolsky section}, and argue that the proposed boundary conditions of \cite{Dias:2013sdc} may not lead to a well posed dynamical problem. We state a set of boundary conditions for the Teukolsky formulation that \emph{does} lead to a well posed dynamical problem.

Finally, one may wonder whether and how the derivative loss in Theorem \ref{wave full decay} manifests itself in the non-linear stability problem. In ongoing work \cite{globalnlsads}, we shall see that the degeneration is sufficiently weak in the sense that the degenerate estimate (2)  will be sufficient to deal with the non-linear error-terms.


\subsection{Main ideas for the proof of Theorem \ref{theo:gb}} \label{sec:cof}
In order to better understand both the geometry of AdS and the derivative loss occurring in (3) of Theorem \ref{wave full decay}, it is useful to invoke the conformal properties of AdS and the conformal wave equation (\ref{introwave}).
Setting $r=\tan \psi$ with $\psi \in \left(0, \frac{\pi}{2}\right)$ one finds from (\ref{introwave})
\begin{align}
g_{AdS} = \frac{1}{\cos^2 \psi} \left( -dt^2 + d\psi^2 + \sin^2 \psi d\Omega_{\mathbb S^2}^2\right) =: \frac{1}{\cos^2 \psi} g_{E} \, ,
\end{align}
and hence that AdS is conformal to a part of the Einstein cylinder, namely $\mathbb{R} \times \mathbb S^3_h \subset \mathbb{R} \times \mathbb S^3$, where $\mathbb S^3_h$ denotes a hemisphere of $\mathbb S^3$ with $\psi = \frac{\pi}{2}$ being its equatorial boundary. The wave equation (\ref{introwave}) is called \emph{conformal} or \emph{conformally invariant} because if $u$ is a solution of (\ref{introwave}), then $v := \Omega u$ is a solution of a wave equation with respect to the conformally transformed metric $g = \Omega^2 g_{AdS}$. This suggests that 
understanding the dynamics of (\ref{introwave}) for $g_{AdS}$ is essentially equivalent to that of understanding solutions of
\begin{align} \label{transformedwave}
\Box_{g_{E}} v - v = 0
\end{align}
on one hemisphere of the Einstein cylinder. This latter is a finite problem, which we will refer to as Problem 1 below.
\begin{enumerate}
\item Problem 1: The wave equation \eq{transformedwave} on $\mathbb{R}_t \times \mathbb S_h^3$ (with the natural product metric of the Einstein cylinder) where $\mathbb S_h^3$ is the (say northern) hemisphere of the $3$-sphere $\mathbb S^3$ with boundary at $\psi=\frac{\pi}{2}$, where (say optimally) dissipative boundary conditions are imposed.  We contrast this problem with 
\item Problem 2: the wave equation (\ref{introwave}) on $\mathbb{R}_t \times \mathbb{B}^3$ (with the flat metric) where $\mathbb{B}^3$ is the unit ball with boundary $\mathbb S^2$ where dissipative boundary conditions are imposed. 
\end{enumerate} 
For Problem 2 it is well-known that exponential decay of energy holds without loss of derivatives, see \cite{JLLionsNotes, Chen, Lagnese}. (It is an entertaining exercise to prove this in a more robust fashion using the methods of this paper.) 
For Problem 1, however, there will be a derivative loss present in any decay estimate, as seen in Theorem \ref{wave full decay}.\footnote{It is straightforward to translate the estimates established for (\ref{introwave}) in Theorem \ref{wave full decay} into the conformal picture currently discussed.}

This phenomenon can be explained in the geometric optics approximation for the wave equation. Recall that in this picture, the optimally dissipative boundary condition says that the energy of a ray is fully absorbed if it hits the boundary orthogonally. For rays which graze the boundary, the fraction of the energy that is absorbed upon reflection depends on the glancing angle: the shallower the incident angle, the less energy is lost in the reflection. 

Now let us fix a (large) time interval $\left[0,T\right]$ for both Problem 1 and Problem 2. To construct a solution which decays very slowly, we would like to identify rays which a) hit the boundary as little as possible and b) if they do hit the boundary, they should do this at a very shallow angle (grazing rays). 

What happens in Problem 2, is that the shallower one chooses the angle of the ray, the more reflections will take place in $\left[0,T\right]$. This is easily seen by looking at the projection of the null rays to the surface $t=0$ (see figure). 
\[
\begin{picture}(0,0)%
\includegraphics{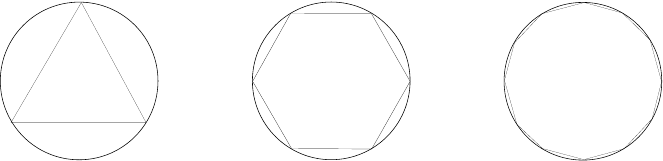}%
\end{picture}%
\setlength{\unitlength}{395sp}%
\begingroup\makeatletter\ifx\SetFigFont\undefined%
\gdef\SetFigFont#1#2#3#4#5{%
  \reset@font\fontsize{#1}{#2pt}%
  \fontfamily{#3}\fontseries{#4}\fontshape{#5}%
  \selectfont}%
\fi\endgroup%
\begin{picture}(52975,12701)(-36737,-12096)
\end{picture}%
\]

In sharp contrast, for Problem 1, the time until a null geodesic will meet the boundary again after reflection \emph{does not depend on the incident angle}! This goes back to the fact that all geodesics emanating from point on the three sphere refocus at the antipodal point. As a consequence, in Problem 1 we can keep the number of reflections in $\left[0,T\right]$ fixed while choosing the incident angle as small as we like. This observation is at the heart of the Gaussian beam approximation invoked to prove Theorem \ref{theo:gb}.

\subsection{Structure of the paper}
We conclude this introduction providing the structure of the paper. In Section \ref{sec:prelim} we define the coordinate systems, frames and basic vectorfields which we are going to employ. Section \ref{sec:fieldeq} introduces the field equations of spin 0, 1 and 2 fields, together with their energy momentum tensors. The well-posedness under dissipative boundary conditions for each of these equations is discussed in Section \ref{sec:wellposed} with particular emphasis on the importance of the reduced system in the spin 2 case. Section \ref{sec:proofmt} is at the heart of the paper proving the global results of Theorem \ref{wave full decay}, first for the spin 0 (Section \ref{sec:wd}), then the spin 1 (Section \ref{sec:md}) and finally the spin 2 case (Section \ref{sec:bd}). Corollary \ref{uniform decay} is proven in Section \ref{sec:proofcor} and Theorem \ref{theo:gb} in Section \ref{sec:proofgb}. We conclude the paper outlining generalizations of our result. Some elementary computations have been relegated to the appendix. 

\subsection{Acknowledgements} 
G.H. is grateful for the support offered by a grant of the European Research Council.  G.H., J.S.\ and C.M.W.\ are grateful to the Newton Institute for support during the programme ``Mathematics and Physics of the Holographic Principle''. J.L. thanks the support of the NSF Postdoctoral Fellowship DMS-1204493.


%
%
%
%
%
\section{Preliminaries} \label{sec:prelim}

\subsection{Coordinates, Frames and Volume forms}
We will consider the so-called ``global'' anti-de Sitter space-time, defined on $\R^4$, with metric (in polar coordinates)
\ben{AdS metric}
g_{AdS} = -\left(1+r^2 \right) dt^2 + \frac{dr^2}{1+r^2} + r^2 d\Omega_{\mathbb S^2}^2
\een
with $d\Omega_{\mathbb S^2}^2$ the standard round metric on the unit sphere. As long as there is no risk of confusion, we will also denote the metric by $g$. It will be convenient to introduce an orthonormal basis $\{e^a\}$:
\ben{on basis}
e^0 = \sqrt{1+r^2} dt, \qquad e^\r = \frac{dr}{\sqrt{1+r^2}}, \qquad e^A =  \slashed{e}^A,
\een
for $A=1,2$, where $\slashed{e}^A$ are an orthonormal basis for the round sphere\footnote{Obviously there's no global orthonormal basis for $\mathbb S^2$. We can either take multiple patches, or else understand the capital Latin indices as abstract indices.} of radius $r$. Throughout this paper, we will use capital Latin letters for indices on the sphere while small Latin letters are reserved as spacetime indices.  The dual basis of vector fields is denoted $\{e_a\}$:
\ben{on basis vf}
e_0 = \frac{1}{\sqrt{1+r^2}} \frac{\partial}{\partial t}, \qquad e_\r = \sqrt{1+r^2} \frac{\partial}{\partial r}, \qquad e_A = \slashed{e}_A,
\een

We introduce the surfaces $\Sigma_T = \{t = T\}$ and $\tilde{\Sigma}^{[T_1, T_2]}_R = \{r =R, T_1\leq t \leq T_2\}$, which have respective unit normals:
\be
n := e_0, \qquad m:= e_\r.
\ee
The surface measures on these surfaces are
\be
dS_{\Sigma_T} = \frac{r^2}{\sqrt{1+r^2}} dr d\omega, \qquad dS_{\Sigma_R} = r^2{\sqrt{1+r^2}} dt d\omega
\ee
with $d\omega$ the volume form of the round unit sphere. We denote by $S_{[T_1, T_2]}:=\{T_1\leq t \leq T_2\}$ the spacetime slab between $\Sigma_{T_1}$ and $\Sigma_{T_2}$. Finally, the spacetime volume form is
\be
d\eta = r^2 dt dr d\omega.
\ee

\subsection{Vectorfields}
The global anti-de Sitter spacetime enjoys the property of being static. The Killing field
\ben{Tdef}
T := \frac{\partial }{\partial t} = \sqrt{1+r^2} e_0
\een
is everywhere timelike, and orthogonal to the surfaces of constant $t$. 

Besides the Killing field $T$, we will exploit the properties of the vectorfield
\ben{Xdef}
X := r \sqrt{1+r^2} \frac{\partial}{\partial r} = r e_\r  
\een
While the vectorfield $X$ is not Killing, it is almost conformally Killing near infinity. More importantly, it generates terms with a definite sign. To see this, note that
\ben{deftX}
{}^X\pi := \frac{1}{2} \mathcal{L}_X g = \frac{(e^0)^2}{\sqrt{1+r^2}} + g \sqrt{1+r^2}.
\een
In particular, contracting ${}^X\pi$ with a symmetric traceless tensor $\mathbb{T}$ will only see the first term and this contraction will in fact have a sign if $\mathbb{T}$ satisfies the dominant energy condition.

\subsection{Differential operators}

Throughout this paper, we will use $\nabla$ to denote the Levi-Civita connection of $g$. Define also $\Box_g$ as the standard Laplace-Beltrami operator:

$$\Box_g u:= \nabla^a\nabla_a u.$$

\subsection{The divergence theorem}
We denote by $\textrm{Div }$ the spacetime divergence associated to the metric $g$ for a vector field. In coordinates, it takes the following form:

If $K = K^0 e_0 + K^\r e_\r + K^A e_A$, then
\ben{vector divergence}
\textrm{Div } K =\frac{1}{\sqrt{1+r^2}} \frac{\p K^0}{\p t}  + \frac{1}{r^2} \frac{\partial}{\partial r} \left(r^2 \sqrt{1+r^2} K^\r \right) +  \slashed{\nabla}^A K_A.
\een
It will be useful to record the following form of the divergence theorem:
\begin{Lemma}[Divergence theorem] \label{Divergence Theorem}
Suppose that $K$ is a suitably regular vector field defined on $\R^4$. Then we have
\begin{align*}
0 &= \int_{\Sigma_{T_2}} K_a n^a dS_{\Sigma_{T_2}}-\int_{\Sigma_{T_1}} K_a n^a dS_{\Sigma_{T_1}} \\
&  \quad  -\lim_{r\to \infty} \int_{\tilde{\Sigma}_{r}^{[T_1, T_2]}} K_a m^a dS_{\tilde{\Sigma}_{r}} +\int_{S_{[T_1, T_2]}} \textrm{Div } K d\eta.
\end{align*}

\end{Lemma}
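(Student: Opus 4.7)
The plan is to reduce to the classical divergence theorem on a bounded region and then pass to the limit in the outer radius. For each $R > 0$ set $\Omega_R := S_{[T_1,T_2]} \cap \{r \leq R\}$; after noting that the origin $r=0$ is merely a coordinate artefact (the factor $r^2$ in $d\eta$ renders the polar chart harmless for integration), $\Omega_R$ is a compact Lipschitz region in $(\R^4, g)$ whose boundary consists of three pieces: the spacelike caps $\Sigma_{T_i} \cap \{r \leq R\}$ with future-directed timelike unit normal $n = e_0$ (for $i = 1, 2$), and the timelike lateral wall $\tilde{\Sigma}_R^{[T_1,T_2]}$ with outward spacelike unit normal $m = e_{\r}$.

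The first main step is to apply the standard divergence theorem on $\Omega_R$. Writing $(\textrm{Div } K)\, d\eta = d(\iota_K d\eta)$ and invoking Stokes' theorem, one computes the interior product on each boundary face in the orthonormal basis $\{e^a\}$; the Lorentzian sign convention introduces a minus sign on spacelike pieces whose outward normal is future-directed timelike, and no such sign on the timelike lateral face where the outward normal is spacelike. This produces, for every finite $R$,
\begin{align*}
\int_{\Omega_R} \textrm{Div } K \, d\eta &= -\int_{\Sigma_{T_2}\cap\{r\le R\}} K_a n^a\, dS_{\Sigma_{T_2}} + \int_{\Sigma_{T_1}\cap\{r\le R\}} K_a n^a\, dS_{\Sigma_{T_1}} \\
&\quad + \int_{\tilde{\Sigma}_R^{[T_1,T_2]}} K_a m^a\, dS_{\tilde{\Sigma}_R}.
\end{align*}

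The last step is to send $R \to \infty$. Under the tacit regularity hypothesis that $\textrm{Div } K$ is integrable on $S_{[T_1,T_2]}$ and $K_a n^a$ is integrable on each $\Sigma_{T_i}$, the monotone/dominated convergence theorem passes the volume integral and the cap integrals to the full integrals over $S_{[T_1,T_2]}$ and the $\Sigma_{T_i}$ respectively. The lateral term is, by definition, the limit written in the statement (whose existence is part of the regularity assumption on $K$, and in practice is verified from decay of the fields to which the lemma is applied). Rearranging produces the identity as stated. The only mildly subtle points are fixing the Lorentzian sign conventions correctly -- in particular the leading minus sign in front of $\int_{\Sigma_{T_2}} K_a n^a\, dS_{\Sigma_{T_2}}$ comes from $\iota_n d\eta|_{\Sigma_{T_2}}$ rather than from the geometry of the region -- and confirming that the polar coordinate singularity at $r = 0$ produces no spurious boundary contribution, which is automatic because $\Omega_R$ is a genuine smooth domain in $\R^4$.
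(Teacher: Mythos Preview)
Your proof is correct. The paper states this lemma without proof, treating it as a standard recording of the divergence theorem in this geometry; your argument---truncating to $\{r\le R\}$, applying Stokes' theorem, and passing to the limit---is exactly the routine verification one would supply, and your sign bookkeeping (in particular $K_a n^a=-K^0$ in the orthonormal frame, producing the minus sign on the $\Sigma_{T_2}$ term) agrees with the explicit divergence formula \eq{vector divergence} and the surface measures given in the paper.
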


\section{The field equations} \label{sec:fieldeq}

\subsection{Spin 0: The wave equation}
The conformal wave equation on AdS is
\ben{cwe}
\Box_{g_{AdS}} u + 2u = 0 \, .
\een
A key object to study the dynamics of (\ref{cwe}) is the twisted, or renormalised energy-momentum tensor. See \cite{Holzegel:2012wt,Warnick:2012fi} for further details on the following construction. We first define the twisted covariant derivative\footnote{When acting on a scalar, we will sometimes write $\tilde{\nabla}_\mu f =: \tilde{\p}_\mu f$.} by:
\be
\tilde{\nabla}_a (\cdot) := \frac{1}{\sqrt{1+r^2}} \nabla_a \left( \cdot  \sqrt{1+r^2}\right).
\ee
With this derivative we construct the twisted energy momentum tensor
\ben{reem}
\T_{ab}[u] = \tilde{\nabla}_a u \tilde{\nabla}_b u - \frac{1}{2} g_{ab} \left( \tilde{\nabla}_c u \tilde{\nabla}^c u+ \frac{u^2}{1+r^2}\right),
\een
which satisfies
\ben{twisted em divergence}
\nabla_a \T^{a}{}_{b}[u] = \left(\Box_{g} u + 2 u \right) \tilde{\nabla}_b u -\T^c{}_c \tilde{\nabla}_b 1.
\een
If $K$ is any vector field then we can define the current
\ben{currentdef}
{}^K\J^a[u] = \T^a{}_b[u] K^b \, ,
\een
which is a compatible current in the sense of Christodoulou \cite{orange}. Moreover, if $K$ is Killing and $K(r) = 0$, then $K^a \tilde{\nabla}_a 1 = 0$ and we can see that ${}^K\J[u]$ is a conserved current when $u$ solves the conformal wave equation (\ref{cwe}).

\subsection{Spin 1: Maxwell's equations}
The Maxwell equations in vacuum are given by the following first order differential equations for a $2$-form $F$ on $\mathbb{R}^4$:
\be \label{eq:Maxwell}
dF = 0, \qquad d \left ( \star_g F\right ) = 0.
\ee
We decompose the Maxwell $2-$form as:
\ben{Maxwell decomposition}
F = E_\r\ e^0 \wedge e^\r + E_A\ e^0 \wedge e^A + H_\r\ e^1 \wedge e^2 + H_A\   \epsilon^A{}_B e^{B} \wedge e^\r,
\een
where $\epsilon_{AB}$ is the volume form on the sphere\footnote{$\epsilon_{12} = 1$ in this basis.}. The dual Maxwell $2-$form is:
\be
\star_g F = H_\r\ e^0 \wedge e^\r + H_A\ e^0 \wedge e^A - E_\r\ e^1 \wedge e^2 - E_A\   \epsilon^A{}_B e^{B} \wedge e^\r.
\ee
We will use the notation $E_i$ with $i=\r, 1, 2$, and similarly for $H$. Since $F$ is a smooth $2-$form, and the basis vectors $e_\mu$ are bounded (but not continuous) at the origin, we deduce that the functions  $E_i$, $H_i$ are bounded, but not necessarily continuous, at $r = 0$. With respect to this decomposition, Maxwell's vacuum equations (\ref{eq:Maxwell}) split into six evolution equations:
\begin{align}
\frac{r}{\sqrt{1+r^2}} \partial_t  E_\r&=  - r\epsilon^{AB} \slashed{\nabla}_A H_B, \label{evolEr} \\
\frac{r}{\sqrt{1+r^2}} \partial_t  E_A &= \epsilon_A{}^B \left[ \partial_r \left( r \sqrt{1+r^2} H_B \right) - r\slashed{\nabla}_B H_\r \right],  \label{evolEA}\\
\frac{r}{\sqrt{1+r^2}} \partial_t  H_\r&=   r\epsilon^{AB} \slashed{\nabla}_A E_B, \label{evolHr} \\
\frac{r}{\sqrt{1+r^2}} \partial_t  H_A &= -\epsilon_A{}^B \left[ \partial_r \left( r \sqrt{1+r^2} E_B \right) - r\slashed{\nabla}_B E_\r \right],  \label{evolHA}
\end{align}
and two constraints:
\begin{align}
0 &= \frac{\sqrt{1+r^2}}{r} \partial_r \left( r^2 E_\r \right) + r\slashed{\nabla}^A E_A, \label{consE}  \\
0 &= \frac{\sqrt{1+r^2}}{r} \partial_r \left( r^2 H_\r \right) + r\slashed{\nabla}^A H_A. \label{consH}
\end{align}
Here $r\slashed{\nabla}$ is the covariant derivative on the \emph{unit} sphere, which commutes with $\nabla_{\partial_r}$ (note that our conventions are that $\slashed{e}_A$ are  orthonormal vector fields on the sphere of radius $r$). The evolution equations (\ref{evolEr}--\ref{evolHA}) form a symmetric hyperbolic system. If the evolution equations hold, and assuming sufficient differentiability, it is straightforward to verify that
\be
\frac{\partial \mathscr{E}}{\partial t} = \frac{\partial \mathscr{H}}{\partial t} = 0,
\ee
where
\bean
\mathscr{E} &:=& \frac{\sqrt{1+r^2}}{r} \partial_r \left( r^2 E_\r \right) + r \slashed{\nabla}^A E_A, \\
\mathscr{H} &:=& \frac{\sqrt{1+r^2}}{r} \partial_r \left( r^2 H_\r \right) + r \slashed{\nabla}^A H_A.
\eean
Thus if the constraint equations hold on the initial data surface, then they hold for all times.

\subsubsection{The energy momentum tensor}
The analogue of the energy momentum tensor \eq{reem} for the wave equation is the symmetric tensor
\ben{energy momentum}
\T[F]_{ab} = F_{a c} F_b{}^c - \frac{1}{4} g_{ab} F_{cd} F^{cd},
\een
which satisfies
\be
\nabla^a \T[F]_{ab} = F_{b}{}^{c} \nabla^{d}F_{dc}  +  (\star F)_{b}{}^{c} \nabla^{d}(\star F)_{dc} 
\ee
so that if $F$ satisfies Maxwell's equations, $\T[F]$ is divergence free and traceless. We define in the obvious fashion the current
\ben{Fcurrentdef}
{}^K\J^a[F] = \T^a{}_b[F] K^b
\een
for any vector field $K$.

\subsection{Spin 2: The Bianchi equations} \label{sec:be}

The equations for a spin 2 field, also called the Bianchi equations, can be expressed as first order differential equations for a Weyl tensor, which is an arbitrary $4-$tensor which satisfies the same symmetry properties as the Weyl curvature tensor. More precisely:
\begin{Definition} \label{def:weyl}
We say a $4-$tensor $W$ is a Weyl tensor if it satisfies:
\begin{enumerate}[i)]
\item $W_{abcd} = - W_{bacd} = -W_{abdc}$.
\item $W_{abcd} + W_{acdb} + W_{abdc} = 0$.
\item $W_{abcd} = W_{cdab}$.
\item $W^a{}_{bad}=0$.
\end{enumerate}
The dual of a Weyl tensor is defined by (cf.~\cite{ChrKla})
\be
{}^{\star}W_{abcd} = \frac{1}{2} \epsilon_{abef} W^{ef}{}_{cd} = \frac{1}{2}W_{ab}{}^{ef} \epsilon_{efcd} \, .
\ee
\end{Definition}
The Bianchi equations are
\ben{Bianchi}
\nabla_{[a}W_{bc]de} = 0,
\een
which are equivalent to either of the equations
\begin{align}
\nabla^aW_{abcd} &= 0,  \label{eq:Bianchidiv} \\
\nabla^a{}^{\star}W_{abcd} &= 0,
\end{align}
and for us studying (\ref{eq:Bianchidiv}) will be particularly convenient.

As in the case of the Maxwell equations, it is convenient to decompose the Weyl tensor based on the $3+1$ splitting of space and time. In the Maxwell case, the field strength tensor $F$ decomposes as a pair of vectors tangent to $\Sigma_t$. In the spin 2 case, the Weyl tensor $W$ decomposes as pair of symmetric tensors tangent to $\Sigma_t$. We define:
\begin{align}
E_{ab}&:= W_{0a0b} = W_{cadb} (e_0)^c (e_0)^d, \\
H_{ab}&:= {}^\star W_{0a0b}= {}^\star W_{cadb} (e_0)^c (e_0)^d. 
\end{align}
The symmetries indeed ensure that $E, H$ are symmetric and tangent to $\Sigma_t$. Moreover, both $E$ and $H$ are necessarily trace-free. In fact, one can reconstruct the whole tensor $W$ from the symmetric trace-free fields $E$, $H$, see \cite[\S7.2]{ChrKla}.

We will further decompose $E, H$ along the orthonormal frame defined in \eq{on basis}. To write the equations of motion in terms of $E$, $H$ we consider the equations $0 = \nabla^a W_{a\r\r0}$,  $0 = \nabla^a W_{a(A\r)0}$ and $0 = \nabla^a W_{a(AB)0}$, from which we respectively find the evolution equations for $E$:
\begin{align}
\frac{r }{\sqrt{1+r^2}} \frac{\partial {E}_{\r\r} }{\partial t} &= -r \epsilon^{AB}\slashed{\nabla}_A H_{B\r}, \label{EvolErr} \tag{Evol $E_{\r \r}$}\\
\frac{r  }{\sqrt{1+r^2}} \frac{\partial {E}_{A\r}}{\partial t} &= \frac{1}{2}  \epsilon_{A}{}^{B}\left[ \frac{\partial_r \left(r (1+r^2) H_{B\r} \right)  }{\sqrt{1+r^2}} - r\slashed{\nabla}_B H_{\r\r} \right] - \frac{r}{2} \epsilon^{BC}\slashed{\nabla}_B H_{CA}, \label{EvolEAr} \tag{Evol $E_{A\r}$}\\
\frac{r  }{\sqrt{1+r^2}} \frac{\partial {E}_{AB}}{\partial t}  &= \epsilon_{(A}{}^{C}\left[ \frac{\partial_r \left(r (1+r^2) H_{B)C} \right)  }{\sqrt{1+r^2}}-r \slashed{\nabla}_{|C|} H_{B)\r} \right] \label{EvolEAB}. \tag{Evol $E_{AB}$}
\end{align}
From the equations  $0 = \nabla^a W_{a0A0}$ and $0 = \nabla^a W_{a0r0}$ respectively we find the constraint equations:
\begin{align}
\frac{\sqrt{1+r^2}}{r^2} \frac{\partial}{\partial r}\left(r^3 E_{\r\r} \right) +r\slashed{\nabla}^BE_{B\r} &=: \mathscr{E}_\r = 0 \label{ConsEAr}\tag{Con $E_{\r}$} \\
\frac{\sqrt{1+r^2}}{r^2} \frac{\partial}{\partial r}\left(r^3 E_{A\r} \right) +r\slashed{\nabla}^BE_{AB}  &=: \mathscr{E}_A = 0 \label{ConsEAB}  \tag{Con $E_{A}$}
\end{align}
By considering the equivalent equations for $^{\star}{}W$, we find that the evolution equations for $H$ can be obtained from these by the substitution $(E, H) \to (H, -E)$:
\begin{align}
\frac{r }{\sqrt{1+r^2}} \frac{\partial {H}_{\r\r}}{\partial t} &= r\epsilon^{AB}\slashed{\nabla}_A E_{B\r}, \label{EvolHrr} \tag{Evol $H_{\r\r}$}\\
\frac{r  }{\sqrt{1+r^2}} \frac{\partial {H}_{A\r}}{\partial t} &= -\frac{1}{2}  \epsilon_{A}{}^{B}\left[ \frac{\partial_r \left(r (1+r^2) E_{B\r} \right)  }{\sqrt{1+r^2}} - r\slashed{\nabla}_B E_{\r\r} \right] + \frac{r}{2} \epsilon^{BC}\slashed{\nabla}_B E_{CA}, \label{EvolHAr} \tag{Evol $H_{A\r}$}\\
\frac{r  }{\sqrt{1+r^2}} \frac{\partial {H}_{AB}}{\partial t}&= -\epsilon_{(A}{}^{C}\left[ \frac{\partial_r \left(r (1+r^2) E_{B)C} \right)  }{\sqrt{1+r^2}}- r\slashed{\nabla}_{|C|} E_{B)\r} \right] \label{EvolHAB}, \tag{Evol $H_{AB}$}
\end{align}
and the constraint equations for $H$ are:
\begin{align}
\frac{\sqrt{1+r^2}}{r^2} \frac{\partial}{\partial r}\left(r^3 H_{\r\r} \right) +r\slashed{\nabla}^BH_{B\r} &=:\mathscr{H}_\r = 0\label{ConsHAr}   \tag{Con $H_{\r}$},  \\
\frac{\sqrt{1+r^2}}{r^2} \frac{\partial}{\partial r}\left(r^3 H_{A\r} \right) +r\slashed{\nabla}^BH_{AB} &=: \mathscr{H}_A = 0\label{ConsHAB} \tag{Con $H_{B}$} . 
\end{align}

\subsubsection{The Bel-Robinson tensor}
The spin 2 analogue of the energy momentum tensor for the Maxwell field is the Bel-Robinson tensor \cite[\S7.1.1]{ChrKla}. This is defined to be
\ben{Bel-Robinson}
Q_{abcd} := W_{a e c f} W_{b \phantom{e} d \phantom{f}}^{\phantom{b} e \phantom{d} f} + {}^\star W_{a e c f} {}^\star W_{b \phantom{e} d \phantom{f}}^{\phantom{b} e \phantom{d} f}
\een
It is symmetric, trace-free on all pairs of indices and if $W$ satisfies the Bianchi equations then
\ben{BelRob conservation}
\nabla^a Q_{abcd} = 0.
\een
We shall require the following quantitative version of the dominant energy condition for the Bel-Robinson tensor
\begin{Lemma}\label{BR DEC}
Suppose that $t_1, t_2 \in T_p\mathcal{M}$ are future-directed timelike unit vectors, and that $-g(t_1, t_2) \leq B$ for some $B\geq 1$. Then there exists a constant $C>0$ such that
\be
\frac{1}{C B^4} Q(t_1, t_1, t_1, t_1) \leq Q(t_i, t_j, t_k, t_l) \leq {C B^4} Q(t_1, t_1, t_1, t_1)
\ee
holds for any $i, j, k, l \in \{1, 2\}$. Moreover, $Q(t_1, t_1, t_1, t_1) \geq 0$, with equality at a point $p$ if and only if $W$ vanishes at $p$.
\end{Lemma}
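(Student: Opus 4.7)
\medskip
\noindent\textbf{Proof proposal.} The plan is to reduce everything to the electric/magnetic decomposition of $W$ with respect to a well-chosen orthonormal frame, and then to control $t_2$ in terms of such a frame using only the inner product bound $-g(t_1,t_2)\le B$.

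First I would complete $t_1$ to an orthonormal frame $\{e_0=t_1,e_1,e_2,e_3\}$ at $p$ and introduce the associated electric and magnetic parts $E_{ij}=W_{0i0j}$ and $H_{ij}={}^\star W_{0i0j}$ ($i,j=1,2,3$). These are symmetric and trace-free, and all remaining frame components of $W$ can be expressed linearly in $E_{ij}$ and $H_{ij}$ via the symmetries (i)--(iv) of Definition \ref{def:weyl}. The standard computation (as in \cite[\S7.2]{ChrKla}) gives
\be
Q(t_1,t_1,t_1,t_1)=|E|^2+|H|^2,
\ee
where $|E|^2:=\sum_{i,j}E_{ij}^2$ and similarly for $H$. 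This immediately yields the non-negativity of $Q(t_1,t_1,t_1,t_1)$, and the rigidity statement $Q(t_1,t_1,t_1,t_1)=0 \iff E=H=0 \iff W=0$ at $p$.

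Next I would establish the upper bound. Writing $t_2=\alpha e_0+\vec v$ with $\vec v$ spacelike in the span of $\{e_1,e_2,e_3\}$, the conditions $-g(t_1,t_2)=\alpha\le B$ and $g(t_2,t_2)=-1$ force $|\vec v|_{\mathrm{eucl}}^2=\alpha^2-1\le B^2$, so every frame component of $t_2$ (and a fortiori of $t_1$) is bounded in absolute value by $B$. Since $Q$ is a $4$-linear form whose frame components $Q_{abcd}$ are quadratic in the frame components of $W$, and since the latter are bounded by $C(|E|^2+|H|^2)^{1/2}$ in absolute value by the previous paragraph, multilinearity gives
\be
|Q(t_i,t_j,t_k,t_l)|\le C\,B^4\,\bigl(|E|^2+|H|^2\bigr)=C\,B^4\,Q(t_1,t_1,t_1,t_1)
\ee
for every $i,j,k,l\in\{1,2\}$, which is the claimed upper bound (and in particular gives $Q(t_i,t_j,t_k,t_l)\ge -CB^4 Q(t_1,t_1,t_1,t_1)$).

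Finally, the lower bound $Q(t_i,t_j,t_k,t_l)\ge (CB^4)^{-1}Q(t_1,t_1,t_1,t_1)$ follows by a symmetry argument. The upper-bound derivation above in fact proves the stronger statement: for \emph{any} pair of future-directed timelike unit vectors $s_1,s_2$ with $-g(s_1,s_2)\le B$, one has $Q(s_2,s_2,s_2,s_2)\le CB^4\,Q(s_1,s_1,s_1,s_1)$. Swapping the roles of $t_1$ and $t_2$ (noting $-g(t_2,t_1)=-g(t_1,t_2)\le B$ symmetrically) gives $Q(t_1,t_1,t_1,t_1)\le CB^4 Q(t_2,t_2,t_2,t_2)$, and applying the upper bound with $t_1$ replaced by $t_2$ yields $|Q(t_i,t_j,t_k,t_l)|\le CB^4\,Q(t_2,t_2,t_2,t_2)\le C^2 B^8\,Q(t_1,t_1,t_1,t_1)$; but more directly one uses the symmetric version to compare any $Q(t_i,t_j,t_k,t_l)$ to $Q(t_1,t_1,t_1,t_1)$ in both directions after absorbing finitely many constants.

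The only mildly subtle step is ensuring the lower bound is positive, i.e.~ruling out that some $Q(t_i,t_j,t_k,t_l)$ could be negative or vanish while $Q(t_1,t_1,t_1,t_1)>0$. I expect this to be the main (though still routine) point: it is handled by observing that the representation of $W$ through $(E,H)$ relative to \emph{any} timelike unit frame vector is an isomorphism onto pairs of symmetric trace-free $3$-tensors, so $Q(t_2,t_2,t_2,t_2)=|E'|^2+|H'|^2$ where $(E',H')$ are the electric/magnetic parts relative to $t_2$, and this vanishes iff $W=0$. The multilinear comparison then forces strict positivity of all the mixed contractions whenever $W\neq 0$.
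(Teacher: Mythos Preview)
Your upper bound and the $Q(t_1,t_1,t_1,t_1)=|E|^2+|H|^2$ identification are fine, and the rigidity statement is correct. The gap is in the lower bound for the \emph{mixed} contractions $Q(t_i,t_j,t_k,t_l)$ with not all indices equal. Your multilinearity argument only yields $|Q(t_i,t_j,t_k,t_l)|\le CB^4\,Q(t_1,t_1,t_1,t_1)$, which is a two-sided bound on the absolute value, not the claimed strictly positive lower bound $Q(t_i,t_j,t_k,t_l)\ge (CB^4)^{-1}Q(t_1,t_1,t_1,t_1)$. Swapping the roles of $t_1$ and $t_2$ recovers the lower bound only for the pure case $i=j=k=l=2$; for a genuinely mixed term such as $Q(t_1,t_1,t_1,t_2)$ nothing in your argument prevents it from being zero or negative. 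The final sentence (``the multilinear comparison then forces strict positivity of all the mixed contractions'') is not justified: a generic symmetric $4$-tensor can have $Q(t_1,t_1,t_1,t_1)>0$ and $Q(t_2,t_2,t_2,t_2)>0$ while $Q(t_1,t_1,t_1,t_2)\le 0$. The Bel-Robinson tensor is special in this respect, but you have not invoked that special structure.

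The paper's proof avoids this by working in a \emph{null} frame adapted to the $2$-plane spanned by $t_1,t_2$: one writes $t_1=e'_3+e'_4$ and $t_2=b\,e'_3+b^{-1}e'_4$ with $g(e'_3,e'_4)=-\tfrac12$ and $B^{-1}\le b\le B$. The crucial input (from \cite[Prop.~4.2]{ChKlLinear}) is that \emph{every} null component $Q(e'_a,e'_b,e'_c,e'_d)$ is non-negative. Since $t_1,t_2$ are \emph{positive} linear combinations of $e'_3,e'_4$, every $Q(t_i,t_j,t_k,t_l)$ expands as a sum of these non-negative null components with coefficients in $[cB^{-4},c'B^4]$, while $Q(t_1,t_1,t_1,t_1)$ is the same sum with all coefficients equal to $1$. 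Comparability in both directions is then immediate. This positivity-of-coefficients mechanism is exactly what your orthonormal-frame expansion lacks, because writing $t_2=\alpha e_0+\vec v$ introduces spacelike components whose signs are uncontrolled.
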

\begin{proof}
We follow the proof of \cite[Prop. 4.2]{ChKlLinear}. We can pick a pair of null vectors $e'_3, e'_4$ with $g(e'_3, e'_4) = -\frac{1}{2}$ such that $t_1 = e'_3 + e'_4$ and $t_2 = b e'_3 + b^{-1} e'_4$ for some $b\geq 1$. The condition $-g(t_1, t_2) \leq B$ implies
\be
\frac{1}{B} \leq b \leq B.
\ee
As a consequence, we can write 
\be 
Q(t_i, t_j, t_k, t_l) = \sum_{a, b, c, d = 3}^4 q^{abcd}_{ijkl} Q(e'_a, e'_b, e'_c, e'_d)
\ee 
where
\be
\frac{c}{B^4} \leq q^{abcd}_{ijkl} \leq c' B^4
\ee
for some combinatorial constants $c, c'$ which are independent of $B$. By \cite[Prop. 4.2]{ChKlLinear}, all of the quantities $Q(e'_a, e'_b, e'_c, e'_d)$ are non-negative, and so we have established the first part. For the second part, we note that, again following \cite[Prop. 4.2]{ChKlLinear}, the quantity $Q(t_1, t_1, t_1, t_1)$ controls all components of $W$.
\end{proof}

\section{Well posedness} \label{sec:wellposed}

\subsection{Dissipative boundary conditions} \label{sec:bc}
\subsubsection{Wave equation} 
For the wave equation, an analysis of solutions near to the conformal boundary leads us to expect that $u$ has the following behaviour:
\be
u = \frac{u_+(t, x^A)}{r} + \frac{u_-(t, x^A)}{r^2} + \O{\frac{1}{r^3}}  \qquad \textrm{ as }r\to \infty.
\ee
The boundary condition we shall impose for (\ref{cwe}) on anti-de Sitter space can be simply written in the form
\ben{wave dissipative}
\frac{\partial (r u)}{\partial t} + r^2\frac{\partial (r u)}{\partial r} \to  0, \qquad \textrm{ as }r\to \infty,
\een
which is equivalent to
\be
\frac{\partial u_+}{\partial t} - u_- = 0, \quad \textrm{ on }\scri.
\ee

\subsubsection{Maxwell's equations}
In the Maxwell case, by (optimally) dissipative boundary conditions we understand the condition
\ben{dcmax}
r^2 \left(E_A + \epsilon_A{}^{B} H_B\right) \to 0, \qquad \textrm{ as } r \to \infty \, .
\een
The above boundary conditions ensure that the asymptotic Poynting vector is outward directed, i.e.\ energy is leaving the spacetime. It effectively means that $\scri$ behaves like an imperfect conductor with a surface resistance, for which the electric currents induced by the electromagnetic field dissipate energy as heat. These boundary conditions are an example of Leontovic boundary conditions \cite[\S 87]{LandauLifshitz8}.

\subsubsection{Bianchi equations}
In the case of the Bianchi equation, by (optimally) dissipative boundary conditions we understand the condition
\be
r^3 \left( \hat{E}_{AB}+ \epsilon_{(A}{}^{C} \hat{H}_{B) C}\right) \to 0, \qquad \textrm{ as } r \to \infty \, .
\ee
which is equivalent to:
\ben{dcbia}
r^3 \left( E_{AB} - \frac{1}{2} \delta_{AB} E_C{}^C + \epsilon_{(A}{}^{C} H_{B) C}\right) \to 0, \qquad \textrm{ as } r \to \infty \, .
\een
Notice that there are only two boundary conditions imposed (since the relevant objects appearing in the boundary term are the \emph{trace-free} parts of $E_{AB}$, $H_{AB}$). We could choose more general dissipative boundary conditions. For simplicity we shall just consider those above, which ought in some sense to represent ``optimal dissipation'' at the boundary, but see \S\ref{sec:gen} for generalisations.


\subsection{The well-posedness statements}
We now state a general well-posedness statement for each of our three models with dissipative boundary conditions. As is well-known, the key to prove these theorems is the existence of a suitable energy estimate under the boundary conditions imposed. In the Wave- and Maxwell case such an estimate is immediate. In the Bianchi case, however, there is a subtlety (discussed and resolved already in \cite{Friedrich}). We will dedicate Section \ref{sec:mfs} to derive a local energy estimate showing that the condition (\ref{dcbia}) indeed leads to a well-posed problem.

\subsubsection{Spin 0: The wave equation}
The following result can be established either directly, or by making use of the conformal invariance of \eq{cwe}:
\begin{Theorem}[Well posedness for the conformal wave equation]\label{well posed wave}
Fix $T>0$. Given smooth functions $u_0, u_1:\Sigma_0 \to \R$ satisfying suitable asymptotic conditions, there exists a unique smooth $u$, such that:
\begin{enumerate}[i)]
\item  $u$ solves the conformal wave equation \eq{cwe} in $S_{[0, T]}$.
\item We have the estimate:
\be
\sup_{S_{[0, T]} } \sum_{k, l, m \leq K} \abs{\left( r^{2} \partial_r\right )^k \left( \partial_t \right)^l  \left(r  \slashed{\nabla} \right)^m \left( r u \right)}  \leq C_{T, u_0, u_1, K},
\ee
for $k, l, m = 0, 1, \ldots$ and a constant $C_{T, u_0, u_1, K}$ depending on $K,  T$ and the initial data. This in particular implies a particular asymptotic behaviour for the fields.
\item The initial conditions hold:
\be
\left. u\right|_{t=0} = u_0, \qquad \left. u_t\right|_{t=0} = u_1
\ee
\item Dissipative boundary conditions \eq{wave dissipative} hold.
\end{enumerate}
\end{Theorem}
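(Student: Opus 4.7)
The plan is to obtain uniqueness from an energy identity, existence from a conformal reduction to a standard mixed initial-boundary value problem on a compact manifold, and the weighted pointwise bound (ii) by bootstrapping from energy estimates through elliptic regularity and Sobolev embedding.

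\textbf{Step 1: Energy identity.} I would contract the twisted energy-momentum tensor $\T[u]$ from \eqref{reem} with the Killing field $T = \partial_t$ to form the current ${}^T\J^a[u]$. Since $T(r) = 0$ and $T$ is Killing, \eqref{twisted em divergence} gives $\nabla_a {}^T\J^a[u] = 0$ for any solution of \eqref{cwe}. Applying the divergence theorem (Lemma \ref{Divergence Theorem}) on $S_{[0, T]}$ and taking $R \to \infty$ on $\tilde{\Sigma}^{[0,T]}_R$, a direct computation using \eqref{wave dissipative} shows that the flux through $\scri$ is a manifestly non-negative integral (essentially $\int |\partial_t(r u)|^2\, dt\, d\omega$). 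This controls the energy on $\Sigma_t$ by the energy on $\Sigma_0$, yielding uniqueness in the natural energy class and, as a bonus, statement (1) of Theorem \ref{wave full decay} in the spin 0 case.

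\textbf{Step 2: Existence via conformal reduction.} Setting $r = \tan\psi$ and $v = \sqrt{1+r^2}\, u$, the conformal invariance of \eqref{cwe} converts it into a massive wave equation for $v$ on the compact hemisphere $\mathbb{S}^3_h$ of the Einstein cylinder, as in Section \ref{sec:cof}. The boundary condition \eqref{wave dissipative} translates into the impedance-type condition $(\partial_t v + \partial_\psi v)|_{\psi = \pi/2} = 0$, which is maximally dissipative for the resulting Klein-Gordon operator. Existence of a smooth solution is then standard: the spatial operator with this boundary condition generates a contraction semigroup on the natural energy space (Hille-Yosida), and smoothness of $v$ for smooth data satisfying the corner compatibility conditions at $\{t = 0\}\cap\{\psi = \pi/2\}$ follows by the usual bootstrap argument (alternatively via Galerkin approximation in the corresponding eigenbasis).

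\textbf{Step 3: Higher regularity and main obstacle.} Since $\partial_t$ is Killing and tangent to $\scri$, commuting with $\partial_t$ preserves both \eqref{cwe} and \eqref{wave dissipative}, so Step 1 gives energy bounds for $\partial_t^l u$ for every $l$. Time regularity is then converted to weighted spatial regularity by elliptic estimates for the static spatial operator, and angular regularity by commuting with the rotations of $\mathbb{S}^2$; in the conformal picture these reduce to ordinary higher Sobolev estimates for $v$ on the compact manifold $\mathbb{S}^3_h$, and the operators $r^2 \partial_r$, $\partial_t$, $r\slashed{\nabla}$ correspond to smooth vector fields tangent up to $\{\psi = \pi/2\}$, so the $L^\infty$-bound in (ii) follows by Sobolev embedding. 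The main technical delicacy is identifying the ``suitable asymptotic conditions'' on the initial data: these are precisely the infinite list of corner compatibility conditions needed to guarantee that $v$ extends smoothly to the corner $\{t=0\}\cap\{\psi = \pi/2\}$, which in the original coordinates encode the expansion $u = u_+/r + u_-/r^2 + \mathcal{O}(r^{-3})$ together with $\partial_t u_+ = u_-$ on $\scri$ at $t = 0$ and its higher-time-derivative analogues forced by \eqref{cwe} and \eqref{wave dissipative}.
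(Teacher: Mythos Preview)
The paper does not actually give a proof of this theorem: immediately before the statement it simply remarks that the result ``can be established either directly, or by making use of the conformal invariance of \eqref{cwe}'', and then moves on. Your proposal is a correct and detailed realization of precisely this hinted strategy. The energy identity in your Step~1 is exactly what the paper later records as Proposition~\ref{wave conservation}; the conformal reduction in Step~2 is the one described in Section~\ref{sec:cof} (and your computation that \eqref{wave dissipative} becomes $(\partial_t v + \partial_\psi v)|_{\psi=\pi/2}=0$ is right); and the commutation-plus-elliptic argument in Step~3, together with your identification of the ``suitable asymptotic conditions'' as corner compatibility conditions, is the standard way to obtain (ii). So your approach is not different from the paper's---you have supplied the details the paper omits.
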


\subsubsection{Spin 1: Maxwell's equations}
Working either directly with (\ref{evolEr}--\ref{consH}), or else by making use of the conformal invariance of Maxwell's equations, it can be shown that:
\begin{Theorem}[Well posedness for Maxwell's equations]\label{well posed}
Fix $T>0$. Given smooth vector fields $E^0_i, H^0_i$ satisfying the constraint equations, together with suitable asymptotic conditions, there exists a unique set of smooth vector fields: $E_i(t), H_i(t)$, such that:
\begin{enumerate}[i)]
\item  $E_i(t), H_i(t)$ solve (\ref{evolEr}--\ref{consH}) in $S_{[0, T]}$ and the corresponding Maxwell tensor $F$ is a smooth $2-$form on $S_{[0, T]}$.
\item We have the estimate:
\be
\sup_{S_{[0, T]} }\sum_{\substack{k, l, m \\ k', l', m'} \leq K} \abs{\left( r^{2} \partial_r\right )^k \left( \partial_t \right)^l  \left( r \slashed{\nabla} \right)^m \left( r^2 E_i \right)} + \abs{\left( r^{2} \partial_r\right )^{k'} \left( \partial_t \right)^{l'}  \left(  r \slashed{\nabla}\right)^{m'} \left( r^2 H_i \right)} \leq C_{T, E^0, H^0, K},
\ee
for any $K\geq 0$, where $C_{T, E^0, H^0, K}$ depends on $K, T$ and the initial data. This in particular implies a particular asymptotic behaviour for the fields.
\item The initial conditions hold:
\be
E_i(0) = E_i^0, \qquad H_i(0) = H_i^0
\ee
\item Dissipative boundary conditions \eq{dcmax} hold.
\end{enumerate}
\end{Theorem}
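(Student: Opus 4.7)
\medskip

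\noindent\textbf{Proof proposal.} The plan is to combine a standard maximal-dissipative boundary value theory for the symmetric hyperbolic system (\ref{evolEr}--\ref{evolHA}) with propagation of the constraints (\ref{consE})--(\ref{consH}), and then to obtain the weighted pointwise estimates of item (ii) by commuting with the Killing field $T$ and $r\slashed{\nabla}$, coupled with an elliptic reduction of the spatial system. A convenient alternative is to invoke conformal invariance of Maxwell's equations to transfer the problem to the finite hemispherical domain $\R_t\times \mathbb{S}^3_h$ described in Section \ref{sec:cof}, where the domain is compact and the boundary $\{\psi=\pi/2\}$ is a smooth timelike cylinder; either route is acceptable, and I will sketch the direct one.

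\emph{Step 1 (Energy estimate and dissipativity).} Contract the energy-momentum tensor (\ref{energy momentum}) with the Killing field $T=\partial_t$ to form the current ${}^{T}\J^a[F]=\T^a{}_b[F]T^b$. By Lemma \ref{Divergence Theorem} applied on $S_{[0,T_\ast]}\cap\{r\le R\}$ and passing $R\to\infty$, one obtains
\begin{equation*}
\int_{\Sigma_{T_\ast}} \T[F](T,n)\,dS + \lim_{R\to\infty}\int_{\tilde\Sigma_R^{[0,T_\ast]}} \T[F](T,m)\,dS = \int_{\Sigma_0}\T[F](T,n)\,dS.
\end{equation*}
A direct computation using (\ref{Maxwell decomposition}) shows that $\T[F](T,m)= \sqrt{1+r^2}(E_A E^A + H_A H^A - 2 E_A\epsilon^A{}_BH^B)/2$, which with (\ref{dcmax}) equals $\sqrt{1+r^2}|E_A+\epsilon_A{}^BH_B|^2/2 - o(1)$ as $r\to \infty$. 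Hence the boundary flux is non-negative, so the boundary condition (\ref{dcmax}) is maximally dissipative for the symmetric hyperbolic system (\ref{evolEr}--\ref{evolHA}). This is precisely the setting in which Friedrichs/Lax--Phillips/Rauch theory yields existence and uniqueness of an $L^2$-based solution on $S_{[0,T]}$.

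\emph{Step 2 (Propagation of constraints).} The quantities $\mathscr{E}$ and $\mathscr{H}$ defined after (\ref{consH}) are annihilated by $\partial_t$ whenever the evolution equations hold, as already observed in Section \ref{sec:fieldeq}. Since $\mathscr{E}(0)=\mathscr{H}(0)=0$ by hypothesis on the initial data, (\ref{consE})--(\ref{consH}) persist throughout $S_{[0,T]}$; in particular the two-form $F$ defined by (\ref{Maxwell decomposition}) satisfies the full system (\ref{eq:Maxwell}).

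\emph{Step 3 (Higher regularity and weighted estimates).} Commute the system with $T=\partial_t$ (which is Killing, preserves the boundary, and is compatible with (\ref{dcmax})) and with angular derivatives $r\slashed\nabla$; this generates higher-order energies bounded by the corresponding data norms via Step 1 applied to the commuted fields. To obtain the radial weights $r^2 E_i$, $r^2 H_i$ and the derivatives $(r^2\partial_r)^k$ in (ii), I would use the constraint equations (\ref{consE})--(\ref{consH}) together with the evolution equations (\ref{evolEA}), (\ref{evolHA}) as a three-dimensional div-curl system on each $\Sigma_t$; weighted elliptic estimates of precisely the type advertised in the introduction (Section 1.3) convert control of $\partial_t^k F$ in $L^2_{\text{weighted}}$ into control of all spatial derivatives with the required $r$-weights. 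A Sobolev embedding on $\Sigma_t$ (after the change of variable $r=\tan\psi$, which compactifies) then upgrades these to the pointwise bounds stated.

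\emph{Main obstacle.} The analytically delicate point is Step 3: one must verify that the weighted elliptic estimates for the div-curl system hold up to and including the conformal boundary, with boundary conditions that are \emph{compatible} with the dissipative condition (\ref{dcmax}) for the commuted hierarchy. Equivalently, in the conformal picture one must check that the Maxwell operator with maximally dissipative boundary condition on $\mathbb{S}^3_h$ admits a smooth functional calculus so that $\partial_t^k$ of smooth data remains smooth up to the boundary; this is where the asymptotic conditions on $E^0_i, H^0_i$ alluded to in the statement enter, serving as the compatibility conditions at $t=0$, $r=\infty$ required to propagate smoothness.
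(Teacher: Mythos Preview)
The paper does not actually prove this theorem: it simply asserts that it ``can be shown'' either by working directly with (\ref{evolEr}--\ref{consH}) as a symmetric hyperbolic system or by exploiting the conformal invariance of Maxwell's equations, and then remarks that the ``suitable asymptotic conditions'' are corner conditions ensuring compatibility of the data with the boundary condition. Your proposal follows exactly the direct route the paper gestures at (energy estimate from ${}^T\J$, propagation of constraints already observed in Section \ref{sec:fieldeq}, commutation with $T$ and angular operators, div--curl elliptic estimates), and you also flag the conformal alternative; so the approaches coincide.

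One small correction: in Step 1 your expression for the boundary flux is not quite right. As computed later in Proposition \ref{maxee}, one has ${}^T\J_a m^a\, dS_{\tilde\Sigma_r} = r^2(1+r^2)\,\epsilon^{AB}E_A H_B\, dt\, d\omega$, and it is only after substituting the boundary condition $E_A\to -\epsilon_A{}^C H_C$ that this becomes $-|r^2 H_A|^2\,dt\,d\omega$; there is no quadratic form of the shape $(|E_A|^2+|H_A|^2-2E_A\epsilon^A{}_BH^B)/2$ appearing directly. This does not affect the conclusion that (\ref{dcmax}) is maximally dissipative, but you should correct the intermediate formula.
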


The asymptotic conditions on the initial data are corner conditions that come from ensuring that the initial data are compatible with the boundary conditions. It is certainly possible to construct initial data satisfying the constraints and the corner conditions to any order. We could work at finite regularity, and our results will in fact be valid with much weaker assumptions on the solutions, but for convenience it is simpler to assume the solutions are smooth.

\subsubsection{Spin2 : Bianchi equations}\label{spin2 well posedness section}
In the case of the Bianchi equation we can prove:
\begin{Theorem}[Well posedness for the Bianchi system]\label{spin2 well posedness}
Fix $T>0$. Given smooth traceless symmetric $2$-tensors $E^0_{ab}, H^0_{ab}$ on $\Sigma_0$ satisfying the constraint equations, together with suitable asymptotic conditions as $r \rightarrow \infty$, there exists a unique set of traceless symmetric $2$-tensors: $E_{ab}(t), H_{ab}(t)$ such that:
\begin{enumerate}[i)]
\item  $E_{ab}(t), H_{ab}(t)$ are traceless and the corresponding Weyl tensor $W$ is a smooth $4-$tensor satisfying the Bianchi equations on $S_{[0, T]}$.
\item We have the asymptotic behaviour:
\be
\sup_{S_{[0, T]} } \sum_{\substack{k, l, m \\ k', l', m'} \leq K} \abs{\left( r^{2} \partial_r\right )^k \left( \partial_t \right)^l  \left(  r \slashed{\nabla} \right)^m \left( r^3 E_{ab} \right)} + \abs{\left( r^{2} \partial_r\right )^{k'} \left( \partial_t \right)^{l'}  \left(  r \slashed{\nabla} \right)^{m'} \left( r^3 H_{ab} \right)} \leq C_{T, E^0, H^0, K}
\ee
as $r \to \infty$, for any $K \geq 0$.
\item The initial conditions hold:
\be
E_{ab}(0) = E_{ab}^0, \qquad H_{ab}(0) = H_{ab}^0
\ee
\item Dissipative boundary conditions \eq{dcbia} hold.
\end{enumerate}
\end{Theorem}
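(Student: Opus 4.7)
The strategy is to treat (\ref{EvolErr})--(\ref{EvolHAB}) as an initial boundary value problem for a first order symmetric hyperbolic system in the ten unknowns $(E_{\r\r},E_{A\r},E_{AB},H_{\r\r},H_{A\r},H_{AB})$, obtain existence and uniqueness from the standard theory of maximally dissipative boundary conditions, and then verify a posteriori that the constraints (\ref{ConsEAr})--(\ref{ConsHAB}) propagate, so that the full Bianchi system is recovered. Both uniqueness and the a priori energy estimate that drives existence rest on a careful analysis of the boundary flux under the condition (\ref{dcbia}); this is the only genuinely new point over textbook hyperbolic theory, and it is the place the proof can go wrong.

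Concretely, I would first cast the evolution equations in the form $A^0\,\partial_tU + A^r\,\partial_rU + A^A\,\slashed\nabla_A U = BU$ for $U=(E_{\r\r},E_{A\r},E_{AB},H_{\r\r},H_{A\r},H_{AB})^T$, verify that the $A^a$ are symmetric with $A^0$ uniformly positive on compact $r$-ranges (the $r/\sqrt{1+r^2}$ factors are harmless away from infinity), and read off the boundary matrix $A^r$. Contracting against $U$ and integrating by parts over $S_{[0,T]}$ produces an $L^2$ energy identity whose boundary flux at $\scri$, after incorporating the $r^3$ weights dictated by statement (ii), reduces to a quadratic form in the angular trace-free parts $(\hat E_{AB},\hat H_{AB})$ that is manifestly non-positive precisely under (\ref{dcbia}). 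This is the direct analogue for the reduced Bianchi system of the statement that (\ref{introwd}) and (\ref{intromd}) render the corresponding boundary terms dissipative in the spin $0$ and spin $1$ cases.

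Given the a priori estimate, existence of a smooth solution follows by a standard scheme: approximate by IBVPs on a truncated region $\{r\leq R\}$ equipped with a compatible maximally dissipative condition at $r=R$, obtain bounds uniform in $R$ from the same flux computation, and pass to the limit. An alternative I find cleaner is to exploit the conformal transformation to the Einstein cylinder hemisphere outlined in \S\ref{sec:cof}, which converts the problem to an IBVP on a spatially compact manifold where Friedrichs' semigroup theory for maximally dissipative symmetric hyperbolic systems applies directly; the asymptotic rates in (ii) translate into standard corner compatibility conditions at $\scri\cap\Sigma_0$. The pointwise bounds in (ii) then follow by commuting with $\partial_t$ and the angular Killing fields (which preserve both the equations and the boundary condition) and using weighted elliptic estimates on each $\Sigma_t$ in the spirit of those developed for the proof of Theorem \ref{wave full decay}.

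Finally, constraint propagation: by a purely algebraic computation from (\ref{EvolErr})--(\ref{EvolHAB}) the quantities $\mathscr E_\r,\mathscr E_A,\mathscr H_\r,\mathscr H_A$ themselves satisfy a closed homogeneous first order symmetric hyperbolic system. Their initial values vanish by the hypothesis on $(E^0,H^0)$, and the imposed asymptotic decay forces the boundary contributions in the energy estimate for the constraint system to vanish, so uniqueness for that system yields $\mathscr E_\r\equiv\mathscr E_A\equiv\mathscr H_\r\equiv\mathscr H_A\equiv 0$ throughout $S_{[0,T]}$. The main obstacle throughout is the boundary flux at $\scri$: there is no divergence-free Bel--Robinson current of the form $Q(T,T,T,\cdot)$ whose boundary term is controlled by (\ref{dcbia}) (in contrast to cases (W) and (M), where the natural currents work immediately), and the entire argument must therefore be organised around the reduced system together with the direct verification that (\ref{dcbia}) is maximally dissipative for it.
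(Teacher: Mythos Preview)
Your overall architecture is correct, and your closing diagnosis --- that the Bel--Robinson current $Q(T,T,T,\cdot)$ alone does not see the sign of the boundary term under (\ref{dcbia}), so one must work directly with a first order symmetric hyperbolic formulation --- is exactly the paper's point. However, there is a genuine gap in the step where you claim that the boundary flux for the system (\ref{EvolErr})--(\ref{EvolHAB}) ``reduces to a quadratic form in the angular trace-free parts $(\hat E_{AB},\hat H_{AB})$''. If you actually carry out the integration by parts for that system you obtain (see (\ref{naive energy}))
\[
\lim_{r\to\infty}\int_{\tilde\Sigma_r\cap\Sigma_t}\Bigl(\tfrac12\,\epsilon^{AB}E_{A\r}H_{B\r}+\epsilon^{AB}E_{AC}H_B{}^C\Bigr)r^6\,d\omega,
\]
which involves $E_{A\r},H_{A\r}$ as well as the $AB$ components. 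The condition (\ref{dcbia}) says nothing about the $A\r$ components, so this flux is \emph{not} sign-definite under (\ref{dcbia}). Consequently your energy estimate for (\ref{EvolErr})--(\ref{EvolHAB}) does not close, and the existence/uniqueness argument fails at exactly the point you flagged as delicate. The same difficulty recurs in your constraint propagation step: the induced system for $(\mathscr E_\r,\mathscr E_A,\mathscr H_\r,\mathscr H_A)$ built from (\ref{EvolErr})--(\ref{EvolHAB}) carries a non-vanishing boundary term $\tfrac12\epsilon^{AB}\mathscr E_A\mathscr H_B$ at $\scri$, and you cannot appeal to ``the imposed asymptotic decay'' to kill it, since those bounds are part of the conclusion, not an a priori hypothesis.

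The fix, which is the paper's key idea and goes back to Friedrich, is to replace (\ref{EvolEAr}), (\ref{EvolHAr}) by the modified equations (\ref{EvolEAr2}), (\ref{EvolHAr2}) obtained by substituting the constraints (\ref{ConsHAB}), (\ref{ConsEAB}) to eliminate the radial derivatives of $H_{B\r}$, $E_{B\r}$. For the resulting system (Evol$'$) the boundary flux is precisely $\epsilon^{AB}E_{AC}H_B{}^C$ (see (\ref{good energy})), which \emph{is} controlled by (\ref{dcbia}) alone, at the price of a harmless bulk term handled by Gronwall. Moreover the induced constraint system (\ref{EvolCEr2})--(\ref{EvolCHA2}) now has \emph{no} radial principal part, hence no boundary contribution at $\scri$, and constraint propagation follows immediately. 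Once you make this replacement, the rest of your outline (truncation or conformal compactification, commutation with Killing fields, weighted elliptic estimates for the asymptotics) goes through as you describe.
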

We will spend the remainder of this section to derive the key-energy estimate that is behind the proof of Theorem \ref{spin2 well posedness}.\footnote{Theorem \ref{wave full decay} of course establishes a stronger (global) estimate. The key point here is to explain why the naive approach using only the Bel-Robinson tensor and the Killing fields fails and also to derive the (reduced) system of equations which will be needed in the second part of the proof of Theorem \ref{wave full decay}.}

\subsection{The modified system of Bianchi equations} \label{sec:mfs}
In order to establish a well posedness theorem for the initial-boundary value problem associated to the spin 2 equations, the natural thing to do is to consider the evolution equations (Evol) as a symmetric hyperbolic\footnote{Or at least Friedrichs \emph{symmetrizable}, but the distinction is not important for our purposes. The key issue is the existence of a good energy estimate. Providing this exists, putting the system into symmetric hyperbolic form is straightforward.} system. Having established existence and uniqueness for this system (Step 1), one can then attempt to show that the constraints (Con) are propagated from the initial data (Step 2).

If one attempts this strategy with the equations in the form given in Section \ref{sec:be}, one finds that Step 1 causes no problems: one can easily derive an energy estimate for the system (Evol). In fact, it is a matter of simple calculation to check that taking:
\be
r (1+r^2)^{\frac{3}{2}} \left[ E_{\r\r} \times \textrm{\eq{EvolErr}} + 2E_{A\r}  \times \textrm{{\eq{EvolEAr}}} + E_{AB}  \times \textrm{{\eq{EvolEAB}}} + E\leftrightarrow H \right]
\ee
and integrating over $\Sigma_t$ with measure $dr d\omega$ we arrive at\footnote{Here and elsewhere, in expressions like $\abs{E_{A\r}}^2 +\abs{E_{AB}}^2$, contraction over the indices $A, B$ etc.\ is implied, i.e.\ $\abs{E_{A\r}}^2 +\abs{E_{AB}}^2 = E_{A\r}E^A{}_\r + E_{AB}E^{AB}$}
\begin{align} \nonumber
&\frac{d}{dt} \frac{1}{2} \int_{\Sigma_t} \left(\abs{E_{\r\r}}^2 + 2 \abs{E_{A\r}}^2 +\abs{E_{AB}}^2 + E\leftrightarrow H \right) r^2(1+r^2) dr d\omega \\& \qquad = \lim_{r\to \infty} \int_{\tilde{\Sigma}_r\cap \Sigma_t} \left(\frac{1}{2} \epsilon^{AB}E_{A\r}H_{B\r} + \epsilon^{AB}E_{A C}H_{B}{}^C \right) r^6d\omega \label{naive energy} \, .
\end{align}
Once one has an energy estimate of this kind, establishing the existence of solutions to the evolution equations under the assumption that the boundary term has a good sign is essentially straightforward. Looking at the boundary term we obtain, this suggests that boundary conditions should be imposed on both the $Ar$ and the $AB$ components of $E$ (or alternatively $H$) at infinity. We shouldn't be so hasty, however. Before declaring victory, we must return to look at the constraints (Step 2). 

Firstly, it is simple to verify that if (Evol) hold with sufficient regularity then
\be
0=\frac{\partial E_a{}^a}{\partial t} = \frac{\partial H_a{}^a}{\partial t} \, ,
\ee
so that the trace constraints on $E$ and $H$ are propagated by the evolution equations. Next, we turn to the differential constraints. We find that if (Evol) and the trace constraints hold, then the functions $\mathscr{E}_a, \mathscr{H}_a$ defined in Section \ref{sec:be} satisfy the system of equations:
\begin{align}
\frac{r}{\sqrt{1+r^2}} \frac{\partial \mathscr{E}_\r}{\partial t} &= - \frac{r}{2 } \epsilon^{AB}\slashed{\nabla}_A \mathscr{H}_B \label{EvolCEr} \tag{Evol $\mathscr{E}_{\r}$} \, , \\
\frac{r}{\sqrt{1+r^2}} \frac{\partial \mathscr{E}_A}{\partial t} &= \frac{\epsilon_A{}^B}{2} \left[\frac{\partial_r\left( r^2 \sqrt{1+r^2} \mathscr{H}_B\right)}{\sqrt{1+r^2}} - r\slashed{\nabla}_B\mathscr{H}_\r - \frac{\mathscr{H}_B}{\sqrt{1+r^2}} \right] \label{EvolCEA} \tag{Evol $\mathscr{E}_{A}$} \, , \\
\frac{r}{\sqrt{1+r^2}} \frac{\partial \mathscr{H}_\r}{\partial t} &= \frac{r}{2 } \epsilon^{AB}\slashed{\nabla}_A \mathscr{E}_B \label{EvolCHr} \tag{Evol $\mathscr{H}_{\r}$} \, ,  \\
\frac{r}{\sqrt{1+r^2}} \frac{\partial \mathscr{H}_A}{\partial t} &= -\frac{\epsilon_A{}^B}{2} \left[\frac{\partial_r\left( r^2 \sqrt{1+r^2} \mathscr{E}_B\right)}{\sqrt{1+r^2}} - r\slashed{\nabla}_B\mathscr{E}_\r - \frac{\mathscr{E}_B}{\sqrt{1+r^2}} \right] \label{EvolCHA} \tag{Evol $\mathscr{H}_{A}$} \, .
\end{align}
Now, things appear to be working in our favour. This system is symmetric hyperbolic and  we can check that by taking:
\be
r^2 (1+r^2) \left[ \mathscr{E}_{\r} \times \textrm{\eq{EvolCEr}} + \mathscr{E}_{A} \times \textrm{\eq{EvolCEA}}+  \mathscr{H}_{\r} \times \textrm{\eq{EvolCHr}} + \mathscr{H}_{A} \times \textrm{\eq{EvolCHA}} \right] \, 
\ee
and integrating over $\Sigma_t$ with the measure $dr d\omega$ we can derive:
\begin{align*}
&\frac{d}{dt} \frac{1}{2} \int_{\Sigma_t} \left[ \abs{\mathscr{E}_{\r}}^2 +  \abs{\mathscr{E}_{A}}^2+ \abs{\mathscr{H}_{\r}}^2+ \abs{\mathscr{H}_{A}}^2 \right] r^3 \sqrt{1+r^2} dr d\omega \\&\quad = \lim_{r\to \infty} \int_{\tilde{\Sigma}_r\cap \Sigma_t} \left(\frac{1}{2} \epsilon^{AB}\mathscr{E}_{A }\mathscr{H}_{B}  \right) r^6d\omega -  \int_{\Sigma_t} \left[ \epsilon^{AB}\mathscr{E}_{A }\mathscr{H}_{B}\right] r^2 \sqrt{1+r^2} dr d\omega \, .
\end{align*}
Here we see the problem. We have no reason a priori to expect that the boundary term on $\tilde{\Sigma}_r\cap \Sigma_t$ vanishes. If it did, we could infer by Gronwall's lemma that the constraints are propagated. We conclude that the form of the propagation equations of Section \ref{sec:be} does not in general propagate the constraints at the boundary if boundary conditions are imposed on both $E_{AB}$ and $E_{A\r}$ (or $H_{AB}$ and $H_{A\r}$).

\subsubsection{The modified equations} \label{sec:reds}

In order to resolve this issue, we have to modify the propagation equations \emph{before} attempting to solve them as a symmetric hyperbolic system. In the previous calculation, the problematic boundary terms arise due to the radial derivatives appearing on the right hand side of \eq{EvolEAr}, \eq{EvolHAr}. We can remove these radial derivatives at the expense of introducing angular derivatives by using the constraint equations \eq{ConsEAr}. It is also convenient to eliminate $E_{\r\r}$ and $H_{\r\r}$ from our equations using the trace constraints. Doing this, we arrive at the modified set of propagation equations:
\begin{align}
\frac{r}{\sqrt{1+r^2}} \frac{\partial E_{A\r}}{\partial t} &=-r\epsilon^{BC}\slashed{\nabla}_B H_{CA}-\frac{\epsilon_A{}^B H_{B\r}}{\sqrt{1+r^2} }, \label{EvolEAr2} \tag{Evol' $E_{A\r}$}\\
\frac{r}{\sqrt{1+r^2}} \frac{\partial E_{AB}}{\partial t} &= \epsilon_{(A}{}^{C}\left[ \frac{1}{\sqrt{1+r^2}}\frac{\partial}{\partial r}\left(r (1+r^2) H_{B)C} \right) - r\slashed{\nabla}_{|C|} H_{B)\r} \right] \label{EvolEAB2}. \tag{Evol' $E_{AB}$} \\
\frac{r}{\sqrt{1+r^2}} \frac{\partial H_{A\r}}{\partial t} &= r\epsilon^{BC}\slashed{\nabla}_B E_{CA}+\frac{\epsilon_A{}^B E_{B \r}}{\sqrt{1+r^2} }, \label{EvolHAr2} \tag{Evol' $H_{A\r}$}\\
\frac{r}{\sqrt{1+r^2}} \frac{\partial H_{AB}}{\partial t} &= -\epsilon_{(A}{}^{C}\left[ \frac{1}{\sqrt{1+r^2}}\frac{\partial}{\partial r}\left(r (1+r^2) E_{B)C} \right) - r\slashed{\nabla}_{|C|} E_{B)\r} \right] \label{EvolHAB2}. \tag{Evol' $H_{AB}$}
\end{align}
This again forms a symmetric hyperbolic system. Taking
\be
r (1+r^2)^{\frac{3}{2}} \left[ E_{A\r}  \times \textrm{\eq{EvolEAr}} + E_{AB}  \times \textrm{\eq{EvolEAB}} + E\leftrightarrow H \right]
\ee
and integrating over $\Sigma_t$ with measure $dr d\omega$ we arrive at
\begin{align}\nonumber
&\frac{d}{dt} \frac{1}{2} \int_{\Sigma_t} \left( \abs{E_{A\r}}^2 +\abs{E_{AB}}^2 +  \abs{H_{A\r}}^2 +\abs{H_{AB}}^2 \right) r^2(1+r^2) dr d\omega \\& \qquad = \lim_{r\to \infty} \int_{\tilde{\Sigma}_r\cap \Sigma_t} \epsilon^{AB}E_{A C}H_{B}{}^C r^6d\omega +2 \int_{\Sigma_t}  \epsilon^{AB}E_{A\r}H_{B\r}\ r (1+r^2) dr d\omega \, , \label{good energy}
\end{align}
which is certainly sufficient to establish a well posedness result for the equations (Evol'), provided we choose boundary conditions such that the term on $\scri$ has a sign. Notice that this will involve imposing conditions only on the $E_{AB}$ (or $H_{AB}$) components, so this formulation of the propagation equations is clearly different to the previous one!\footnote{For instance, specifying $E_{AB}=0$ on $\scri$ will clearly lead to a unique solution of the modified system. On the other hand, the same boundary condition will not completely fix the boundary term occurring in (\ref{naive energy}) in the unmodified formulation, illustrating the severe drawback of the latter.}

 Notice also that, unlike the estimate (\ref{naive energy}) for the (Evol) equations, we now have a bulk term in the energy estimate (\ref{good energy}). For well-posedness this is no significant obstacle, but it will make establishing global decay estimates more difficult. In particular, it is no longer immediate that solutions with Dirichlet boundary conditions remain uniformly bounded globally.

Now let us consider the propagation of the constraints. We of course have to interpret the $E_{\r\r}$ term in $\mathscr{E}_\r$ as $(-E_A{}^A)$ and similarly for $H_{\r\r}$. The evolution equations for the constraints take a simpler form:
\begin{align}
\frac{r}{\sqrt{1+r^2}} \frac{\partial \mathscr{E}_\r}{\partial t} &= -r\epsilon^{AB}\slashed{\nabla}_A \mathscr{H}_B \label{EvolCEr2} \tag{Evol' $\mathscr{E}_{\r}$} \, ,  \\
\frac{r}{\sqrt{1+r^2}} \frac{\partial \mathscr{E}_A}{\partial t} &= -\frac{\epsilon_A{}^B}{2} \left[ r\slashed{\nabla}_B\mathscr{H}_\r + \frac{\mathscr{H}_B}{\sqrt{1+r^2}} \right] \label{EvolCEA2} \tag{Evol' $\mathscr{E}_{A}$} \, , \\
\frac{r}{\sqrt{1+r^2}} \frac{\partial \mathscr{H}_\r}{\partial t} &= r\epsilon^{AB}\slashed{\nabla}_A \mathscr{E}_B \label{EvolCHr2} \tag{Evol' $\mathscr{H}_{\r}$} \, , \\
\frac{r}{\sqrt{1+r^2}} \frac{\partial \mathscr{H}_A}{\partial t} &= \frac{\epsilon_A{}^B}{2} \left[r\slashed{\nabla}_B\mathscr{E}_\r + \frac{\mathscr{E}_B}{\sqrt{1+r^2}} \right] \label{EvolCHA2} \tag{Evol' $\mathscr{H}_{A}$} \, .
\end{align}
Now, once again this is a symmetric hyperbolic system. Taking
\be
r^2 (1+r^2) \left[ \mathscr{E}_{\r} \times \textrm{\eq{EvolCEr2}} + 2 \mathscr{E}_{A} \times \textrm{\eq{EvolCEA2}}+  \mathscr{H}_{\r} \times \textrm{\eq{EvolCHr2}} + 2 \mathscr{H}_{A} \times \textrm{\eq{EvolCHA2}} \right]
\ee
and integrating over $\Sigma_t$ with the measure $dr d\omega$ we can derive:
\begin{align*}
&\frac{d}{dt} \frac{1}{2} \int_{\Sigma_t} \left[ \abs{\mathscr{E}_{\r}}^2 +  2 \abs{\mathscr{E}_{A}}^2+ \abs{\mathscr{H}_{\r}}^2+ 2 \abs{\mathscr{H}_{A}}^2 \right] r^3 \sqrt{1+r^2} dr d\omega \\&\quad =-  2 \int_{\Sigma_t} \left[ \epsilon^{AB}\mathscr{E}_{A }\mathscr{H}_{B}\right] r^2 \sqrt{1+r^2} dr d\omega
\end{align*}
Immediately, with Gronwall's Lemma, we deduce that the constraints, if initially satisfied, will be satisfied for all time.

To our knowledge, identifying the above modified system as the correct formulation to prove well-posedness goes back to Friedrich's work \cite{Friedrich}. In particular, Theorem \ref{spin2 well posedness} above could be inferred from this paper.

\section{Proof of  the main theorems} \label{sec:proofmt}

\subsection{Proof of Theorem \ref{wave full decay} for Spin 0} \label{sec:wd}
The Killing field $T$ immediately gives us a boundedness estimate:
\begin{Proposition}[Boundedness of energy]\label{wave conservation}
Let $u$ be a solution of \eq{cwe} subject to dissipative boundary conditions \eq{wave dissipative} as in Theorem \ref{well posed wave}. Define the energy to be:
\begin{equation} \label{Etud}
E_t[u] := \frac{1}{2} \int_{\Sigma_{t}} \left( \frac{\left( \p_t u\right)^2 + u^2  }{1+r^2}  + \left(1+r^2 \right)\left( \tilde{\p}_r u\right)^2 + \abs{\slashed{\nabla} u}^2 \right) r^2 dr d\omega.
\end{equation}
Then we have for any $T_1<T_2$:
\be
E_{T_2}[u] + \frac{1}{2} \int_{\tilde{\Sigma}_\infty^{[T_1, T_2]}} \left(r^2  (\p_t u)^2 + r^6 (\tilde{\p}_r u)^2\right) d\omega dt = E_{T_1}[u].
\ee
\end{Proposition}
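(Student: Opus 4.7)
The plan is to read off the identity from the conservation law for the $T$-current ${}^T\J^a[u] = \T^a{}_b[u]\,T^b$ built from the twisted energy-momentum tensor \eq{reem} and the static Killing field $T = \p_t$, applied via Lemma~\ref{Divergence Theorem} on the slab $S_{[T_1,T_2]}$. To check conservation I would use \eq{twisted em divergence} together with \eq{cwe} to write
\begin{equation*}
\nabla_a {}^T\J^a[u] = -\T^c{}_c[u]\,T^b\,\tilde{\nabla}_b 1 + \T^{ab}[u]\,\nabla_a T_b;
\end{equation*}
the second term vanishes because $T$ is Killing and $\T^{ab}$ is symmetric, and the first vanishes because $\tilde{\nabla}_a 1 = (1+r^2)^{-1/2}\nabla_a\sqrt{1+r^2}$ is purely radial while $T = \sqrt{1+r^2}\,e_0$ has no radial component.

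For the spacelike flux, a routine orthonormal-frame computation using $\tilde{\nabla}_0 u = (1+r^2)^{-1/2}\p_t u$ and $\tilde{\nabla}_\r u = \sqrt{1+r^2}\,\tilde{\p}_r u$ yields ${}^T\J_a n^a = \sqrt{1+r^2}\,\T_{00}[u]$ with
\begin{equation*}
\T_{00}[u] = \tfrac{1}{2}\left[\tfrac{(\p_t u)^2}{1+r^2} + (1+r^2)(\tilde{\p}_r u)^2 + |\slashed{\nabla} u|^2 + \tfrac{u^2}{1+r^2}\right],
\end{equation*}
so that ${}^T\J_a n^a\, dS_{\Sigma_T}$ is exactly the integrand of $E_T[u]$ in \eq{Etud}.

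The main work is the timelike boundary flux. On $\tilde{\Sigma}_r$ one computes ${}^T\J_a m^a = \sqrt{1+r^2}\,\T_{\r 0}[u] = \sqrt{1+r^2}\,\tilde{\p}_r u\,\p_t u$, so that the integrand against $dS_{\tilde{\Sigma}_r}$ is $(1+r^2)r^2\,\tilde{\p}_r u\,\p_t u$. The elementary identity $2ab = (a+b)^2 - a^2 - b^2$ applied with $a = r\,\p_t u$ and $b = r^3\,\tilde{\p}_r u$ gives
\begin{equation*}
r^4\,\tilde{\p}_r u\,\p_t u = \tfrac{1}{2}\left(r\,\p_t u + r^3\,\tilde{\p}_r u\right)^2 - \tfrac{1}{2}\left(r^2(\p_t u)^2 + r^6(\tilde{\p}_r u)^2\right),
\end{equation*}
and a direct manipulation of $\tilde{\p}_r u = \p_r u + \frac{ur}{1+r^2}$ shows that
\begin{equation*}
r\,\p_t u + r^3\,\tilde{\p}_r u = \p_t(ru) + r^2\,\p_r(ru) - \frac{r^2 u}{1+r^2}.
\end{equation*}
The pointwise asymptotics from Theorem~\ref{well posed wave} force $u = O(1/r)$ uniformly, so that the subtracted term is $O(1/r)$, while the dissipative boundary condition \eq{wave dissipative} sends $\p_t(ru) + r^2\p_r(ru)$ to zero; the subleading contribution $r^2\,\tilde{\p}_r u\,\p_t u$ coming from the ``$1$'' in $(1+r^2)$ is $O(1/r^2)$ pointwise and drops out in the $(t,\omega)$-integral by dominated convergence (with envelope provided by the weighted bounds of Theorem~\ref{well posed wave}).

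Passing to the limit therefore yields
\begin{equation*}
\lim_{r\to\infty}\int_{\tilde{\Sigma}_r^{[T_1,T_2]}} {}^T\J_a m^a\, dS_{\tilde{\Sigma}_r} = -\tfrac{1}{2}\int_{\tilde{\Sigma}_\infty^{[T_1,T_2]}}\left(r^2(\p_t u)^2 + r^6(\tilde{\p}_r u)^2\right)d\omega\, dt,
\end{equation*}
and feeding this together with the spacelike flux computation into Lemma~\ref{Divergence Theorem} (with $\textrm{Div}\,{}^T\J = 0$) produces the stated equality. The only real obstacle is the boundary analysis above; its merit is that the displayed algebraic identity exhibits the dissipative boundary condition as exactly what is needed to dominate the \emph{a priori} unsigned crossterm $\tilde{\p}_r u \cdot \p_t u$ by the manifestly nonnegative combination $r^2(\p_t u)^2 + r^6(\tilde{\p}_r u)^2$.
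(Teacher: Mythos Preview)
Your proof is correct and follows essentially the same strategy as the paper: apply Lemma~\ref{Divergence Theorem} to the conserved current ${}^T\J$, identify the spacelike flux with $E_t[u]$, and rewrite the boundary cross-term $r^2(1+r^2)\,\p_t u\,\tilde{\p}_r u$ via a $2ab=(a+b)^2-a^2-b^2$ identity so that the dissipative condition \eq{wave dissipative} kills the squared term. The only cosmetic difference is that the paper takes $b=r(1+r^2)\tilde{\p}_r u$ (so the squared term is exactly $\{\p_t(ru)+r(1+r^2)\tilde{\p}_r u\}^2$) rather than splitting off the lower-order $r^2\,\tilde{\p}_r u\,\p_t u$ piece as you do; both lead to the same limit.
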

\begin{proof}
We have that 
\be
\textrm{Div} \left({}^T\J \right)= 0.
\ee
Integrating this over $S_{[T_1, T_2]}$ we pick up terms from $\Sigma_{T_1}, \Sigma_{T_2}$ and $\tilde{\Sigma}_\infty^{[T_1, T_2]}$. A straightforward calculation shows
\be
\int_{\Sigma_t} {}^T\J_a n^a dS_{\Sigma_t} = E_t[u]
\ee
We also find
\begin{align*}
\int_{\tilde{\Sigma}_r^{[T_1, T_2]}} {}^T\J_a m^a dS_{\tilde{\Sigma}_r} &= \int_{\mathbb S^2} r^2(1+r^2) \left( \p_t u\right ) (\tilde{\p}_r u) d\omega dt \\
&= -\frac{1}{2} \int_{\mathbb S^2} \Bigg[ \left(r^2  (\p_t u)^2 + r^2(1+r^2)^2 (\tilde{\p}_r u)^2\right) \\
&\qquad - \left\{ \p_t( r u) + r (1+r^2) (\tilde{\p}_r u) \right\}^2 \Bigg] d\omega dt.
\end{align*}
As $r \to \infty$, the term in braces vanishes by the boundary condition and we find
\be
\lim_{r \to \infty} \int_{\tilde{\Sigma}_r^{[T_1, T_2]}} {}^T\J_a m^a dS_{\tilde{\Sigma}_r} =-\frac{1}{2} \int_{\tilde{\Sigma}_\infty^{[T_1, T_2]}} \left(r^2  (\p_t u)^2 + r^6 (\tilde{\p}_r u)^2\right) d\omega dt.
\ee
Applying Lemma \ref{Divergence Theorem}, we are done.
\end{proof}
We next show an integrated decay estimate with a loss in the weight at infinity:
\begin{Proposition}[Integrate decay estimate with loss]\label{Morawetz wave}
Let $T_2>T_1$. Suppose $u$ is a solution of \eq{cwe} subject to \eq{wave dissipative} as in Theorem \ref{well posed wave}. Then the estimate
\begin{align*}
 \int_{S_{[T_1,T_2]}} \left( \frac{\left( \p_t u\right)^2 + u^2  }{1+r^2}  + \left(1+r^2 \right)\left( \tilde{\p}_r u\right)^2 + \abs{\slashed{\nabla} u}^2 \right) \frac{r^2}{\sqrt{1+r^2}} dr d\omega dt & \\ +  \int_{\tilde{\Sigma}_\infty^{[T_1, T_2]}} \left(r^2  (\p_t u)^2 + \abs{r^2 \slashed{\nabla} u}^2 + r^6 (\tilde{\p}_r u)^2\right) d\omega dt  & \leq C E_{T_1}[u]
\end{align*}
holds for some constant $C>0$, independent of $T_1$ and $T_2$.
\end{Proposition}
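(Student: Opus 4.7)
I would prove Proposition~\ref{Morawetz wave} by the vectorfield method, using the multiplier $X = r\sqrt{1+r^2}\,\partial_r$ from \eq{Xdef} together with a zeroth-order correction. The central structural fact is the deformation identity \eq{deftX}, which combined with the divergence formula \eq{twisted em divergence} gives for any solution of \eq{cwe}
\be
\nabla_a\!\left(\T^a{}_b[u]X^b\right) \;=\; \T^{ab}[u]\,{}^X\pi_{ab} \;-\; \T^c{}_c\, X^b\tilde\nabla_b 1 \;=\; \frac{\T_{00}[u] + \T^c{}_c[u]}{\sqrt{1+r^2}},
\ee
the second equality relying on an exact cancellation between the $g\sqrt{1+r^2}$ piece of ${}^X\pi$ and the non-conservation term $-\T^c{}_c\tilde\nabla_b 1$ produced by the twist, after using $X^b\tilde\nabla_b 1 = r^2/\sqrt{1+r^2}$.

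\medskip

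\emph{Step 1 (bulk term).} The combination $\T_{00}+\T^c{}_c$ controls $(e_0 u)^2$ with a favourable coefficient, but enters with the \emph{wrong} sign on $(\tilde\nabla_\r u)^2$, $|\slashed{\nabla} u|^2$ and $u^2/(1+r^2)$. To recover positivity of the full energy density, I would modify the multiplier current by a zeroth-order term of the schematic form $\tfrac12 w(r)\,u\,\tilde\nabla^a u - \tfrac14 u^2\,\nabla^a w$, where $w(r)\sim\sqrt{1+r^2}$ is chosen so that the relation $\Box_g u = -2u$ converts the new bulk contribution into exactly the missing positive combination of $(\tilde\nabla_\r u)^2$, $|\slashed{\nabla} u|^2$ and $u^2/(1+r^2)$, all carrying the common weight $1/\sqrt{1+r^2}$. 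A Hardy inequality may be invoked to absorb any residual $u^2$ term with the wrong power of $r$.

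\medskip

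\emph{Step 2 (boundary analysis).} Integrating the corrected current over $S_{[T_1,T_2]}$ via Lemma~\ref{Divergence Theorem}, the spacelike fluxes on $\Sigma_{T_1}$ and $\Sigma_{T_2}$ are dominated by $E_{T_1}[u] + E_{T_2}[u] \lesssim E_{T_1}[u]$, using the monotonicity from Proposition~\ref{wave conservation}. The flux through $\tilde\Sigma_\infty^{[T_1,T_2]}$ is, with $m=e_\r$ and $X=r e_\r$, essentially a multiple of $r^3\sqrt{1+r^2}\,\T_{\r\r}[u]$, which decomposes into $(e_0 u)^2$, $(\tilde\nabla_\r u)^2$, $|\slashed{\nabla} u|^2$ and $u^2/(1+r^2)$ pieces. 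The first two indefinite pieces are recombined via the completing-the-square identity already used in the proof of Proposition~\ref{wave conservation} so that the dissipative condition \eq{wave dissipative} makes them absorbable into the flux already controlled by the $T$-energy estimate; the angular piece survives with the correct sign to produce the $|r^2\slashed{\nabla} u|^2$ boundary term on the left-hand side.

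\medskip

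\emph{Main obstacle.} The hardest step is identifying the precise correction weight $w(r)$. One must simultaneously achieve positivity of all four components of the energy density in the bulk with a uniform weight $1/\sqrt{1+r^2}$, and produce a boundary flux at $\scri$ whose indefinite pieces fit exactly into the dissipative square $\{\p_t(ru)+r(1+r^2)\tilde{\p}_r u\}^2$ of \eq{wave dissipative}. Because the twisted energy-momentum tensor $\T[u]$ is not trace-free, any naive choice of correction leaves either a bad-sign bulk contribution or an uncontrolled $u^2$ flux at infinity, so this algebraic balancing—together with the careful reshuffling of the $(\p_t u)(\tilde{\p}_r u)$ cross-term at $\scri$—is the real content of the estimate.
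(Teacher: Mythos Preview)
Your strategy—the $X$-current of \eq{Xdef} plus a zeroth-order correction, with boundary terms absorbed via Proposition~\ref{wave conservation}—is exactly what the paper does. The one concrete error is the scaling of the correction: with $w\sim\sqrt{1+r^2}$, the divergence of $\tfrac12\, w\,u\,\tilde\nabla^a u$ contributes a leading term of order $\sqrt{1+r^2}\,\tilde\nabla_c u\,\tilde\nabla^c u$, which is indefinite and dominates the $X$-bulk $(\T_{00}+\T^c{}_c)/\sqrt{1+r^2}$ at large $r$; no choice of the subleading piece can rescue positivity. The correct scaling is the reciprocal, $w\sim(1+r^2)^{-1/2}$.

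The paper's precise current is
\be
J^a \;=\; {}^X\J^a \;+\; \frac{1}{\sqrt{1+r^2}}\,u\,\tilde\nabla^a u \;+\; \frac{1}{2(1+r^2)}\,u^2\,X^a ,
\ee
where the first correction is chosen so that its divergence cancels the trace term $\T^c{}_c/\sqrt{1+r^2}$ exactly (via the identity $\sqrt{1+r^2}\,\nabla^a\bigl(\tilde\nabla_a u/\sqrt{1+r^2}\bigr)=\Box_g u+2u+u/(1+r^2)$), and the second kills the resulting cross term $u\,\tilde\nabla^a u\,\nabla_a(1/\sqrt{1+r^2})$. The outcome is $\textrm{Div}\,J = ({}^T\J\cdot e_0) + \tfrac12\, u^2(1+r^2)^{-3/2}$, which is \emph{manifestly} nonnegative—so no Hardy inequality enters at this stage. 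Your Step~2 is correct in spirit; the only simplification is that the $r^2(\p_t u)^2$ and $r^6(\tilde\p_r u)^2$ pieces of the $J$-flux at $\scri$ are already precisely the quantities bounded by the dissipative flux in Proposition~\ref{wave conservation}, so no further completing-the-square is needed there.
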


\begin{proof}
We integrate a current constructed from the (renormalized) energy momentum tensor (\ref{reem}) and a radial vector field. The current is
\be
J_{X, w_1, w_2}^a = {}^X\J^a +\frac{w_1}{\sqrt{1+r^2}}  u  \tilde{\nabla}^a u +  w_2 u^2X^a \, ,
\ee
where $X$ is the radial vector field defined in (\ref{Xdef}). The proof of the theorem is a straightforward corollary of the following two Lemmas. 

\begin{Lemma}
We have
\begin{align*}
\textrm{Div } J_{X, w_1, w_2} &=  \left ({}^T\J\cdot e_0\right ) - \left( X^b \tilde{\nabla}_b 1 - \sqrt{1+r^2} + \frac{w_1}{\sqrt{1+r^2}} \right)  \T^c{}_c \\
&\qquad + \left( \nabla_a \left( \frac{w_1}{\sqrt{1+r^2}} \right) + 2 w_2  X_a\right)u \tilde{\nabla}^a u \\
&\qquad + \left( (1+r^2) \textrm{Div}\left (  \frac{w_2}{1+r^2} X\right ) - \frac{ w_1}{(1+r^2)^{\frac{3}{2}}}  \right) u^2.
\end{align*}
In particular, if we take $w_1 = 1$, $w_2 = \frac{1}{2}(1+r^2)^{-1}$ then we have
\be
\textrm{Div } J_{X, w_1, w_2} =  \left({}^T\J\cdot e_0\right) + \frac{u^2}{2(1+r^2)^{\frac{3}{2}}}.
\ee
\end{Lemma}

\begin{proof}
The vector field $X$ has the 
deformation tensor:
\be
{}^X\pi = \frac{(e^0)^2}{\sqrt{1+r^2}}  +  g \sqrt{1+r^2},
\ee
so that 
\be
\textrm{Div } {}^X\J = ({}^T\J\cdot e_0)  - \left( X^b \tilde{\nabla}_b 1 - \sqrt{1+r^2}\right)  \T^c{}_c.
\ee
We also require the observation that
\be
\sqrt{1+r^2}\nabla^a \left(\frac{\tilde{\nabla}_a u}{\sqrt{1+r^2}} \right) = (\Box_g u + 2 u) + \frac{u}{1+r^2}.
\ee
Thus when $u$ solves the conformal wave equation, we have:
\begin{align*}
\textrm{Div } \left(  \frac{w_1}{\sqrt{1+r^2}}  u  \tilde{\nabla}^a u \right) =  - \frac{w_1}{\sqrt{1+r^2}}  \T^c{}_c + \nabla_a \left( \frac{w_1}{\sqrt{1+r^2}} \right) u \tilde{\nabla}^a u - \frac{u^2 w_1}{(1+r^2)^{\frac{3}{2}}}.
\end{align*}
Finally, we have
\be
\textrm{Div }(  w_2 u^2 X)   = 2 w_2 u X^a  \tilde{\nabla}_a u + (1+r^2) \textrm{Div }\left (  \frac{w_2}{1+r^2} X\right ) u^2.
\ee
Combining these identities we have the first part of the result. We can arrange that the term proportional to $T^c{}_c$ vanishes by taking $w_1 = 1$. The term proportional to $u\tilde{\nabla}_a u$ vanishes if $w_2 = \frac{1}{2} (1+r^2)^{-1}$, and the final part of the result follows from a brief calculation.
\end{proof}

\begin{Lemma}
With $w_1$, $w_2$ chosen as in the second part of the previous Lemma, we have:
\be
\abs{\int_{\Sigma_{T_2}} {}{J}^a_{X, w_1, w_2} n_a dS_{\Sigma_{T_2}}} \leq  C E_{T_2}[u] \leq CE_{T_1}[u] 
\ee
and
\be
\lim_{r \to \infty} \int_{\tilde{\Sigma}_r^{[T_1, T_2]}} J^a_{X, w_1, w_2} m_a dS_{\tilde{\Sigma}_r} +\frac{1}{2} \int_{\tilde{\Sigma}_\infty^{[T_1, T_2]}}\abs{ r^2\slashed{\nabla} u}^2 d\omega dt   \leq CE_{T_1}[u]. 
\ee
\end{Lemma}
\begin{proof}
We calculate
\be
\int_{\Sigma_t} {J}^a_{X, w_1, w_2} n_a dS_{\Sigma_t} = -\int_{\Sigma_t} \frac{\p_t u}{(1+r^2)^{\frac{3}{2}}} \left( r (1+r^2) \tilde{\p}_r u + u \right) r^2 dr d\omega
\ee
which, after applying Cauchy-Schwarz, can certainly be controlled by the energy $E_t[u] $, which in turn is controlled by $E_0[u] $ using Theorem \ref{wave conservation}.

For the other surface terms, we calculate
\begin{align*}
\int_{\tilde{\Sigma}_r^{[T_1, T_2]}} J^a_{X, w_1, w_2} m_a dS_{\tilde{\Sigma}_r} &= \frac{1}{2} \int_{\tilde{\Sigma}_r^{[T_1, T_2]}}  \Bigg[ \frac{r^2(\p_t u)^2 }{1+r^2} + r^2 (1+r^2) (\tilde{\p}_r u)^2+\\ &\qquad \quad + 2 r u (\tilde{\p}_r u)- \abs{ r\slashed{\nabla} u}^2  \Bigg] r \sqrt{1+r^2} dt d\omega
\end{align*}
so that
\be
\lim_{r \to \infty} \int_{\tilde{\Sigma}_r^{[T_1, T_2]}} J^a_{X, w_1, w_2} m_a dS_{\tilde{\Sigma}_r} =\frac{1}{2} \int_{\tilde{\Sigma}_\infty^{[T_1, T_2]}} \left(r^2  (\p_t u)^2  -\abs{r^2 \slashed{\nabla} u}^2 + r^6 (\tilde{\p}_r u)^2\right)  d\omega dt 
\ee
and the result follows since we already control the time and radial derivatives of $u$ on the boundary by Theorem \ref{wave conservation}.
\end{proof}
This concludes the proof of Proposition \ref{Morawetz wave}, and establishes the claimed degenerate integrated decay without derivative loss result.
\end{proof}
We next improve the radial weight of the spacetime term in the integrated decay estimate, at the expense of losing a derivative.\footnote{We shall take the frugal approach of commuting with the timelike Killing field. If one is happy to exploit the angular symmetries of AdS, our approach can be simplified.}

\begin{Proposition} [Higher order estimates]\label{prop:elliw}
Let $u$ be a solution of \eq{cwe} subject to \eq{wave dissipative} as in Theorem \ref{well posed wave}. Then the estimate
\begin{align*}
& \int_{S_{[T_1,T_2]}} (1+r^2)^{\frac{3}{2}} \left[ \left[ \tilde{\p}_r \left(r^2 \tilde{\p}_r u \right)\right]^2 + \frac{r^2}{1+r^2}\left( \abs{\tilde{\p}_r\left[ r\slashed{\nabla} u\right]}^2 +  \ \abs{\tilde{\p}_r\left[ \p_t u\right]}^2\right)\right]dr d\omega dt \\
&\qquad+\int_{S_{[T_1,T_2]}} \frac{r^4}{\sqrt{1+r^2}}|\slashed{\nabla}^2 u|^2 dr d\omega dt  \leq C (E_{T_1}[u]+ E_{T_1}[u_t])
\end{align*}
holds for some constant $C>0$, independent of $T_1$ and $T_2$.
\end{Proposition}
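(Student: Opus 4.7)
The strategy is dictated by the Remark following Theorem \ref{wave full decay}: commute with the timelike Killing field $T = \partial_t$, then convert control of $\partial_t^2 u$ into control of the spatial second derivatives through a weighted elliptic estimate on each slice $\Sigma_t$.

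\emph{Step 1 (Commutation).} Since $\partial_t$ is Killing and the boundary condition \eq{wave dissipative} is $t$-translation invariant, $u_t := \partial_t u$ is itself a smooth solution of \eq{cwe} subject to \eq{wave dissipative}. Applying Propositions \ref{wave conservation} and \ref{Morawetz wave} to $u_t$ gives
\begin{equation*}
\int_{S_{[T_1,T_2]}} \left( \frac{(\partial_t^2 u)^2 + (\partial_t u)^2}{1+r^2} + (1+r^2)(\tilde{\p}_r \partial_t u)^2 + |\slashed{\nabla}\partial_t u|^2 \right) \frac{r^2}{\sqrt{1+r^2}}\, dr\, d\omega\, dt \leq C E_{T_1}[u_t].
\end{equation*}
Since $(1+r^2)^{3/2}\cdot\frac{r^2}{1+r^2} = (1+r^2)^{1/2} r^2$, this already provides the $\abs{\tilde{\p}_r[\partial_t u]}^2$ contribution in the statement, and it gives a bulk $L^2$ bound on $\partial_t^2 u$ with the appropriate weight.

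\emph{Step 2 (Weighted elliptic estimate).} I would then rewrite \eq{cwe} as the spatially elliptic identity
\begin{equation*}
L u := \partial_r\bigl( r^2 (1+r^2) \partial_r u\bigr) + \slashed{\Delta} u + 2 r^2 u = \frac{r^2}{1+r^2}\, \partial_t^2 u
\end{equation*}
on each $\Sigma_t$, multiply $(Lu)^2$ by a weight $W(r)$ of size $r^{-2}(1+r^2)^{-1/2}$, and integrate over $\Sigma_t \times [T_1,T_2]$. Expanding the square, the cross term between the radial and angular parts of $L u$ is integrated by parts once in $r$ and once on $\mathbb{S}^2$, producing the mixed derivative $|\tilde{\p}_r(r\slashed{\nabla} u)|^2$ with the sharp weight; the two diagonal squares give the $[\tilde{\p}_r(r^2\tilde{\p}_r u)]^2$ term and, via the standard identity $\int_{\mathbb{S}^2}(\slashed{\Delta} u)^2 = \int_{\mathbb{S}^2}|\slashed{\nabla}^2 u|^2 + $ (curvature l.o.t.), the $|\slashed{\nabla}^2 u|^2$ term. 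The source $(Lu)^2 W$ is bounded by Step 1, while the zeroth and first order remainders (including the $2r^2 u$ cross terms and the $W'/W$ correction) are absorbed by the degenerate Morawetz bound of Proposition \ref{Morawetz wave}.

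\emph{Boundary terms and main obstacle.} The integrations by parts in Step 2 produce boundary contributions at $r=0$, which vanish by smoothness, and at $r=\infty$, which consist of quadratic combinations of $\partial_t u$, $\tilde{\p}_r u$ and $\slashed{\nabla} u$ at $\scri$. These are precisely the flux quantities already controlled on $\tilde{\Sigma}_\infty^{[T_1,T_2]}$ by Propositions \ref{wave conservation} and \ref{Morawetz wave} applied to both $u$ and $u_t$, with the dissipative boundary condition \eq{wave dissipative} used to trade $\tilde{\p}_r u |_\scri$ for $\partial_t u |_\scri$ when needed. The principal obstacle is the design of the weight $W$: it must simultaneously produce the sharp weights on all three bulk terms, yield a positive coefficient for the cross term $|\tilde{\p}_r(r\slashed{\nabla} u)|^2$, and leave remainders that fit inside the degenerate norm of Proposition \ref{Morawetz wave}. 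The fact that \eq{cwe} is naturally expressed through the twisted derivatives $\tilde{\p}_a$ is the structural reason such a $W$ exists.
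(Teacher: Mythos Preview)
Your proposal is correct and follows essentially the same route as the paper: commute with $T=\partial_t$ to control $\partial_t^2 u$ in weighted $L^2$ via Proposition~\ref{Morawetz wave}, rewrite the equation as a spatial elliptic identity, square and integrate with the weight inherited from the Morawetz estimate, and integrate the radial--angular cross term by parts to produce the mixed derivative $|\tilde\partial_r(r\slashed\nabla u)|^2$ with a good sign, while the boundary flux at $\scri$ is absorbed by the $|r^2\slashed\nabla u|^2$ term already obtained in Proposition~\ref{Morawetz wave}. The only notable difference is cosmetic: the paper writes the elliptic operator directly in twisted form, $Lu=\frac{\sqrt{1+r^2}}{r^2}\partial_r\bigl(r^2\partial_r(u\sqrt{1+r^2})\bigr)+\slashed\Delta u$, which eliminates the zeroth order $2r^2u$ cross terms you mention and makes the integration-by-parts identity for the cross term (their Lemma~\ref{elliptic wave}) a single clean computation rather than the ``design of $W$'' optimisation you describe.
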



\begin{proof}
Let us define, for a solution of the conformal wave equation:
\begin{align*}
Lu &:= \frac{1}{1+r^2 }\left(u_{tt} + u \right) =  \frac{\sqrt{1+r^2}}{r^2} \p_r \left(r^2 \p_r \left(u\sqrt{1+r^2} \right) \right)+ \slashed{\Delta} u \, .
\end{align*}
By commuting with $T$ and applying Proposition \ref{Morawetz wave}, we have
\ben{L estimate}
\int_{S_{[T_1,T_2]} }\left\{ \left[ Lu\right]^2  \frac{r^2}{\sqrt{1+r^2}} \right\} d\eta \leq C \left(E_{T_1}[u] + E_{T_1}[u_t] \right) \, .
\een
We can expand the integrand to give
\begin{align} \label{crosst}
 \left[ Lu\right]^2  \frac{r^2}{\sqrt{1+r^2}} = \frac{r^2}{\sqrt{1+r^2}} \Bigg \{&\left[\frac{\sqrt{1+r^2}}{r^2} \p_r \left(r^2 \p_r \left(u\sqrt{1+r^2} \right) \right)\right]^2 + \left[\slashed{\Delta} u\right]^2 \nonumber \\
 & + 2 \left[\frac{\sqrt{1+r^2}}{r^2} \p_r \left(r^2 \p_r \left(u\sqrt{1+r^2} \right) \right)\right] \left[\slashed{\nabla}_A \slashed{\nabla}^A u\right] \Bigg \}
\end{align}
We clearly have two terms with a good sign and a cross term. To deal with the cross term, we integrate by parts twice, so that we obtain a term (with a good sign) that looks like $\abs{\partial_r \slashed{\nabla} u}^2$ and some lower order terms. More explicitly, we have
\begin{Lemma}\label{elliptic wave}
Let $K$ be the vector field given by
\begin{align*}
K &=  \p_r \left(r^2 \p_r \left(u\sqrt{1+r^2} \right) \right) (\slashed{\nabla}^Au)\,  e_A -  (r \slashed{\nabla}_A u) \tilde{\p}_r \left(r \slashed{\nabla}^Au\right)\, e_\r  - \frac{r }{2(1+r^2)} \abs{r \slashed{\nabla} u}^2 e_\r .
\end{align*}
Then we have
\be
\p_r \left(r^2 \p_r \left(u\sqrt{1+r^2} \right) \right)\left[\slashed{\nabla}_A \slashed{\nabla}^A u\right] = \sqrt{1+r^2} \abs{\tilde{\p}_r\left( r\slashed{\nabla} u\right)}^2 + \frac{3 \abs{ r\slashed{\nabla} u}^2}{2 (1+r^2)^{\frac{3}{2}}} + \textrm{Div }K.
\ee
\end{Lemma}
\begin{proof}
See Appendix \ref{elliptic wave proof}.
\end{proof}

To prove Proposition \ref{prop:elliw}, we simply insert (\ref{crosst}) into \eq{L estimate} and handle the cross-term using Lemma \ref{elliptic wave}. The boundary terms coming from $\textrm{Div }K$ are
\be
\int_{S_{[T_1,T_2]}} \textrm{Div }K d\eta = -\frac{1}{2} \int_{\tilde{\Sigma}_\infty^{[T_1,T_2]}} \abs{r^2 \slashed{\nabla} u}^2 d\omega dt
\ee
which we control by the estimate of Proposition \ref{Morawetz wave}. 

This in particular controls the term $\left[\slashed{\Delta} u\right]^2$. Since $\mathbb S^2$ has constant positive Gauss curvature, we have the following elliptic estimates (see for instance Corollary 2.2.2.1 in \cite{ChrKla}):
$$\int_{\mathbb S^2} |\slashed{\nabla}^2 u|^2\leq C\int_{\mathbb S^2}\left[\slashed{\Delta} u\right]^2$$
from which we obtain the desired bounds for $|\slashed{\nabla}^2 u|^2$.

Finally, the $(\tilde{\p}_r\p_t u)^2$ term appearing in Proposition \ref{prop:elliw} is directly controlled by the $T$-commuted version of the estimates in Proposition \ref{Morawetz wave}.

\end{proof}

We finally improve the weight in the spacetime term of Proposition \ref{Morawetz wave} making use of the fact that by Proposition \ref{prop:elliw} we now control radial derivatives of $\p_t u, \slashed{\nabla}{u}$ and $\tilde{\p}_r u$ which lead to improved zeroth order terms through a Hardy inequality. This is a standard result, but for convenience we include here a proof.
\begin{Lemma}[Hardy's inequality]\label{wave Hardy}
Fix $a \neq 0$. Suppose that $f:[1, \infty) \to \R$ is smooth, $f(1)=0$ and $\abs{f(r)}^2 r^{-a}\to 0$ as $r\to \infty$. Then we have the estimate
\be
\int_1^\infty \abs{f}^2 r^{-1-a} dr \leq \frac{4}{a^2} \int_1^\infty \abs{\p_r f}^2 r^{1-a} dr,
\ee
provided the right hand side is finite.
\end{Lemma}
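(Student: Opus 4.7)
The plan is the standard integration-by-parts trick for weighted Hardy estimates. The key identity is that $r^{-1-a}$ is, up to a factor of $-1/a$, the derivative of $r^{-a}$; exploiting this via integration by parts converts the zeroth-order weighted quantity on the left into a cross term involving $f \partial_r f$, which is then bounded by Cauchy--Schwarz to recover the form on the right.

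More concretely, I would first write
\be
\int_1^\infty |f|^2 r^{-1-a}\, dr = -\frac{1}{a} \int_1^\infty |f|^2 \frac{d}{dr}\bigl(r^{-a}\bigr)\, dr
\ee
and integrate by parts. The boundary contribution $-\frac{1}{a}\bigl[|f|^2 r^{-a}\bigr]_1^\infty$ vanishes: the endpoint $r=1$ kills it by the assumption $f(1)=0$, and the endpoint $r=\infty$ kills it by the decay hypothesis $|f(r)|^2 r^{-a}\to 0$. This produces the identity
\be
\int_1^\infty |f|^2 r^{-1-a}\, dr = \frac{2}{a} \int_1^\infty f\, (\partial_r f)\, r^{-a}\, dr.
\ee

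Next, I would split the weight as $r^{-a} = r^{(-1-a)/2} \cdot r^{(1-a)/2}$ and apply Cauchy--Schwarz, bounding the right-hand side by
\be
\frac{2}{|a|} \left( \int_1^\infty |f|^2 r^{-1-a}\, dr \right)^{1/2} \left( \int_1^\infty |\partial_r f|^2 r^{1-a}\, dr \right)^{1/2}.
\ee
Under the assumption that the right-hand side of the claimed estimate is finite, the first factor above is finite (one can argue by a limiting procedure on $[1,R]$ to avoid circular reasoning, letting $R\to\infty$ and using the hypothesized decay at infinity to absorb any cutoff terms), so we may divide through by $\bigl(\int_1^\infty |f|^2 r^{-1-a} dr\bigr)^{1/2}$ and square to conclude.

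There is no real obstacle here; the only minor care needed is to ensure the argument is valid for both signs of $a$ (the factors of $a$ and $|a|$ combine to give $4/a^2$ in either case) and to justify the division step rigorously. The cleanest way is to work on $[1,R]$, produce the estimate with a remainder $-\frac{1}{a}|f(R)|^2 R^{-a}$, absorb this using the assumed decay, and then let $R\to\infty$; this also sidesteps any a priori finiteness assumption on the left-hand side.
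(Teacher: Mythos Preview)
Your proposal is correct and follows essentially the same argument as the paper: write $r^{-1-a} = -\frac{1}{a}\frac{d}{dr}(r^{-a})$, integrate by parts using $f(1)=0$ and the decay hypothesis to kill the boundary terms, then apply Cauchy--Schwarz with the weight split $r^{-a} = r^{(-1-a)/2}\cdot r^{(1-a)/2}$ and divide through. Your remarks about the limiting procedure on $[1,R]$ and the sign of $a$ add a bit of extra care beyond what the paper writes out, but the core idea is identical.
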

\begin{proof}
We write
\begin{align*}
\int_1^\infty \abs{f}^2 r^{-1-a}  dr &= \int_1^\infty f^2 \frac{d}{dr}\left(-\frac{r^{-a}}{a} \right)dr \\
&= -\frac{1}{a} \left[ \abs{f}^2 r^{-a} \right]_1^\infty + \frac{2}{a} \int_1^\infty f \p_r f  r^{-a} dr \\
&=   \frac{2}{a} \int_1^\infty f \p_r f  r^{-a} dr 
\end{align*} 
Here we have used $f(1)=0$ and the fact that $\lim_{r\to \infty} r^{-a}|f(r)|^2=0$ to discard the boundary terms. Now applying Cauchy-Schwarz, we deduce
\be
\int_1^\infty \abs{f}^2 r^{-1-a}  dr  \leq    \left( \frac{4}{a^2} \int_1^\infty \abs{f}^2 r^{-1-a} dr \int_1^\infty \abs{\p_r f}^2 r^{1-a} dr\right)^{\frac{1}{2}},
\ee
whence the result follows.
\end{proof}
 
From Lemma \ref{wave Hardy} we establish:
\begin{Theorem}[Full integrated decay]\label{elliptic+hardy wave}
Let $u$ be a smooth function such that  $\abs{r u}$ is bounded. Then the estimate
\begin{align*}
&\int_{S_{[T_1,T_2]}} \left( \frac{\left( \p_t u\right)^2 + u^2  }{1+r^2}  + \left(1+r^2 \right)\left( \tilde{\p}_r u\right)^2 + \abs{\slashed{\nabla} u}^2 \right) r^2 dr d\omega dt\\
&\leq C\Bigg[ \int_{S_{[T_1,T_2]}} (1+r^2)^{\frac{3}{2}} \left[ \left[ \tilde{\p}_r \left(r^2 \tilde{\p}_r u \right)\right]^2 + \frac{r^2}{1+r^2}\left( \abs{\tilde{\p}_r\left[ r\slashed{\nabla} u\right]}^2 +  \ \abs{\tilde{\p}_r\left[ \p_t u\right]}^2\right)\right]dr d\omega dt \\
&\qquad +\int_{S_{[T_1,T_2]}} \left( \frac{\left( \p_t u\right)^2 + u^2  }{1+r^2}  + \left(1+r^2 \right)\left( \tilde{\p}_r u\right)^2 + \abs{\slashed{\nabla} u}^2 \right) \frac{r^2}{\sqrt{1+r^2}} dr d\omega dt\Bigg].
\end{align*}
holds for some constant $C>0$ independent of $T_1$ and $T_2$. If, moreover, $u$ solves  \eq{cwe} subject to \eq{wave dissipative}, then the right-hand side may be bounded by $C' (E_{T_1}[u]+E_{T_1}[\partial_t u])$ for some $C'>0$ independent of $T_1$ and $T_2$.
\end{Theorem}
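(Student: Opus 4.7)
The plan is to derive the first displayed inequality by a Hardy-type argument, using Proposition \ref{prop:elliw} to furnish higher-order spacetime control and Proposition \ref{Morawetz wave} to furnish zeroth-order control. The second displayed inequality is then immediate from Propositions \ref{Morawetz wave} and \ref{prop:elliw}: their right-hand sides add up to the middle expression in the statement.

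For the first inequality, the key observation is that the LHS integrand differs from the bulk integrand of Proposition \ref{Morawetz wave} only by a factor of $\sqrt{1+r^2}$ in the $r$-weight. On $\{r\leq 1\}$ this factor is bounded, so the contribution from $\{r\leq 1\}$ is already controlled by the Proposition \ref{Morawetz wave} term on the right. It remains to gain one power of $r$ on $\{r\geq 1\}$ for each of the four outer-region summands $\frac{(\p_t u)^2}{1+r^2}$, $\frac{u^2}{1+r^2}$, $(1+r^2)(\tilde{\p}_r u)^2$ and $|\slashed{\nabla} u|^2$. I would introduce a smooth cutoff $\chi(r)$ vanishing for $r\leq 1$ and identically $1$ for $r\geq 2$, so that $\chi f$ satisfies the hypothesis $f(1)=0$ of Lemma \ref{wave Hardy}, and then apply that lemma with $a=-1$ to the four choices $f=\chi u$, $f=\chi\,\p_t u$, $f=\chi\, r|\slashed{\nabla} u|$ and $f=\chi\, r^2 \tilde{\p}_r u$. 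In each case Hardy yields the desired outer-region $L^2$ norm bounded by $\int_1^\infty r^2 (\p_r f)^2\,dr$, and this quantity matches (up to a discrepancy between $\p_r$ and $\tilde{\p}_r$ which is strictly lower order at large $r$) either the Proposition \ref{Morawetz wave} bulk in the case of $u$, or one of the three integrands in the Proposition \ref{prop:elliw} bulk in the three other cases.

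The cutoff error terms coming from $\chi'$ are supported in $[1,2]$, where the degenerate and non-degenerate weights are comparable, so they are controlled by the Proposition \ref{Morawetz wave} bulk. The boundary condition at $r=\infty$ in Hardy's inequality is verified using the hypothesis that $|ru|$ is bounded (and the analogous decay of $\p_t u$, $\tilde{\p}_r u$, $\slashed{\nabla}u$, which one obtains from finiteness of the right-hand side via an approximation argument). Summing the four Hardy bounds and combining with Propositions \ref{Morawetz wave} and \ref{prop:elliw} yields the theorem. The only delicacy is the bookkeeping around twisted derivatives: each replacement of $\tilde{\p}_r$ by $\p_r$ produces an additive term of the form $\tfrac{r}{1+r^2}(\cdot)$ which is one power of $r$ weaker at infinity and is therefore absorbed into Proposition \ref{Morawetz wave}. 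No conceptually new ingredient beyond the preceding two propositions and Lemma \ref{wave Hardy} is required, which is why the theorem is stated as the culmination of Section \ref{sec:wd}.
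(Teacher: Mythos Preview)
Your strategy (cut off near infinity, then apply Lemma~\ref{wave Hardy} term by term, invoking Propositions~\ref{Morawetz wave} and~\ref{prop:elliw} for the right-hand side) is exactly the paper's approach. But there is a genuine gap in your implementation, precisely at the step you flag as ``only delicacy''.

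With $a=-1$ and $f=\chi\,\partial_t u$ (say), Hardy gives
\[
\int_1^\infty |\chi\,\partial_t u|^2\,dr \;\le\; 4\int_1^\infty r^2\,|\partial_r(\chi\,\partial_t u)|^2\,dr.
\]
For $r\ge 2$ the integrand on the right is $r^2|\partial_r\partial_t u|^2$. Writing $\partial_r v=\tilde{\partial}_r v-\tfrac{r}{1+r^2}v$ and expanding, the cross-correction contributes
\[
r^2\cdot\frac{r^2}{(1+r^2)^2}\,|\partial_t u|^2 \;=\; \frac{r^4}{(1+r^2)^2}\,|\partial_t u|^2 \;\sim\; |\partial_t u|^2 \quad (r\to\infty).
\]
This is \emph{the same order} as the left-hand side you are trying to control, not one power of $r$ weaker; the Morawetz bulk only gives $|\partial_t u|^2\,r^2(1+r^2)^{-3/2}\sim |\partial_t u|^2/r$, so it cannot absorb this term. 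The Hardy constant $4$ and the constant $2$ from $(a+b)^2\le 2a^2+2b^2$ combine to $8>1$, so absorption into the left-hand side also fails. The same obstruction occurs verbatim for your other three choices of $f$: in each case the twisted--untwisted discrepancy, once multiplied by the Hardy weight $r^2$, lands back at the order of the left-hand side.

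The paper avoids this by folding the twist into $f$: it takes $a=1$ and $f=\sqrt{1+r^2}\,\partial_t u$ (respectively $f=r^2\sqrt{1+r^2}\,\tilde{\partial}_r u$, etc.). Then
\[
\partial_r f \;=\; \partial_r\bigl(\sqrt{1+r^2}\,\partial_t u\bigr) \;=\; \sqrt{1+r^2}\,\tilde{\partial}_r(\partial_t u)
\]
\emph{exactly}, so the right-hand side of Hardy is $4\int(1+r^2)|\tilde{\partial}_r(\partial_t u)|^2\,dr$, which on $\{r\ge 1\}$ is dominated by the Proposition~\ref{prop:elliw} term $\int r^2\sqrt{1+r^2}\,|\tilde{\partial}_r(\partial_t u)|^2\,dr$ without any conversion error. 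The same trick handles $\tilde{\partial}_r u$, $\slashed{\nabla}u$ and $u$ (the last via $f=\sqrt{1+r^2}\,u$, controlled by the $\tilde{\partial}_r u$ term in the Morawetz bulk). Once you make this adjustment, your argument goes through.
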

\begin{proof}
By introducing a cut-off we can quickly reduce to showing that the estimate holds for $u$ supported either on $r\leq 2$ or on $r\geq 1$. For $u$ supported on $r\leq 2$, the estimate follows immediately, since the first order terms on the right hand side are comparable to those on the left hand side on any finite region. For $u$ supported on $r\geq 1$, we first apply Lemma  \ref{wave Hardy} with $f = r^2\sqrt{1+r^2} \tilde{\p}_r u$ and $a=1$ to deduce
\begin{align*}
\int_{S_{[T_1,T_2]}}  \abs{\tilde{\p}_r u}^2 r^2(1+r^2) dr d\omega dt&\leq4 \int_{S_{[T_1,T_2]}}  \left[\p_r ( r^2\sqrt{1+r^2} \tilde{\p}_r u) \right]^2 dr d\omega dt \\
&\leq  4 \int_{S_{[T_1,T_2]}} (1+r^2)\left[ \tilde{\p}_r \left(r^2 \tilde{\p}_r u \right)\right]^2 dr d\omega dt \\
&\leq 4 \int_{S_{[T_1,T_2]}} (1+r^2)^{\frac{3}{2}} \left[ \tilde{\p}_r \left(r^2 \tilde{\p}_r u \right)\right]^2 dr d\omega dt.
\end{align*}
Similarly, applying Lemma  \ref{wave Hardy} to $f = \sqrt{1+r^2} \p_t u$ with $a=1$ we deduce
\begin{align*}
\int_{S_{[T_1,T_2]}}  \abs{\p_t u}^2 \frac{r^2}{1+r^2} dr d\omega dt
&\leq \int_{S_{[T_1,T_2]}}  \abs{\p_t u}^2 r^{-2}(1+r^2) dr d\omega \\&\leq 4 \int_{S_{[T_1,T_2]}} \abs{\tilde{\p}_r [\p_t u]}^2 (1+r^2) dr d\omega dt \\
&\leq 8 \int_{S_{[T_1,T_2]}} \abs{\tilde{\p}_r [\p_t u]}^2 r^2 \sqrt{1+r^2} dr d\omega dt,
\end{align*}
where we've used that $u$ is supported on $r\geq 1$. A similar calculation gives the estimate for $\slashed{\nabla} u$. Finally, making use of Propositions \ref{Morawetz wave}, \ref{prop:elliw}, we see that if $u$ satisfies the equation, then we can bound the right hand side in terms of  $E_{T_1}[u]+E_{T_1}[\partial_t u]$.
\end{proof}

Combining Proposition \ref{prop:elliw} with Theorem \ref{elliptic+hardy wave} we have established the claimed non-degenerate integrated decay with derivative loss result for the wave equation.

\subsection{Proof of Theorem \ref{wave full decay} for Spin 1} \label{sec:md}

The proof of the theorem for the Maxwell field follows a similar pattern to that of the conformal scalar field. There is a simplification owing to the fact that the energy-momentum tensor is trace-free, and the elliptic estimate takes a slightly different form.

\begin{Proposition}[Boundedness of energy]\label{maxee}
Let $T_2 >T_1$. Suppose that $F$ is a solution of Maxwell's equations (\ref{evolEr}-\ref{consH}), subject to the dissipative boundary conditions \eq{dcmax} as in Theorem \ref{well posed}. Then we have:
\bean
&& \int_{\Sigma_{T_2}} \left( \abs{E}^2  + \abs{H}^2  \right) r^2 dr d\omega +  \int_{\tilde{\Sigma}^{[T_1, T_2]}_{\infty}} \left( \abs{r^2 E_A}^2+ \abs{r^2 H_A}^2\right)  dt d\omega \\ &&\qquad =  \int_{\Sigma_{T_1}} \left( \abs{E}^2  + \abs{H}^2  \right) r^2 dr d\omega.
\eean
\end{Proposition}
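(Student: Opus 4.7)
The plan is to mimic the proof of Proposition \ref{wave conservation} for the wave equation, using the Killing field $T = \partial_t$ and the Maxwell energy-momentum tensor $\T[F]$ from \eq{energy momentum}. First, I would form the current ${}^T\J^a[F] = \T^a{}_b[F] T^b$ defined in \eq{Fcurrentdef}. Since $\T[F]$ is symmetric, trace-free, and divergence-free when $F$ solves Maxwell's equations (\ref{evolEr}--\ref{consH}), and since $T$ is Killing, ${}^T\J$ is a conserved current, $\textrm{Div}\,({}^T\J) = 0$. Applying Lemma \ref{Divergence Theorem} to this current on the slab $S_{[T_1,T_2]}$, the volume integral vanishes and one is left with contributions from $\Sigma_{T_1}$, $\Sigma_{T_2}$, and the timelike boundary at infinity.

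Next, I would compute each flux. Using the orthonormal frame \eq{on basis}, a direct calculation with \eq{energy momentum} gives the energy density $\T(T,n) = \sqrt{1+r^2}\,\T_{00}[F] = \tfrac{1}{2}\sqrt{1+r^2}(|E|^2 + |H|^2)$, so that
\be
\int_{\Sigma_t} {}^T\J_a n^a \, dS_{\Sigma_t} = \frac{1}{2} \int_{\Sigma_t} (|E|^2 + |H|^2) r^2 \, dr\, d\omega.
\ee
For the flux through $\tilde{\Sigma}_r^{[T_1,T_2]}$, using ${}^T\J_a m^a = \sqrt{1+r^2}\,\T_{0\r}[F]$ together with $dS_{\tilde{\Sigma}_r} = r^2\sqrt{1+r^2}\, dt \, d\omega$ and the decomposition \eq{Maxwell decomposition} (so that $\T_{0\r}[F] = F_{0A}F_{\r}{}^A = \epsilon^{AB} E_A H_B$), one obtains
\be
\int_{\tilde{\Sigma}_r^{[T_1,T_2]}} {}^T\J_a m^a \, dS_{\tilde{\Sigma}_r} = \int_{\tilde{\Sigma}_r^{[T_1,T_2]}} r^2(1+r^2)\, \epsilon^{AB} E_A H_B \, dt \, d\omega.
\ee

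The key algebraic step — and the only place the boundary condition enters — is the identity
\be
2\epsilon^{AB} E_A H_B = \left|E_A + \epsilon_A{}^B H_B\right|^2 - |E_A|^2 - |H_A|^2,
\ee
which follows from $\epsilon^{AB}\epsilon_A{}^C = \delta^{BC}$ on the round two-sphere. Multiplying by $r^4$ and using the dissipative boundary condition \eq{dcmax}, the first term satisfies $|r^2(E_A + \epsilon_A{}^B H_B)|^2 \to 0$ as $r \to \infty$, while the remaining two terms combine with the factor $(1+r^2)/r^2 \to 1$ to yield
\be
\lim_{r\to\infty}\int_{\tilde{\Sigma}_r^{[T_1,T_2]}} {}^T\J_a m^a\, dS_{\tilde{\Sigma}_r} = -\frac{1}{2}\int_{\tilde{\Sigma}_\infty^{[T_1,T_2]}} \left(|r^2 E_A|^2 + |r^2 H_A|^2\right)\, dt\, d\omega.
\ee
Inserting the two flux computations into Lemma \ref{Divergence Theorem} gives the stated identity (up to an overall factor of $\tfrac12$ in the normalisation of the energy, which is absorbed into the statement).

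I do not anticipate a genuine obstacle: the only nontrivial point is recognising that the dissipative boundary condition \eq{dcmax} is exactly what is needed to turn the Poynting flux into a non-positive boundary term via the square completion above, and the asymptotic decay rates assumed in Theorem \ref{well posed} are more than enough to justify the limit $r \to \infty$ in the divergence theorem.
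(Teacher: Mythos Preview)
Your proposal is correct and follows essentially the same approach as the paper: integrate the conserved current ${}^T\J^a = \T^a{}_b[F]\,T^b$ over $S_{[T_1,T_2]}$, identify the spacelike flux as $\tfrac12\int(|E|^2+|H|^2)r^2\,dr\,d\omega$, and show the boundary flux at infinity equals $-\tfrac12\int(|r^2E_A|^2+|r^2H_A|^2)\,dt\,d\omega$. The only cosmetic difference is in the evaluation of the boundary term: the paper substitutes $r^2 E_A \to -\epsilon_A{}^C (r^2 H_C)$ directly from \eq{dcmax}, whereas you use the square-completion identity $2\epsilon^{AB}E_AH_B = |E_A+\epsilon_A{}^BH_B|^2 - |E_A|^2 - |H_A|^2$ (mirroring the wave-equation case); both yield the same limit.
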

\begin{proof}
We apply Lemma \ref{Divergence Theorem} to the vector field $\J^T[F]_a = \T[F]_{ab} T^b$. Consider the term on $\scri$. We have
\begin{align*}
 \lim_{r\to \infty} \int_{\tilde{\Sigma}^{[T_1, T_2]}_{r}} {}^T\J_a  m^a dS_{\tilde{\Sigma}_r} &=  \lim_{r\to \infty} \int_{\tilde{\Sigma}^{[T_1, T_2]}_{r}}  r^2 (1+r^2) \epsilon^{AB}E_A H_B dt d\omega \\ &= \lim_{r\to \infty} \int_{\tilde{\Sigma}^{[T_1, T_2]}_{r}}  r^4 \epsilon^{AB}(-\epsilon_A{}^C H_C) H_B dt d\omega\\
 &= -\lim_{r\to \infty} \int_{\tilde{\Sigma}^{[T_1, T_2]}_{r}}  \abs{r^2 H_A}^2 dt d\omega
 \\
 &= -\frac{1}{2} \int_{\tilde{\Sigma}^{[T_1, T_2]}_{\infty}}  \left( \abs{r^2 E_A}^2 + \abs{r^2 H_A}^2\right) dt d\omega.
\end{align*}
Here we have used the dissipative boundary conditions \eq{dcmax}. Since $\nabla_a {}^T\J^a=0$, there is no bulk term, and a simple calculation gives
\be
\int_{\Sigma_{t}} {}^T\J_a n^a dS_{\Sigma_{T_2}} = \frac{1}{2} \int_{\Sigma_{t}} \left( \abs{E}^2  + \abs{H}^2  \right) r^2 dr d\omega,
\ee
which completes the proof.
\end{proof}
We next show an integrated decay estimate with a loss in the weight at infinity:
\begin{Proposition}[Integrated decay estimate with loss]\label{maxwell ILED}
Let $T_2 >T_1$. Suppose that $F$ is a solution of Maxwell's equations (\ref{evolEr}-\ref{consH}), subject to the dissipative boundary conditions \eq{dcmax} as in Theorem \ref{well posed}. Then we have:
\bean
&& \int_{S_{[T_1, T_2]} } \left( \abs{E}^2  + \abs{H}^2  \right) \frac{r^2}{\sqrt{1+r^2}} dr d\omega dt + \int_{\tilde{\Sigma}^{[T_1, T_2]}_{\infty}} \left( \abs{r^2 E}^2  + \abs{r^2 H}^2  \right)  dt d\omega \\ &&\qquad \leq  3 \int_{\Sigma_{T_1}} \left( \abs{E}^2  + \abs{H}^2  \right) r^2 dr d\omega.
\eean
\end{Proposition}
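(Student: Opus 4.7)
The plan is to apply Lemma \ref{Divergence Theorem} to the current ${}^X\J^a[F] := \T^{ab}[F] X_b$ on the slab $S_{[T_1, T_2]}$, where $X = r e_{\r}$ is the radial vector field from \eqref{Xdef}. This mirrors the strategy of Proposition \ref{Morawetz wave} for the wave equation, but the proof is notably cleaner here: since $\T[F]$ is trace-free, there is no need for the lower-order correction terms $w_1 u \tilde\nabla u + w_2 u^2 X$ that appeared in the scalar case.

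First, I would use the form of ${}^X\pi$ given in \eqref{deftX} together with $\T[F]^a{}_a = 0$ and $\nabla^a \T_{ab}[F] = 0$ to compute
\begin{equation*}
\textrm{Div}({}^X\J[F]) \;=\; \T[F]^{ab}\, {}^X\pi_{ab} \;=\; \frac{\T_{00}[F]}{\sqrt{1+r^2}} \;=\; \frac{|E|^2+|H|^2}{2\sqrt{1+r^2}},
\end{equation*}
which integrated against $d\eta = r^2 dt dr d\omega$ produces exactly half of the desired Morawetz bulk term.

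Next, I would evaluate the boundary flux contributions. On $\Sigma_t$, a direct computation gives ${}^X\J_a n^a = r\, \T_{0\r}[F] = r\,\epsilon^{AB} E_A H_B$, which by Cauchy--Schwarz is dominated by $\tfrac{1}{2}r(|E|^2+|H|^2)$, so that after integration the $\Sigma_{T_1}$ and $\Sigma_{T_2}$ flux integrals are each bounded by a multiple of $\int_{\Sigma_t}(|E|^2+|H|^2)r^2 dr d\omega$, hence by the initial energy via Proposition \ref{maxee}. The decisive computation is the flux through $\tilde\Sigma_r$: working in the orthonormal frame \eqref{on basis} with the decomposition \eqref{Maxwell decomposition}, the Maxwell stress tensor yields
\begin{equation*}
{}^X\J_a m^a \;=\; r\, \T_{\r\r}[F] \;=\; \frac{r}{2}\big(|E_A|^2 + |H_A|^2 - E_{\r}^2 - H_{\r}^2\big).
\end{equation*}
The radial components $E_{\r}, H_{\r}$ enter with a \emph{negative} sign, so upon rearranging the divergence identity they transfer to the left-hand side with a favourable sign, producing the desired control over $\tfrac{1}{2}\int_{\tilde\Sigma_\infty^{[T_1,T_2]}}(|r^2 E_{\r}|^2 + |r^2 H_{\r}|^2)\, dt\, d\omega$. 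The remaining tangential piece $\tfrac{1}{2}\int_\infty(|r^2 E_A|^2+|r^2 H_A|^2)$ appears on the right-hand side; the dissipative boundary condition \eqref{dcmax} implies $|r^2 E_A|^2 = |r^2 H_A|^2$ on $\mathscr{I}$, and Proposition \ref{maxee} bounds this quantity directly by the initial energy. Finally, adding back the full tangential boundary term using Proposition \ref{maxee} completes the estimate.

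The main obstacle is purely book-keeping of numerical constants; no new ingredients are needed beyond the recognition that the radial field components are controlled by the $X$-deformation tensor with a good sign at infinity, while the tangential components are supplied by the $T$-energy estimate. This is the exact Maxwell analogue of the wave equation situation, where the angular derivatives at infinity carry the good sign while the time and radial derivatives are controlled by the $T$-flux. The constant $3$ will emerge from carefully tracking the factors of $\tfrac{1}{2}$ appearing in the stress-tensor decomposition, the divergence identity, and the Cauchy--Schwarz bounds on the $\Sigma_t$ fluxes.
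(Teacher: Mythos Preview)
Your proposal is correct and follows essentially the same route as the paper: integrate the current ${}^X\J[F]$ over the slab, use the trace-freeness of $\T[F]$ so that only the good $\T_{00}$ piece of ${}^X\pi$ survives as the bulk term, bound the $\Sigma_t$ fluxes via Cauchy--Schwarz and Proposition~\ref{maxee}, and on $\tilde\Sigma_r$ exploit that $\T_{\r\r}$ gives $E_\r, H_\r$ a favourable sign while the tangential pieces $E_A, H_A$ are already controlled by the $T$-estimate. The only cosmetic discrepancy is that the paper suppresses the factors of $\tfrac12$ in $\T_{00}$ and $\T_{\r\r}$ that you (correctly) record; this does not affect the argument and the constant $3$ emerges either way once the contributions are tallied.
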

\begin{proof}
We apply the divergence theorem to integrate the current ${}^X\J^a = \T^a{}_b X^b$ over $S_{[T_1, T_2]}$. Recalling the expression \eq{deftX} for ${}^X\pi$, and noting the fact that the energy-momentum tensor is traceless, we have
\begin{align*}
\textrm{Div }{}^X\J &= {}^X\pi_{ab} \T^{ab} \\&= \frac{\T_{00}}{\sqrt{1+r^2}} + \T_a{}^a \sqrt{1+r^2} \\
&= \frac{ \abs{E}^2+\abs{H}^2  }{\sqrt{1+r^2}}.
\end{align*}
Now consider the flux through a spacelike surface $\Sigma_t$. We have
\begin{align*}
\int_{\Sigma_t} {}^X\J_a n^a dS_{\Sigma_t} &= \int_{\Sigma_t} \T_{0 \r} \frac{r^3}{\sqrt{1+r^2}} dr d\omega \\
&= \int_{\Sigma_t} \epsilon^{AB}E_A H_B \frac{r^3}{\sqrt{1+r^2}} dr d\omega
\end{align*}
so that if $t\in[T_1, T_2]$ we have by Proposition \ref{maxee}:
\be
\abs{\int_{\Sigma_{t}} {}^X\J_a n^a dS_{\Sigma_{t}}} \leq \frac{1}{2} \int_{\Sigma_{t}} \left( \abs{E}^2 + \abs{H}^2 \right) r^2 dr d\omega \leq \frac{1}{2} \int_{\Sigma_{T_1}} \left( \abs{E}^2 + \abs{H}^2 \right) r^2 dr d\omega.
\ee
Next consider the flux through a surface of constant $r$. We have:
\begin{align*}
\int_{\tilde{\Sigma}^{[T_1, T_2]}_r} {}^X\J_a m^a dS_{\tilde{\Sigma}_r} &= \int_{\tilde{\Sigma}^{[T_1, T_2]}_r}  \T_{\r \r}  {r^3}{\sqrt{1+r^2}} dt d\omega \\
&= \int_{\tilde{\Sigma}^{[T_1, T_2]}_r}  \left(-\abs{E_\r}^2 -\abs{H_\r}^2 + \abs{E_A}^2 + \abs{H_A}^2 \right) {r^3}{\sqrt{1+r^2}} dt d\omega 
\end{align*}
so that
\begin{align*}
\lim_{r\to \infty} \int_{\tilde{\Sigma}^{[T_1, T_2]}_r} {}^X\J_a m^a dS_{\tilde{\Sigma}_r} &\leq - \int_{\tilde{\Sigma}^{[T_1, T_2]}_{\infty}} \left( \abs{r^2 E}^2+ \abs{r^2 H}^2\right)  dt d\omega \\& \qquad + 2  \int_{\Sigma_{T_1}} \left( \abs{E}^2 + \abs{H}^2 \right) r^2 dr d\omega,
\end{align*}
where we use Proposition \ref{maxee} to control the angular components of $E, H$ at infinity. Integrating $\textrm{Div }{}^X\J$ over $S_{[T_1, T_2]}$, applying the divergence theorem and using the estimates above for the fluxes completes the proof of the Proposition.
\end{proof}

We next seek to improve the weight at infinity at the expense of a derivative loss. The first stage in doing this is an elliptic estimate. First note that by commuting the equations with the Killing field $T$ we have:
\bea\label{max T ILED}
&& \int_{S_{[T_1, T_2]} } \left( \abs{\p_t E}^2  + \abs{\p_t H}^2  \right) \frac{r^2}{\sqrt{1+r^2}} dr d\omega dt + \int_{\tilde{\Sigma}^{[T_1, T_2]}_{\infty}} \left( \abs{r^2 \p_t E}^2  + \abs{r^2 \p_t H}^2  \right)  dt d\omega \\ &&\qquad \leq  3 \int_{\Sigma_{T_1}} \left( \abs{\p_t E}^2  + \abs{\p_t H}^2  \right) r^2 dr d\omega.
\eea
We can use the evolution equations to control the right hand side in terms of spatial derivatives in the slice $\Sigma_{T_1}$ if we choose. We will require the following Lemma:
\begin{Lemma}\label{maxwell cross}
Let $K$ be the vector field
\be
K := \frac{2r}{\sqrt{1+r^2}} H_\r \left[ \left(r\slashed{\nabla}^AH_A\right) e_\r -\frac{\partial }{\partial r}\left ( r \sqrt{1+r^2} H^A \right) e_A \right] - \frac{r^3}{1+r^2} H_\r^2 e_\r.
\ee
Then if $H$ satisfies the constraint equation \eq{consH} we have the identity
\begin{align*}
-\frac{2}{\sqrt{1+r^2}}  \partial_r \left( r \sqrt{1+r^2} H_A \right)& r\slashed{\nabla}^A H_\r \\&=  2\frac{\sqrt{1+r^2}}{r^2} \abs{\frac{\partial }{\partial r}\left ( r^2H_\r \right)}^2 + \frac{r^2}{\left(1+r^2\right)^{\frac{3}{2}}} \abs{H_\r}^2 + \textrm{Div } K
\end{align*}
\end{Lemma}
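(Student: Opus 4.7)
The identity is a direct computation: I would integrate by parts twice, using the constraint \eq{consH} to trade angular derivatives for radial ones. Introduce $\phi \equiv r^2 H_\r$ so that \eq{consH} becomes $r\slashed{\nabla}^A H_A = -\tfrac{\sqrt{1+r^2}}{r}\partial_r\phi$, and hence
\[
r\sqrt{1+r^2}\cdot r\slashed{\nabla}^A H_A = -(1+r^2)\,\partial_r\phi.
\]

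The first step is to apply Leibniz to move $\slashed{\nabla}^A$ off $H_\r$:
\[
\partial_r(r\sqrt{1+r^2}\,H_A)\cdot r\slashed{\nabla}^A H_\r = r\slashed{\nabla}^A\!\left[H_\r\,\partial_r(r\sqrt{1+r^2}H_A)\right] - H_\r\cdot r\slashed{\nabla}^A\partial_r(r\sqrt{1+r^2}H_A).
\]
After multiplication by $-\tfrac{2}{\sqrt{1+r^2}}$, the first term on the right, noting that $\slashed{\nabla} r = 0$, is exactly the spherical divergence contribution $\slashed{\nabla}^A K_A$ of $\textrm{Div }K$ coming from the $e_A$-component of $K$. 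In the second term I would commute $r\slashed{\nabla}$ past $\partial_r$ (by the paper's stated convention), pull the angular-scalar $r\sqrt{1+r^2}$ through $r\slashed{\nabla}$, and apply the constraint, reducing the whole expression to $-\tfrac{2H_\r}{\sqrt{1+r^2}}\partial_r\!\left[(1+r^2)\partial_r\phi\right]$. A second integration by parts in $r$, using
$\partial_r\!\bigl(\tfrac{2H_\r}{\sqrt{1+r^2}}\bigr) = \tfrac{2\partial_r H_\r}{\sqrt{1+r^2}} - \tfrac{2rH_\r}{(1+r^2)^{3/2}}$ together with $\partial_r H_\r = \tfrac{\partial_r\phi}{r^2} - \tfrac{2\phi}{r^3}$, splits this into a total $r$-derivative, the non-negative bulk $\tfrac{2\sqrt{1+r^2}}{r^2}(\partial_r\phi)^2$, and a collection of lower-order cross terms of type $\phi\partial_r\phi$ and $\phi^2$. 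The two-piece form of $K^\r$ is engineered precisely so that the radial total derivative, together with the explicit $-\tfrac{r^3}{1+r^2}H_\r^2 e_\r$ correction, reassemble into the $\tfrac{1}{r^2}\partial_r(r^2\sqrt{1+r^2}K^\r)$ contribution to $\textrm{Div }K$ via \eq{vector divergence}, leaving a residue of exactly $\tfrac{r^2}{(1+r^2)^{3/2}}H_\r^2$.

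The main technical obstacle is the bookkeeping of this last step: verifying that the cross terms $\tfrac{\phi\partial_r\phi}{r\sqrt{1+r^2}}$ and $\tfrac{\sqrt{1+r^2}\phi\partial_r\phi}{r^3}$ generated by the Leibniz expansion match exactly those produced by applying $\partial_r$ to the two pieces of $K^\r$, so that only the advertised bulk remains. This is routine but unforgiving, relying on the identities $\partial_r\!\bigl(\tfrac{\sqrt{1+r^2}}{r}\bigr) = -\tfrac{1}{r^2\sqrt{1+r^2}}$ and $\partial_r\!\bigl(\tfrac{r}{\sqrt{1+r^2}}\bigr) = \tfrac{1}{(1+r^2)^{3/2}}$, together with the precise coefficient $\tfrac{r^3}{1+r^2}$ of the $H_\r^2$ correction in $K^\r$, which is calibrated for exactly this cancellation.
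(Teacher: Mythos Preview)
Your proposal is correct and is essentially the same computation as the paper's proof, just organized in reverse. The paper computes $\textrm{Div } K$ directly from \eq{vector divergence} (where the angular--radial cross terms cancel), reads off that $\alpha = \textrm{Div } K - 2\sqrt{1+r^2}\,\partial_r\!\bigl(\tfrac{r^2 H_\r}{\sqrt{1+r^2}}\bigr)\slashed{\nabla}^B H_B + \tfrac{1}{r^2}\partial_r\!\bigl(\tfrac{r^5}{\sqrt{1+r^2}}H_\r^2\bigr)$, and then substitutes the constraint for $\slashed{\nabla}^B H_B$ to reduce the remainder to the two advertised bulk terms; you instead start from $\alpha$, perform the angular and then radial integrations by parts, and assemble $K$ along the way. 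The substance---one angular IBP, one radial IBP, one use of \eq{consH}---is identical.
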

\begin{proof}
See Appendix \ref{maxwell cross proof}.
\end{proof}

%

Now consider the Maxwell equation \eq{evolEA}:
\be
\frac{r}{\sqrt{1+r^2}} \partial_t  E_A = \epsilon_A{}^B \left[ \partial_r \left( r \sqrt{1+r^2} H_B \right) - r\slashed{\nabla}_B H_\r \right],
\ee
squaring this and multiplying by $(1+r^2)^{-\frac{1}{2}}$, we have
\bea \nonumber
\frac{r^2 \abs{\p_t E_A}^2}{\left({1+r^2}\right)^{\frac{3}{2}}} &=& \frac{1}{\sqrt{1+r^2}} \left[ \abs{r \slashed{\nabla}_A H_\r}^2 + \abs{\frac{\partial }{\partial r}\left ( r \sqrt{1+r^2} H_A \right)}^2   \right] \\&& -\frac{2}{\sqrt{1+r^2}}  \partial_r \left( r \sqrt{1+r^2} H_A \right) r\slashed{\nabla}^A H_\r  \label{elliptic}.
\eea

From here we readily conclude:
\begin{Theorem}[Higher order estimates]\label{derivative decay}
Suppose that $F$ is a solution of Maxwell's equations, as in Theorem \ref{well posed}. Then there exists a constant $C>0$, independent of $T_1$ and $T_2$ such that we have
\begin{align*}
\int_{S_{[T_1,T_2]}}&\Bigg \{  \frac{1}{\sqrt{1+r^2}} \left[ \abs{r \slashed{\nabla}_A H_\r}^2 + \abs{\frac{\partial }{\partial r}\left ( r \sqrt{1+r^2} H_A \right)}^2  + \abs{r \slashed{\nabla}_A H_B}^2 \right] \\& + 2\frac{\sqrt{1+r^2}}{r^2} \abs{\frac{\partial }{\partial r}\left ( r^2H_\r \right)}^2  + H \leftrightarrow E  \Bigg \} r^2 dt dr d\omega \\
& \leq C  \int_{\Sigma_{T_1}} \left(  \abs{E}^2 + \abs{H}^2 + \abs{\p_t E}^2  + \abs{\p_t H}^2 \right) r^2 dr d\omega.
\end{align*}
\end{Theorem}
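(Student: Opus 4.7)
The strategy is to integrate the pointwise identity \eq{elliptic} (multiplied by $r^2$) over $S_{[T_1, T_2]}$ against $dt\, dr\, d\omega$, and to absorb the resulting cross term via Lemma \ref{maxwell cross}, mirroring the argument of Proposition \ref{prop:elliw} in the wave case. The integrated left-hand side is $\int \frac{r^4 |\p_t E_A|^2}{(1+r^2)^{3/2}} dt\, dr\, d\omega$, which, after using $\frac{r^4}{(1+r^2)^{3/2}} \leq \frac{r^2}{\sqrt{1+r^2}}$, is dominated by the $T$-commuted integrated decay estimate \eq{max T ILED} and hence by $C\int_{\Sigma_{T_1}}|\p_t E|^2 r^2\, dr\, d\omega$. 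The first two (sign-definite) terms on the right of \eq{elliptic} give immediately the control of $|r\slashed{\nabla}_A H_\r|^2$ and $|\p_r(r\sqrt{1+r^2} H_A)|^2$ at the weight appearing in the theorem.

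For the cross term, Lemma \ref{maxwell cross} (multiplied by $r^2$) produces the good bulk quantity $2\sqrt{1+r^2}|\p_r(r^2 H_\r)|^2$ together with a positive lower-order piece $\frac{r^4}{(1+r^2)^{3/2}}|H_\r|^2$ and the divergence $r^2\,\textrm{Div } K$. Applying Lemma \ref{Divergence Theorem}, the fluxes of $K$ through $\Sigma_{T_1}, \Sigma_{T_2}$ vanish since $K$ has no $e_0$ component; at infinity, using the constraint \eq{consH} to eliminate $r\slashed{\nabla}^A H_A$ in favour of $\p_r(r^2 H_\r)$ inside $K^\r$, one obtains the exact algebraic identity
\be
K^\r \cdot r^2 \sqrt{1+r^2} = -\p_r\bigl(\sqrt{1+r^2}\,(r^2 H_\r)^2\bigr),
\ee
whose limit as $r \to \infty$ is $-(r^2 H_\r)^2|_{\scri}$ by the asymptotic control $\p_r(r^2 H_\r) = O(r^{-2})$ coming from Theorem \ref{well posed}. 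Thus $\left|\int \textrm{Div } K \, d\eta\right| \leq \int_{\tilde{\Sigma}_\infty^{[T_1, T_2]}}|r^2 H|^2 \, dt\, d\omega$, which is bounded by the initial energy via the flux estimate of Proposition \ref{maxwell ILED}. This yields the claimed bound on three of the four squared quantities in the theorem.

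The remaining term $|r\slashed{\nabla}_A H_B|^2$ is handled by the Bochner identity on the unit sphere: for any one-form $H_A$ on $\mathbb{S}^2$,
\be
\int_{\mathbb{S}^2}|r\slashed{\nabla}_A H_B|^2 d\omega + \int_{\mathbb{S}^2}|H_B|^2 d\omega = \int_{\mathbb{S}^2}\bigl(|r\slashed{\nabla}^A H_A|^2 + |r\epsilon^{AB}\slashed{\nabla}_A H_B|^2\bigr) d\omega.
\ee
The divergence piece satisfies $r\slashed{\nabla}^A H_A = -\frac{\sqrt{1+r^2}}{r}\p_r(r^2 H_\r)$ by \eq{consH}, which is already controlled by the previous step; the curl piece satisfies $r\epsilon^{AB}\slashed{\nabla}_A H_B = -\frac{r}{\sqrt{1+r^2}}\p_t E_\r$ by \eq{evolEr}, which is controlled by \eq{max T ILED}. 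Multiplying by the weight $\frac{r^2}{\sqrt{1+r^2}}$ and integrating in $r, t$ produces the fourth bound. The $E\leftrightarrow H$ half of the theorem follows by exactly the same argument under the duality $(E, H) \to (H, -E)$ of Maxwell's equations; summing completes the proof.

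I anticipate the main obstacle to be the boundary analysis of $\textrm{Div } K$ at $\scri$, specifically verifying that $\p_r(r^2 H_\r)$ decays fast enough as $r\to\infty$ for the exact-derivative identity above to produce a finite limit whose magnitude is dominated by the $|r^2 H|^2$ flux of Proposition \ref{maxwell ILED}. This check hinges on the asymptotic expansions asserted in Theorem \ref{well posed}; a weakening of those decay rates would force an additional absorption argument using the dissipative boundary condition \eq{dcmax}.
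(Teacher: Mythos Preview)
Your proposal is correct and follows essentially the same route as the paper: integrate \eq{elliptic} against $d\eta$, control the left side by \eq{max T ILED}, handle the cross term via Lemma \ref{maxwell cross} with the surface contribution at $\scri$ reducing to $\abs{r^2 H_\r}^2$ (controlled by Proposition \ref{maxwell ILED}), and recover $\abs{r\slashed{\nabla}_A H_B}^2$ from a div--curl elliptic estimate on the sphere using \eq{consH} and \eq{evolEr}. Your exact-derivative rewriting of $K^\r r^2\sqrt{1+r^2}$ and the use of the Bochner \emph{identity} (in place of the paper's elliptic \emph{inequality} from \cite{ChrKla}) are cosmetic refinements, not substantive differences.
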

\begin{proof}
We integrate the identity \eq{elliptic} over $S_{[T_1,T_2]}$. We control the left hand side by the time commuted integrated decay estimate \eqref{max T ILED} since $r^2(1+r^2)^{-1}\leq 1$. The right hand side, after making use of Lemma \ref{maxwell cross} and applying the divergence theorem will give us good derivative terms, a good zero'th order term which we can ignore and a surface term. The surface term at infinity gives\footnote{Notice that there is no surface term contribution at infinity arising from $(\f{2r^2}{\sqrt{1+r^2}} H_{\bar{r}}\slashed{\nabla}^A H_A)e_{\bar r}$ due to the decay of the angular derivative of $H$ in $r$ by the local well-posedness theorem (Theorem \ref{well posed}).} a term proportional to $\abs{r^2 H_\r}^2$, integrated over the cylinder, which we control with the estimate in Theorem \ref{maxwell ILED}. This immediately gives the result for all of the terms except the term $\abs{r\slashed{\nabla}_A H_B}^2$, which we obtain from a standard elliptic estimate on the sphere (see for instance Proposition 2.2.1 in \cite{ChrKla}) :
$$\int_{\mathbb S^2} \abs{\slashed{\nabla}_A H_B}^2\leq C \int_{\mathbb S^2} |\epsilon^{AB} \slashed{\nabla}_A H_B|^2+|\slashed{\nabla}^A H_A|^2,$$
after noticing that we already control $\epsilon^{AB} \slashed{\nabla}_A H_B$ and $\slashed{\nabla}^A H_A$ with a suitable weight from (\ref{evolEr}) and (\ref{consH}). Finally, we note that the estimate can be derived in an identical manner for $E$.
\end{proof}

\begin{Theorem}[Full integrated decay] \label{full integrated decay}
Suppose that $F$ is a solution of Maxwell's equations, as in Theorem \ref{well posed}. Then we have
\bean
&& \int_{S_{[T_1,T_2]} } \left[ \abs{E_\r}^2 + \abs{H_\r}^2+ \abs{E_A}^2+  \abs{H_A}^2  \right]r^2   dt dr d\omega \\ &&\qquad \leq  C \int_{\Sigma_{T_1}} \left( \abs{E_\r}^2 + \abs{E_A}^2+\abs{H_\r}^2 + \abs{H_A}^2 + \abs{\dot E_\r}^2 + \abs{\dot E_A}^2+\abs{\dot H_\r}^2 + \abs{\dot H_A}^2  \right) r^2 dr d\omega
\eean
for some $C>0$ independent of $T_1$ and $T_2$.
\end{Theorem}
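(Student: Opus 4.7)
The plan is to mirror the endpoint of the wave-equation argument (Theorem \ref{elliptic+hardy wave}): combine the degenerate integrated-decay estimate of Proposition \ref{maxwell ILED} near the origin with a Hardy-type inequality in the asymptotic region, the latter powered by the higher-order control supplied by Theorem \ref{derivative decay}.

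First I would introduce a smooth cutoff $\chi(r)$ vanishing for $r\leq 1$ and equal to $1$ for $r\geq 2$, and split each field component into a compactly-supported piece and a tail. On the bounded region $r\leq 2$ the weight $(1+r^2)^{-1/2}$ is bounded below, so Proposition \ref{maxwell ILED} already controls $\int r^2(|E|^2+|H|^2)\, dr d\omega dt$ there in terms of the data on $\Sigma_{T_1}$, with no derivative loss. All the work is therefore concentrated in the tail.

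In the tail, I would apply Lemma \ref{wave Hardy} with $a=1$ to the primitives $\chi\, r^2 E_{\r}$ and $\chi\, r\sqrt{1+r^2}\, E_A$, and to their $H$-analogues. The required vanishing at $r=1$ is supplied by the cutoff, and the decay at infinity (i.e.~$|f|^2 r^{-1}\to 0$) follows from the asymptotics in Theorem \ref{well posed}. The Hardy inequality then gives
\begin{align*}
\int_{r\geq 2} r^2 |E_{\r}|^2\,dr &\;\lesssim\; \int_{r\geq 1} \bigl|\partial_r(r^2 E_{\r})\bigr|^2 dr \; + \; \text{(cutoff commutator)}, \\
\int_{r\geq 2} r^2 |E_A|^2\,dr &\;\leq\; \int_{r\geq 1} (1+r^2)|E_A|^2 dr \;\lesssim\; \int_{r\geq 1} \bigl|\partial_r(r\sqrt{1+r^2}\,E_A)\bigr|^2 dr \; + \; \text{(cutoff commutator)},
\end{align*}
and analogously for $H$. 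The cutoff commutators are supported on $1\leq r\leq 2$ and are absorbed by Proposition \ref{maxwell ILED}. On $r\geq 1$ the weights $\sqrt{1+r^2}$ and $r^2/\sqrt{1+r^2}$ appearing in Theorem \ref{derivative decay} are bounded below, so after integrating in $t$ and $\omega$ the dominant derivative terms on the right are precisely what Theorem \ref{derivative decay} controls.

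The appearance of $|\dot E|^2$, $|\dot H|^2$ on the final right-hand side is inherited from Theorem \ref{derivative decay}, whose proof commutes with the Killing field $T$ and invokes the time-commuted integrated decay estimate \eqref{max T ILED}; this is exactly the one-derivative loss that Theorem \ref{theo:gb} predicts to be unavoidable. Beyond the book-keeping of cutoff commutators and a check that the asymptotics of Theorem \ref{well posed} are strong enough to kill Hardy's boundary term at infinity --- both routine --- I do not expect any further obstacle, and the main structural work has already been done in the elliptic identity of Lemma \ref{maxwell cross}.
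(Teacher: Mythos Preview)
Your proposal is correct and follows essentially the same approach as the paper: the paper's proof simply says to combine Theorem~\ref{derivative decay} with the Hardy estimate of Lemma~\ref{wave Hardy} (with $a=1$) via a cutoff, ``in much the same way as for the spin 0 problem.'' You have spelled out explicitly the choice of primitives $r^2 E_{\r}$ and $r\sqrt{1+r^2}\,E_A$ for Hardy, the localisation via $\chi$, and the absorption of commutator terms by Proposition~\ref{maxwell ILED}, which is exactly the intended argument.
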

\begin{proof}
We combine the result of Theorem \ref{derivative decay} with the Hardy estimates of Lemma \ref{wave Hardy} with $a=1$ to improve the weights near infinity in the integrated decay estimates, making use of a cut-off in much the same way as for the spin 0 problem.
\end{proof}

\subsection{Proof of Theorem \ref{wave full decay} for Spin 2} \label{sec:bd}
As a useful piece of notation we first introduce the trace-free part of $E_{AB}$ by
\be
\hat{E}_{AB} = E_{AB} - \frac{1}{2} \delta_{AB} E_C{}^C = E_{AB} + \frac{1}{2} \delta_{AB} E_{\r\r},
\ee
and similarly for $\hat{H}_{AB}$. The boundary conditions of Theorem \ref{spin2 well posedness} may then be conveniently expressed as
\be
r^3 \left( \hat{E}_{AB} + \epsilon_{(A}{}^{C} \hat{H}_{B) C}\right) \to 0, \qquad \textrm{ as } r \to \infty \, ,
\ee
and we also have $|E_{AB}|^2 = |\hat{E}_{AB}|^2 + \frac{1}{2} |E_{\bar{r} \bar{r}}|^2$.
We will first prove the boundedness and the degenerate integrated decay statement of the main theorem.\footnote{As we will see in the proof, the  key difference to the case of the wave- and Maxwell's equations is that here we will have to establish both these estimates at the same time!}
\begin{Proposition}[Boundedness of energy and integrated decay estimate with loss]\label{weyl derivative decay}
Let $T_2>T_1$. Suppose $E_{ab}$ and $H_{ab}$ are solutions to the spin $2$ equations subject to \eqref{dcbia} as in Theorem \ref{spin2 well posedness}, there exists a constant $C>0$, independent of $T_1$ and $T_2$ such that we have
\begin{align}
\int_{S_{[T_1, T_2]}}&\left\{ E_{ab}E^{ab} +H_{ab} H^{ab} \right\} r^2\sqrt{1+r^2} dt dr d\omega + \int_{\Sigma_{T_2}}\left\{ E_{ab}E^{ab} +H_{ab} H^{ab} \right\} r^2(1+r^2)  dr d\omega\nonumber \\ \label{spin2 Morawetz}
&\quad   + \int_{\tilde{\Sigma}_\infty^{[T_1, T_2]}}\left\{ E_{ab} E^{ab} + H_{ab}  H^{ab} \right\} r^6 dt d\omega
\\& \qquad \leq C  \int_{\Sigma_{T_1}}\left\{ E_{ab}E^{ab} +H_{ab} H^{ab} \right\} r^2(1+r^2)  dr d\omega. \nonumber
\end{align}
\end{Proposition}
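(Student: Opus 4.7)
The plan is to apply Lemma \ref{Divergence Theorem} to a single Bel-Robinson current which captures both the boundedness and the degenerate integrated decay statement simultaneously. The key multiplier is the uniformly timelike vector field
\[
V := \sqrt{3}\, T + X,
\]
for which $g(V,V) = -(3 + 2r^2)$ and $-g(V, e_0)/|V|_g = \sqrt{3(1+r^2)/(3+2r^2)}$ lies in the bounded range $[1,\sqrt{3/2}]$. Consider the current $P^a := Q^{abcd} V_b V_c V_d$. Since $T$ is Killing, ${}^V\pi = {}^X\pi$, and the total trace-freeness of $Q$ annihilates the $g\sqrt{1+r^2}$ piece of (\ref{deftX}), leaving
\[
\nabla_a P^a \;=\; 3\,{}^V\pi_{ab}\, Q^{abcd} V_c V_d \;=\; \frac{3}{\sqrt{1+r^2}}\, Q(e_0, e_0, V, V).
\]
By Lemma \ref{BR DEC} applied with $t_1 = e_0$, $t_2 = V/|V|_g$, we have pointwise comparabilities $P^a n_a = Q(e_0, V, V, V) \sim (|E|^2+|H|^2)|V|_g^3$ and $Q(e_0,e_0,V,V) \sim (|E|^2+|H|^2)|V|_g^2$, with $Q_{0000} \sim |E|^2 + |H|^2$. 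Together with $|V|_g^2 \sim 1+r^2$ and the surface/volume measures, the $\Sigma_t$-flux yields $\int(|E|^2+|H|^2)r^2(1+r^2)\,drd\omega$ and the bulk integral $\int \nabla_a P^a\,d\eta$ yields $\int (|E|^2+|H|^2)\,r^2\sqrt{1+r^2}\,dt\,dr\,d\omega$, exactly matching the first two terms of (\ref{spin2 Morawetz}).

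The delicate step is the flux on $\tilde\Sigma_\infty$. Pass to the null frame $L = e_0 + e_{\bar r}$, $\underline{L} = e_0 - e_{\bar r}$, so that $e_{\bar r} = (L-\underline L)/2$ and $V/r \to a L + b\underline L$ as $r\to\infty$ with $a = (\sqrt 3+1)/2$, $b = (\sqrt 3-1)/2$, both strictly positive. Expanding $-Q(e_{\bar r}, V, V, V)$ with the standard full symmetry of $Q$ and collecting like terms in $(L,\underline L)$-components, one obtains
\[
-\frac{P^a m_a}{r^3} \;\to\; \frac{1}{2}\Big[{-}a^3\,Q_{LLLL} + a^2(2-\sqrt 3)\,Q_{LLL\underline L} + 3ab\,Q_{LL\underline L\underline L} + b^2(\sqrt 3+2)\,Q_{L\underline L\underline L\underline L} + b^3\,Q_{\underline L\underline L\underline L\underline L}\Big].
\]
All coefficients are strictly positive except that of $Q_{LLLL} = 2|\alpha|^2$, where $\alpha_{AB} := W(L,e_A,L,e_B)$. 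The dissipative boundary condition (\ref{dcbia}) is precisely the statement $r^3\alpha_{AB}\to 0$ (up to constants), so the single wrong-sign contribution is eliminated on the boundary. The remaining positive combination of null components, multiplied by the measure $r^2\sqrt{1+r^2}\,dt\,d\omega$, then bounds $\int r^6(|E|^2 + |H|^2)\,dt\,d\omega$ from below, because with $\alpha = 0$ at infinity $|E|^2+|H|^2$ is itself a positive combination of $|\underline\alpha|^2, |\beta|^2, \rho^2, \sigma^2, |\underline\beta|^2$.

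Assembling, Lemma \ref{Divergence Theorem} applied to $P^a$ over $S_{[T_1,T_2]}$ gives
\[
\int_{\Sigma_{T_2}} P^a n_a\,dS \;+\; \int_{S_{[T_1,T_2]}} \nabla_a P^a\,d\eta \;+\; \Big({-}\!\lim_{r\to\infty}\!\int_{\tilde\Sigma_r^{[T_1,T_2]}} P^a m_a\,dS\Big) \;=\; \int_{\Sigma_{T_1}} P^a n_a\,dS,
\]
and the sign analysis above shows each LHS term is coercive with the weights appearing in (\ref{spin2 Morawetz}), while the upper bound in Lemma \ref{BR DEC} converts the RHS into $\lesssim \int_{\Sigma_{T_1}}(|E|^2+|H|^2)\,r^2(1+r^2)\,dr\,d\omega$. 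The principal obstacle is the null-frame boundary computation above: one must both verify the sign pattern of the five coefficients and check that the positive combination thus obtained genuinely controls $|E|^2+|H|^2$ once $\alpha$ is killed. The coefficient $\sqrt 3$ in the definition of $V$ is calibrated precisely for this — it is the minimal value keeping $V$ uniformly timelike (required for $P^a n_a \geq 0$) while producing $a, b > 0$ in the null decomposition at infinity (required for all non-$\alpha$ coefficients to have the right sign).
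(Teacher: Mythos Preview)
Your approach is essentially the paper's: your multiplier $V=\sqrt{3}\,T+X$ is exactly $\sqrt{3}$ times the paper's $Y=T+\tfrac{1}{\sqrt{3}}X$, so the Bel--Robinson current, the bulk identity, and the $\Sigma_t$--fluxes agree up to an overall constant, and Lemma~\ref{BR DEC} is invoked in the same way.

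The one genuine presentational difference is the boundary analysis. The paper stays in the orthonormal frame and expands $Q(e_{\bar r},Y,Y,Y)$ in the five components $Q_{\bar r000},Q_{\bar r\bar r00},Q_{\bar r\bar r\bar r0},Q_{\bar r\bar r\bar r\bar r}$, then uses an explicit algebraic identity to rewrite the sum $Q_{\bar r000}+Q_{\bar r\bar r\bar r0}$ so that the boundary condition~(\ref{dcbia}) kills the bad piece. You instead pass to the null frame $L,\underline L$ and observe that the single wrong-sign coefficient multiplies $Q_{LLLL}=2|\alpha|^2$, with $\alpha_{AB}=W(L,e_A,L,e_B)$ proportional to $\hat E_{AB}+\epsilon_{(A}{}^C\hat H_{B)C}$; the boundary condition then eliminates exactly this term. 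This is arguably the cleaner way to see \emph{why} the factor $\sqrt 3$ works. One small correction to your closing remark: $\sqrt 3$ is not the minimal value making $V$ uniformly timelike (any $c\ge 1$ in $cT+X$ does that); what is actually being calibrated is the window $1<c<2$ needed so that both $b=(c-1)/2>0$ and the $Q_{LLL\underline L}$--coefficient $a^2(a-3b)=a^2(2-c)>0$.
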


\begin{proof}
Let us now introduce the vector field
\be
Y = T + \frac{1}{\sqrt{3}} X.
\ee
We certainly have that $Y$ is timelike, since
\be
g(Y, Y) = -(1+r^2) + \frac{1}{3} r^2 = -1 -\frac{2}{3}r^2 <0.
\ee
Moreover, we have
\be
{}^Y \pi = \frac{1}{\sqrt{3}} \left( \frac{(e^0)^2}{\sqrt{1+r^2} }+ g \sqrt{1+r^2} \right)
\ee
The current that we shall integrate over $S_{[T_1, T_2]}$ is
\be
J^a = Q^a_{bcd} Y^b Y^c Y^d. 
\ee
Clearly, we have by Lemma \ref{BR DEC} that the flux of $J$ through a spacelike surface with respect to the future directed normal will be positive. Moreover, since $Q$ is trace-free and divergence free, $\textrm{Div } J$ will also be positive. To establish a combined energy and integrated decay estimate, we simply have to verify that the surface term on $\scri$ has a definite sign (and check the weights appearing in the various integrals). We shall require some components of $Q$, which are summarised in the following Lemma:

\begin{Lemma}[Components of $Q_{abcd}$]
With respect to the orthonormal basis in which we work, we have
\begin{align*}
Q_{0000} &=  \abs{E_{AB}}^2 + \abs{H_{AB}}^2 + 2 \abs{E_{A\r}}^2 + 2 \abs{H_{A\r}}^2 +\abs{E_{\r\r}}^2 + \abs{H_{\r\r}}^2  \\
Q_{000\r}&= 2\left( E_{AC}\epsilon^{AB}H_{B}{}^C+ E_{A\r}\epsilon^{AB}H_{B\r} \right) \\
Q_{00\r\r}&=  \abs{E_{AB}}^2 + \abs{H_{AB}}^2 - \abs{E_{\r\r}}^2 - \abs{H_{\r\r}}^2 \\
Q_{0\r\r\r} &= 2\left( E_{AC}\epsilon^{AB}H_{B}{}^C- E_{A\r}\epsilon^{AB}H_{B\r} \right) \\
Q_{\r\r\r\r} &= \abs{E_{AB}}^2 + \abs{H_{AB}}^2 - 2 \abs{E_{A\r}}^2 - 2 \abs{H_{A\r}}^2 +\abs{E_{\r\r}}^2 + \abs{H_{\r\r}}^2
\end{align*}
\end{Lemma}

With our definition of the current $J$ above, we apply the divergence theorem, Lemma \ref{Divergence Theorem}. We now verify that all the terms have a definite sign.
\begin{enumerate}[a)]
\item\textbf{Fluxes through $\Sigma_t$} We compute
\be
\int_{\Sigma_t} J_a n^a dS_{\Sigma_t} = \int_{\Sigma_t} Q(e_0, Y, Y, Y)  \frac{r^2}{\sqrt{1+r^2}} dr d\omega .
\ee
Now, defining $\hat{Y} = (1+ \frac{2}{3}r^2)^{-\frac{1}{2}}Y$, we have that 
\be
-g(e_0, \hat{Y}) =  \sqrt{\frac{1+r^2}{1+\frac{2}{3}r^2}} \leq \sqrt{\frac{3}{2}}.
\ee
so by Lemma \ref{BR DEC} we deduce\footnote{We write $f \sim g$ to mean that there exists a numerical constant $C>0$ such that $C^{-1} f \leq g \leq C f$}:
\be
 Q(e_0, \hat{Y}, \hat{Y}, \hat{Y}) \sim  Q_{0000},
\ee
so that
\be
Q(e_0, Y, {Y}, {Y}) \sim \left(1+r^2\right)^{\frac{3}{2}} Q_{0000},
\ee
and hence
\be
\int_{\Sigma_t} J_a n^a dS_{\Sigma_t} \sim \int_{\Sigma_t}  \left(\abs{E_{ab}}^2 + \abs{H_{ab}}^2 \right) r^2 (1+r^2) dr.
\ee
\item \textbf{Bulk term} We have
\be
\textrm{Div} J = 3Q_{abcd}({}^Y\pi)^{ab} Y^c Y^d = \frac{\sqrt{3}}{\sqrt{1+r^2}} Q(e_0, e_0, Y, Y).
\ee
Again applying Lemma \ref{BR DEC} to $\hat{Y}$ and rescaling, we have
\be
\textrm{Div} J \sim  \left(1+r^2\right)^{\frac{1}{2}} Q_{0000}.
\ee
As a result, we have
\be
\int_{S_{[T_1, T_2]}} \textrm{Div }J d\eta \sim \int_{S_{[T_1, T_2]}}  \left(\abs{E_{ab}}^2 + \abs{H_{ab}}^2 \right) r^2 \sqrt{1+r^2} dt dr d\omega.
\ee
\item \textbf{Boundary term at $\scri$} Finally, we consider the flux through surfaces $\tilde{\Sigma}_r$. We have
\be
\int_{\tilde{\Sigma}_r} J_a m^a dS_{\tilde{\Sigma}_r} = \int_{\tilde{\Sigma}_r} Q(e_{\r}, Y, Y, Y) r^2 \sqrt{1+r^2} dt d\omega,
\ee
Now, inserting our expression for $Y$ and expanding, we have
\begin{equation*}
\begin{split}
&Q(e_{\r}, Y, Y, Y) \\
=&\left(1+r^2\right)^{\frac{3}{2}}\left( Q_{\r000} +  \sqrt{3} \frac{r}{\sqrt{1+r^2}} Q_{\r\r00} + \frac{r^2}{1+r^2} Q_{\r\r\r 0} + \frac{1}{3 \sqrt{3}}  \frac{r^3}{\left(1+r^2\right)^{\frac{3}{2}}} Q_{\r\r\r\r} \right)
\end{split}
\end{equation*}
so that
\begin{align*}
\lim_{r \to \infty} Q(e_{\r}, Y, Y, Y) r^2 \sqrt{1+r^2}  = \lim_{r \to \infty} r^6 \left( Q_{\r000} +  \sqrt{3} Q_{\r\r00} +  Q_{\r\r\r 0} + \frac{1}{3 \sqrt{3}} Q_{\r\r\r\r} \right).
\end{align*}
Let us consider the first and third terms on the right hand side\footnote{The factor $\frac{1}{\sqrt{3}}$ appearing in the definition of $Y$ was chosen to arrange a cancellation between these terms.}. We have:
\begin{align*}
Q_{\r000} + Q_{\r\r\r 0} &= 4  E_{AC}\epsilon^{AB}H_{B}{}^C \\
&= 2 \left(E_{AB} -\frac{1}{2} \delta_{AB} E_C{}^C + \epsilon_{(A}{}^CH_{B)C} \right) \left(\hat{E}^{AB} -\frac{1}{2} \delta^{AB} E_D{}^D + \epsilon^{(A}{}_D H^{B)D} \right) \\
&\qquad - 2 \abs{E_{AB}}^2 - 2 \abs{H_{AB}}^2
\end{align*}
so that
\be
\lim_{r \to \infty}  r^6 \left( Q_{\r000}  +  Q_{\r\r\r 0} \right) = -2 \lim_{r \to \infty}  r^6 \left(\abs{E_{AB}}^2 + \abs{H_{AB}}^2  \right)
\ee
where we make use of the boundary condition. Taking this together with the expressions for $Q_{\r\r00}$, $Q_{\r\r\r\r}$ we have:
\begin{align*}
\lim_{r \to \infty} Q(e_{\r}, Y, Y, Y) r^2 \sqrt{1+r^2} = -\lim_{r \to \infty}  r^6 \Bigg [& \left(2 - \frac{10}{3\sqrt{3}}\right)\left(\abs{E_{AB}}^2 + \abs{H_{AB}}^2  \right) + \frac{8}{3\sqrt{3}} \left(\abs{E_{\r\r}}^2 + \abs{H_{\r\r}}^2  \right) \\& + \frac{2}{3\sqrt{3}} \left(\abs{E_{A\r}}^2 + \abs{H_{A\r}}^2\right) \Bigg],
\end{align*}
so that
\be
\lim_{r \to \infty} Q(e_{\r}, Y, Y, Y) r^2 \sqrt{1+r^2} \sim -\lim_{r \to \infty}  r^6\left( \abs{E_{ab}}^2 + \abs{H_{ab}}^2\right)
\ee
and
\be
\int_{\tilde{\Sigma}_\infty^{[T_1, T_2]}} J_a m^a dS_{\tilde{\Sigma}_r} \sim - \int_{\tilde{\Sigma}_\infty^{[T_1, T_2]}} \lim_{r \to \infty}  r^6\left( \abs{E_{ab}}^2 + \abs{H_{ab}}^2\right) dt d\omega.
\ee
\end{enumerate}
Taking all of this together, we arrive at the result.
\end{proof}

As we did in the Maxwell case, we shall now use the structure of the equations to allow us to establish (weighted) integrated decay estimates for all derivatives of the fields $E, H$. To control time derivatives we can simply commute with the Killing field $T$ and apply Proposition  \ref{weyl derivative decay}. To control spatial derivatives we replace the time derivatives by the equations of motion and integrate the resulting cross terms by parts making use also of the constraints equations. The remarkable fact is that in the process we only see spacetime terms with good signs and lower order surface terms that we already control by the estimate before commutation.

We will note the following useful result, which allows the cross term to be integrated by parts:
\begin{Lemma}\label{bianchi cross}
Let $K$ be the vector field
\begin{equation} \label{eq:Yid}
K:=\left(2 r^3 \slashed{\nabla}^C H_{BC} H^B{}_\r - \frac{r^4}{\sqrt{1+r^2}} \abs{H_{B\r}}^2 \right)e_\r - \frac{2 r^2 H_{B\r}}{\sqrt{1+r^2}} \frac{\partial}{\partial r}\left[ r(1+r^2) H^{BC}\right] e_C. 
\end{equation}
If $H$ satisfies the constraint equation \eq{ConsHAB}, we have the identity
\be
\textrm{Div }K = -\frac{2 r^2}{\sqrt{1+r^2}} \slashed{\nabla}^C H_{B\r}  \frac{\partial}{\partial r}\left[ r(1+r^2) H^B{}_C\right] - 2\frac{1+r^2}{r^3} \abs{\partial_r \left(r^3 H_{B\r}\right)}^2 \, .
\ee
\end{Lemma}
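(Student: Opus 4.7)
My plan is to compute $\textrm{Div}\,K$ directly from the divergence formula \eq{vector divergence}. Since $K$ has no time component, this reduces to
\[
\textrm{Div}\,K = \frac{1}{r^2}\frac{\partial}{\partial r}\!\left(r^2\sqrt{1+r^2}\,K^\r\right) + \slashed{\nabla}^C K_C.
\]
I would compute the two pieces in turn, using the Leibniz rule, the fact that $r\slashed{\nabla}$ commutes with $\partial_r$, and the constraint \eq{ConsHAB} rewritten as
\[
\slashed{\nabla}^C H^B{}_{C} = -\frac{\sqrt{1+r^2}}{r^3}\frac{\partial}{\partial r}\!\left(r^3 H^B{}_\r\right).
\]

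In the angular divergence $\slashed{\nabla}^C K_C$, the Leibniz expansion splits into two summands. In the first, $\slashed{\nabla}^C$ falls on $H_{B\r}$, and, after writing $H_{B\r}=r^{-3}F_B$ with $F_B:=r^3 H_{B\r}$, this is exactly the cross term on the right-hand side of the claimed identity. In the second, $\slashed{\nabla}^C$ acts on $H^B{}_{C}$, which upon using the constraint and commuting $\slashed{\nabla}^C$ past $\partial_r$ becomes a scalar expression in $F$ built from $\partial_r F^B$ and $\partial_r^2 F^B$. Similarly, substituting the constraint into the first piece of $K^\r$ converts $2r^3 \slashed{\nabla}^C H_{BC}\,H^B{}_\r$ into a multiple of $(1+r^2)r^{-1}\,F^B\partial_r F_B$, so that after applying $\partial_r$ and dividing by $r^2$ the radial contribution also reduces to a scalar expression in $F$ and its radial derivatives.

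What remains is to verify that these combined expressions in $F$ collapse to $-2(1+r^2) r^{-3}\abs{\partial_r F}^2$. Three types of terms appear: second-derivative terms $F^B\partial_r^2 F_B$, which should cancel exactly between the angular and radial contributions; first-derivative terms $F^B\partial_r F_B$, whose cancellation is arranged by the correction term $-r^4(1+r^2)^{-1/2}\abs{H_{B\r}}^2$ in $K^\r$, whose coefficient is calibrated precisely for this purpose; and the pure quadratic $\abs{\partial_r F}^2$, surviving with the claimed weight. The main obstacle is purely the algebraic bookkeeping of the various powers of $r$ and $(1+r^2)$ as the constraint and Leibniz rule are applied; conceptually, the identity is just an integration-by-parts statement that uses the constraint to trade a divergence of $H^{BC}$ for a radial derivative of $r^3 H_{B\r}$, in close analogy with the Maxwell computation in Lemma \ref{maxwell cross}.
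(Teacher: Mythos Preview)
Your proposal is correct and follows essentially the same approach as the paper: both compute the divergence via \eq{vector divergence}, expand by Leibniz, commute $r\slashed{\nabla}$ past $\partial_r$, and invoke the constraint \eq{ConsHAB} to reduce everything to the cross term and the $|\partial_r(r^3 H_{B\r})|^2$ term. The only organizational difference is that the paper first keeps $\slashed{\nabla}^C H_{BC}$ symbolic, observes the exact cancellation of the two mixed-partial pieces, and only then substitutes the constraint into the surviving term; you instead substitute the constraint into $K^\r$ before differentiating and verify the resulting cancellations in the $F$-variables, which works equally well.
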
 
\begin{proof}
See Appendix \ref{bianchi cross proof}.
\end{proof}

\begin{Proposition} [Higher order estimates]\label{prop:elliptic2}
Let $T_2>T_1$. If $E_{ab}, H_{ab}$ solve the spin $2$ equations subject to the dissipative boundary condition \eqref{dcbia} as in Theorem \ref{spin2 well posedness}, then we have
\begin{align*}
& \int_{S_{[T_1, T_2]}} \Bigg\{  \frac{r^3}{1+r^2} \abs{\partial_r(r (1+r^2) H_{BC}) }^2 + \frac{1+r^2}{r} \abs{\partial_r \left(r^3 H_{B\r}\right)}^2 + (H \leftrightarrow E) \Bigg \} dr dt d\omega \\
& +\int_{S_{[T_1, T_2]}} \Bigg\{r^5 |\slashed{\nabla}_B H_{C\r}|^2 +r^5 |\slashed{\nabla}_A H_{BC}|^2  + (H \leftrightarrow E) \Bigg \} dr dt d\omega \\
& \leq C  \int_{\Sigma_{T_1}}\left\{ E_{ab}E^{ab} +H_{ab} H^{ab} + \dot{E}_{ab}\dot{E}^{ab} +\dot{H}_{ab} \dot{H}^{ab} \right\} r^2(1+r^2)  dr d\omega,
\end{align*}
for some $C>0$ independent of $T_1$ and $T_2$.
\end{Proposition}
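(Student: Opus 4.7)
The plan is to mirror the derivation of Theorem \ref{derivative decay} for the Maxwell field, with the modified Bianchi equations (Evol' $E_{AB}$) and (Evol' $H_{AB}$) in place of \eq{evolEA}, the constraints (Con $E_B$) and (Con $H_B$) in place of \eq{consH}, and Lemma \ref{bianchi cross} in place of Lemma \ref{maxwell cross}.

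I would first commute the full spin-2 system with the Killing vector $T=\partial_t$; since $T$ preserves both the modified equations and the boundary condition \eqref{dcbia}, Proposition \ref{weyl derivative decay} applied to $(\partial_t E_{ab},\partial_t H_{ab})$ controls $\int_{S_{[T_1,T_2]}}(|\partial_t E_{ab}|^2+|\partial_t H_{ab}|^2)\,r^2\sqrt{1+r^2}\,dt\,dr\,d\omega$ by the right-hand side of the proposition. Next I would square (Evol' $E_{AB}$), multiply by $r$, and integrate with the measure $r^2\,dr\,dt\,d\omega$ over $S_{[T_1,T_2]}$. The left-hand side becomes $\int\tfrac{r^5}{1+r^2}|\partial_t E_{AB}|^2\,dt\,dr\,d\omega$, which is dominated by the $T$-commuted estimate since $\tfrac{r^5}{1+r^2}\le r^2\sqrt{1+r^2}$ for all $r\ge 0$. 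On the right-hand side, the square produces the good quadratic terms $\tfrac{r^3}{1+r^2}|\partial_r(r(1+r^2)H_{BC})|^2$ and $r^5|\slashed{\nabla}_C H_{B\r}|^2$, together with a cross term of the schematic form $\tfrac{r^2}{\sqrt{1+r^2}}\,\slashed{\nabla}^C H_{B\r}\,\partial_r(r(1+r^2)H^B{}_C)$, with coefficients fixed by the symmetrization $_{(A|\cdots|B)}$ and the $\epsilon$-contractions.

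The cross term is precisely the one handled by Lemma \ref{bianchi cross}: the lemma identifies it (up to sign) with $\textrm{Div }K+2\tfrac{1+r^2}{r^3}|\partial_r(r^3 H_{B\r})|^2$. Integrating and applying the divergence theorem therefore contributes three pieces. First, the positive bulk term $\int\tfrac{1+r^2}{r}|\partial_r(r^3 H_{B\r})|^2\,dr\,dt\,d\omega$, which is exactly the second radial-derivative term required by the proposition. Second, a boundary contribution on $\Sigma_{T_1}\cup\Sigma_{T_2}$, bounded via Cauchy-Schwarz by the initial energy controlled by Proposition \ref{weyl derivative decay}. Third, a boundary term on $\tilde\Sigma_\infty^{[T_1,T_2]}$ which, once the $r\to\infty$ asymptotics of Theorem \ref{spin2 well posedness} are invoked, is controlled by the $r^6$-weighted boundary flux already produced on the right of the estimate of Proposition \ref{weyl derivative decay}.

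For the remaining bound $r^5|\slashed{\nabla}_A H_{BC}|^2$, I would invoke an elliptic estimate on the round sphere $\mathbb S^2$ for the symmetric $(0,2)$-tensor $H_{BC}$. Its divergence $\slashed{\nabla}^C H_{BC}$ is controlled directly by the constraint (Con $H_B$), which equates it with $-\tfrac{\sqrt{1+r^2}}{r^3}\partial_r(r^3 H_{B\r})$, a quantity just bounded, while its skew-symmetric ``curl'' is controlled through (Evol' $H_{A\r}$) combined with the $T$-commuted estimates of the first step. A standard div-curl inequality on $\mathbb S^2$ (cf.~Chapter~2 of \cite{ChrKla}) then closes the estimate fibrewise in $r$; multiplying by $r^5$ and integrating in $(t,r)$ yields the claimed bound. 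The $E\leftrightarrow H$ half of the proposition follows identically after exchanging (Evol' $E_{AB}$) and (Evol' $H_{AB}$). The main technical obstacle is the bookkeeping around the symmetrizations and $\epsilon$-contractions in the modified equations: one must check that the cross term generated is \emph{exactly} of the form treated by Lemma \ref{bianchi cross}, and that the boundary contribution of $\textrm{Div }K$ at $\scri$ matches in sign and weight the flux term already absorbed by Proposition \ref{weyl derivative decay}.
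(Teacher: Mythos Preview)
Your approach is essentially the same as the paper's: commute with $T$, square (Evol' $E_{AB}$) with weight $r^3\,dt\,dr\,d\omega$, use Lemma~\ref{bianchi cross} for the cross term, and finish with an elliptic estimate on $\mathbb S^2$. Two small corrections are in order. First, the vector field $K$ of Lemma~\ref{bianchi cross} has no $e_0$ component, so there are no flux terms on $\Sigma_{T_1}\cup\Sigma_{T_2}$; the only boundary contribution is at $\scri$, and it reduces to $-\int_{\tilde\Sigma_\infty}|r^3 H_{B\r}|^2\,dt\,d\omega$, controlled by Proposition~\ref{weyl derivative decay} as you say. Second, to control $\epsilon^{BC}\slashed{\nabla}_B H_{CA}$ in the elliptic step you need (Evol' $E_{A\r}$), not (Evol' $H_{A\r}$); the latter gives the curl of $E$, not of $H$.

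On the point you flag as the main technical obstacle: the paper resolves the symmetrization bookkeeping by first computing the antisymmetric part of the right-hand side of (Evol' $E_{AB}$). Using the trace constraint $H^A{}_A=-H_{\r\r}$ and (Con $H_\r$), one finds that contracting with $\epsilon^{AB}$ yields $\tfrac{2H_{\r\r}}{\sqrt{1+r^2}}$, so the equation can be rewritten without symmetrization as
\[
\frac{r}{\sqrt{1+r^2}}\,\partial_t E_{AB}
= \epsilon_A{}^C\Bigl[\tfrac{1}{\sqrt{1+r^2}}\partial_r\bigl(r(1+r^2)H_{BC}\bigr)-r\slashed{\nabla}_C H_{B\r}\Bigr]
-\frac{\epsilon_{AB}}{\sqrt{1+r^2}}H_{\r\r}.
\]
One then defines $X_{BC}:=\tfrac{1}{\sqrt{1+r^2}}\partial_r(r(1+r^2)H_{BC})-r\slashed{\nabla}_C H_{B\r}$ and squares $X_{BC}$ directly; the cross term produced is then literally the one in Lemma~\ref{bianchi cross}, and the extra $H_{\r\r}$ piece is harmlessly absorbed by Proposition~\ref{weyl derivative decay}.
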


\begin{proof}
Recall now \eq{EvolEAB2}:
\be
\frac{r}{\sqrt{1+r^2}} \frac{\partial E_{AB}}{\partial t} = \epsilon_{(A}{}^{C}\left[ \frac{1}{\sqrt{1+r^2}}\frac{\partial}{\partial r}\left(r (1+r^2) H_{B)C} \right) - r\slashed{\nabla}_{|C|} H_{B)\r} \right]
\ee
Consider
\begin{align*}
&\epsilon^{AB} \epsilon_{A}{}^{C}\left[ \frac{1}{\sqrt{1+r^2}}\frac{\partial}{\partial r}\left(r (1+r^2) H_{BC} \right) - r\slashed{\nabla}_{C} H_{B\r} \right] \\
&= \frac{1}{\sqrt{1+r^2}}\frac{\partial}{\partial r}\left(r (1+r^2) H{}^B{}_{B} \right) - r\slashed{\nabla}^B H_{B\r} \\
&= -\frac{1}{\sqrt{1+r^2}}\frac{\partial}{\partial r}\left(r^3 \frac{(1+r^2)}{r^2} H_{\r\r} \right) - r\slashed{\nabla}^B H_{B\r} \\
&= - \frac{\sqrt{1+r^2}}{r^2} \partial_r(r^3 H_{\r\r}) -  r\slashed{\nabla}^B H_{B\r} + \frac{2 H_{\r\r}}{\sqrt{1+r^2}} \\
&=  \frac{2 H_{\r\r}}{\sqrt{1+r^2}} \, .
\end{align*}
Now, since for any $2-$tensor $Z$ on $\mathbb S^2$ we have $Z_{[AB]} = \frac{1}{2} \epsilon_{AB}(\epsilon^{CD}Z_{CD})$, we deduce that if the constraints hold then \eq{EvolEAB2} may be re-written:
\be
\frac{r}{\sqrt{1+r^2}} \frac{\partial E_{AB}}{\partial t} = \epsilon_{A}{}^{C}\left[ \frac{1}{\sqrt{1+r^2}}\frac{\partial}{\partial r}\left(r (1+r^2) H_{BC} \right) - r\slashed{\nabla}_{C} H_{B\r} \right] - \frac{\epsilon_{AB}}{\sqrt{1+r^2}} H_{\r\r},
\ee
whence we deduce
\be
X_{AB}:= \frac{1}{\sqrt{1+r^2}}\frac{\partial}{\partial r}\left(r (1+r^2) H_{BC} \right) - r\slashed{\nabla}_{C} H_{B\r} =  \epsilon^A{}_C\frac{r}{\sqrt{1+r^2}} \frac{\partial E_{AB}}{\partial t} + \frac{\delta_{BC}}{\sqrt{1+r^2}} H_{\r\r} .
\ee
Now, using the second equality and applying the estimates in Proposition \ref{weyl derivative decay} for $H$ and the commuted quantity $\dot{E}$, we can verify that
\begin{align*}
\int_{S_{[T_1, T_2]}}& X_{AB}X^{AB}r^3 dt dr d\omega \\& \leq C \int_{S_{[T_1, T_2]}}\left\{ \abs{\dot{E}_{AB}}^2  + \abs{H_{\r\r}}^2 \right\} r^2 \sqrt{1+r^2} dt dr d\omega
 \\& \leq C  \int_{\Sigma_{T_1}}\left\{ E_{ab}E^{ab} +H_{ab} H^{ab} + \dot{E}_{ab}\dot{E}^{ab} +\dot{H}_{ab} \dot{H}^{ab} \right\} r^2(1+r^2)  dr d\omega. \nonumber
\end{align*}
We also have, however,
\begin{align*}
\int_{S_{[T_1, T_2]}}& X_{AB}X^{AB}r^3 dt dr d\omega \\&= \int_{S_{[T_1, T_2]}} \Bigg\{  \frac{r}{1+r^2} \abs{\partial_r(r (1+r^2) H_{BC}) }^2 + r^3 \abs{\slashed{\nabla}_{B} H_{C\r}}^2 \\
& \qquad - \frac{2 r^2}{\sqrt{1+r^2}} \slashed{\nabla}^C H_{B\r}  \frac{\partial}{\partial r}\left[ r(1+r^2) H^B{}_C\right]  \Bigg \} d\eta \\
&= \int_{S_{[T_1, T_2]}} \Bigg\{  \frac{r}{1+r^2} \abs{\partial_r(r (1+r^2) H_{BC}) }^2 + r^3 \abs{\slashed{\nabla}_{B} H_{C\r}}^2 \\
& \qquad +2 \frac{1+r^2}{r^3} \abs{\partial_r \left(r^3 H_{B\r}\right)}^2 \Bigg \} d\eta - \int_{\tilde{\Sigma}_\infty^{[T_1, T_2]}} \abs{r^3 H_{B\r}}^2 dt d\omega,
\end{align*}
where in the last step, we have used the result of Lemma \ref{bianchi cross} to replace the cross term with a good derivative term and a surface term. 

It remains to control the term $|\slashed{\nabla}_A H_{BC}|^2$. Notice that by \eqref{EvolEAr2} and \eqref{ConsHAB}, we have
$$r^2|\epsilon^{BC}\slashed{\nabla}_B H_{CA}|^2+r^2|\slashed{\nabla}^BH_{AB}|^2\leq C\left(\frac{r^2}{1+r^2} |\dot{E}_{A\r}|^2+\frac{1+r^2}{r^4} \left|\frac{\partial}{\partial r}\left(r^3 H_{A\r} \right)\right|^2+\frac{|H_{A\r}|^2}{1+r^2 }\right).$$

On the other hand, since
$$\epsilon^{A}{ }_D\epsilon^{BC}\slashed{\nabla}_BH_{CA}=\slashed{\nabla}^A H_{AD}-\slashed{\nabla}_DH_A{ }^A, $$
we can apply the standard elliptic estimate (see for instance Lemma 2.2.2 in \cite{ChrKla})
$$\int_{\mathbb S^2} |\slashed{\nabla}_AH_{BC}|^2 \leq C\int_{\mathbb S^2} \left(|\epsilon^{BC}\slashed{\nabla}_B H_{CA}|^2+|\slashed{\nabla}^BH_{AB}|^2+|\slashed{\nabla}_A H_B{ }^B|^2\right) $$
to obtain the desired bounds for $|\slashed{\nabla}_A H_{BC}|^2$.

This gives all the desired estimates for the derivatives of $H$. As in the Maxwell case, similar bounds for the derivatives of $E$ can be derived in an identical manner.
\end{proof}

%
%
Finally, much as in the Maxwell case, we apply the Hardy inequalities to establish integrated decay of the non-degenerate energy with the loss of a derivative:
\begin{Theorem}[Full integrated decay]\label{full decay}
Suppose that $W$ is Weyl tensor, satisfying the Bianchi equations with dissipative boundary conditions, as in Theorem \ref{spin2 well posedness}. Then there exists a constant $C>0$, independent of $T$ such that we have
\begin{align}
\int_{S_{[T_1, T_2]}}&\left\{ E_{ab}E^{ab} +H_{ab} H^{ab} \right\} r^2(1+r^2) dt dr d\omega \nonumber \\ 
& \qquad \leq C  \int_{\Sigma_{T_1}}\left\{ E_{ab}E^{ab} +H_{ab} H^{ab}+ \dot{E}_{ab}\dot{E}^{ab} +\dot{H}_{ab} \dot{H}^{ab} \right\} r^2(1+r^2)  dr d\omega. \nonumber
\end{align}
\end{Theorem}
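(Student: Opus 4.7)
The plan is to mimic the Maxwell argument (Theorem \ref{full integrated decay}), combining the higher-order bounds of Proposition \ref{prop:elliptic2} with the degenerate integrated decay of Proposition \ref{weyl derivative decay} through a Hardy-type inequality to upgrade the weight $r^2\sqrt{1+r^2}$ to $r^2(1+r^2)$ in every component of $E$ and $H$.

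Introduce a smooth cutoff $\chi(r)$ equal to $1$ for $r \geq 2$ and $0$ for $r \leq 1$. On the region $r \leq 2$ the weight $r^2(1+r^2)$ is comparable to $r^2\sqrt{1+r^2}$, so Proposition \ref{weyl derivative decay} immediately controls that part of the integral in terms of the initial energy. It therefore suffices to bound $\int_{S_{[T_1,T_2]}} \chi^2 \,|E_{ab}|^2 r^2(1+r^2)\,dt\,dr\,d\omega$ and the analogous integral for $H$, with commutator terms arising from $\chi$ also absorbed by Proposition \ref{weyl derivative decay}. We treat the six components $H_{BC}$, $H_{B\r}$, $H_{\r\r}$ and their $E$-analogues separately.

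For the components $H_{BC}$, apply Lemma \ref{wave Hardy} with $a=1$ to $f = \chi(r)\, r(1+r^2)\, H_{BC}$; since $|f|^2 \sim r^6 |H_{BC}|^2$ on $r\geq 2$ and $r^3/(1+r^2)\gtrsim 1$ there, we obtain
\be
\int \chi^2 |H_{BC}|^2 r^2(1+r^2)\, dr \lesssim \int \Bigl|\partial_r\bigl(r(1+r^2) H_{BC}\bigr)\Bigr|^2\,\frac{r^3}{1+r^2}\,dr + (\text{error terms supported in } 1\leq r\leq 2),
\ee
and the first term is controlled by Proposition \ref{prop:elliptic2}. For $H_{B\r}$, apply Lemma \ref{wave Hardy} with $a=1$ to $f = \chi(r)\, r^3 H_{B\r}$; the resulting bound by $\int |\partial_r(r^3 H_{B\r})|^2\,dr$ is dominated by the stronger quantity $\int \frac{1+r^2}{r}|\partial_r(r^3 H_{B\r})|^2\,dr$ controlled by Proposition \ref{prop:elliptic2}. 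Finally, for $H_{\r\r}$ we do not have a direct radial-derivative bound, so we invoke the constraint \eqref{ConsHAr}, which gives
\be
\partial_r(r^3 H_{\r\r}) = -\frac{r^3}{\sqrt{1+r^2}}\slashed{\nabla}^B H_{B\r}.
\ee
Applying Lemma \ref{wave Hardy} with $a=1$ to $f = \chi(r)\, r^3 H_{\r\r}$ yields
\be
\int \chi^2 |H_{\r\r}|^2 r^2(1+r^2)\,dr \lesssim \int \frac{r^6}{1+r^2}|\slashed{\nabla}^B H_{B\r}|^2\,dr + (\text{error terms}),
\ee
which is bounded by $\int r^5|\slashed{\nabla}_B H_{C\r}|^2\,dr$ from Proposition \ref{prop:elliptic2}. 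The identical argument, using \eqref{ConsEAr} in place of \eqref{ConsHAr} and \eqref{EvolEAB2}--\eqref{EvolEAr2} symmetry, yields the estimates for $E_{BC}$, $E_{B\r}$, $E_{\r\r}$.

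Summing these contributions over $t \in [T_1,T_2]$ and over the angular variables, controlling the cutoff error and the near-origin part by Proposition \ref{weyl derivative decay}, and bounding the right-hand side of Proposition \ref{prop:elliptic2} by the initial energy of the $T$-commuted solution yields the claimed inequality. The main technical points are the careful bookkeeping of weights so that each Hardy bound matches the available higher-order norm, and the use of the constraint equations to compensate for the absence of direct estimates on $\partial_r H_{\r\r}$ and $\partial_r E_{\r\r}$; otherwise the argument is a routine elaboration of the spin $1$ case.
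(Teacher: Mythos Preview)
Your argument is correct and follows the same template as the paper: cut-off plus Hardy (Lemma \ref{wave Hardy}) applied to the derivative quantities supplied by Proposition \ref{prop:elliptic2}, with the compact region handled by Proposition \ref{weyl derivative decay}. The treatment of $H_{BC}$, $H_{B\r}$ and their $E$-analogues matches the paper exactly.

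The one genuine difference is how you handle $H_{\r\r}$ and $E_{\r\r}$. You invoke the constraint \eqref{ConsHAr} to convert $\partial_r(r^3 H_{\r\r})$ into an angular derivative of $H_{B\r}$ and then run Hardy again, feeding the result back into the $r^5|\slashed{\nabla}_B H_{C\r}|^2$ term of Proposition \ref{prop:elliptic2}. This works, but the paper takes a shorter route: since $E$ and $H$ are trace-free, one has the pointwise bound $|H_{\r\r}| = |H_A{}^A| \leq C|H_{AB}|$, so the improved weight for $H_{\r\r}$ follows immediately from the already-established bound for $H_{AB}$, with no further Hardy or constraint argument needed. Your route is a valid alternative and illustrates that the constraint equations alone (without the algebraic trace condition) already close the estimate; the paper's route is simply more economical.
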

\begin{proof}
Using the result of Proposition \ref{weyl derivative decay}, \ref{prop:elliptic2} with the Hardy estimates of Lemma \ref{wave Hardy} to improve the weights near infinity in the integrated decay estimates, making use of a cut-off in much the same way as for the spin 0 and spin 1 problems, we obtain the desired estimates for $|E_{AB}|$, $|H_{AB}|$, $|E_{A\bar r}|$ and $|H_{A \bar r}|$. Finally, the bounds for $|E_{\bar r\bar r}|$ and $|H_{\bar r\bar r}|$ are obtained trivially using the trace-free condition for $E$ and $H$.
\end{proof}

\subsection{Proof of Corollary \ref{uniform decay} (uniform decay)} \label{sec:proofcor}

In this subsection, we show a uniform decay rate for the solutions to the confomal wave, Maxwell and Bianchi equations, hence proving Corollary \ref{uniform decay}. We will in fact prove the uniform decay estimates for all of the equations at once by showing that this is a consequence of the bounds that we have obtained previously. The result below is a combination of relatively standard ideas (for example, see \cite{Dafermos:2009uq} and Prop 3.1 (a) of \cite{Batkai}), but for completeness we include a direct proof. 

\begin{Lemma}\label{semigroup decay}
Let $\Psi$ be a solution of either the conformal wave, Maxwell or Bianchi equations. Suppose that we are given some positive quantity $\mathcal{E}[\Psi](t)$ depending smoothly on $\Psi$ and its derivatives at some time $t$ which satisfies:
\begin{enumerate}[1.]
\item $\mathcal{E}[\Psi](t)$ is a non-increasing $C^1$ function of $t$,
\item For every $0\leq T_1\leq T_2$, $\mathcal{E}[\Psi](t)$ satisfies the integrated decay estimate:
\be
\int_{T_1}^{T_2} \mathcal{E}[\Psi](t) dt \leq C \left\{ \mathcal{E}[\Psi](T_1) + \mathcal{E}[\p_t \Psi](T_1) \right\},
\ee
for some $C>0$ independent of $T_1$ and $T_2$. 
\end{enumerate}
Then we have the estimate
\be
 \mathcal{E}[\Psi](t) \leq \frac{C_n}{(1+t)^n} \sum_{k=0}^n  \mathcal{E}\left [\left(\p_t\right)^k \Psi \right](0)
\ee
for some constants $C_n>0$ depending only on $n$ and $C$.

\end{Lemma}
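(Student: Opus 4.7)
\medskip

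\noindent\textbf{Proof proposal for Lemma \ref{semigroup decay}.} The plan is a standard monotonicity-plus-integrated-decay argument, executed by induction on $n$ with a dyadic splitting $[0,t] = [0,t/2] \cup [t/2, t]$ at each step. The crucial input beyond hypotheses 1 and 2 is that $\partial_t$ is Killing for $g_{AdS}$ and commutes with each of the three field equations, so that $(\partial_t)^k \Psi$ is again a solution of the same equation with the same boundary conditions; in particular the quantities $\mathcal{E}[(\partial_t)^k \Psi](t)$ inherit both the monotonicity property 1 and the integrated decay property 2. I will take this commutation for granted (it is the assertion at the beginning of Section \ref{sec:proofmt} that time-commuted estimates hold).

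\medskip

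\noindent\emph{Base case $n=1$.} Since $\mathcal{E}[\Psi](\cdot)$ is non-increasing, for $t \geq 0$
\be
\frac{t}{2}\, \mathcal{E}[\Psi](t) \;\leq\; \int_{t/2}^{t} \mathcal{E}[\Psi](s)\, ds \;\leq\; \int_{0}^{t} \mathcal{E}[\Psi](s)\, ds \;\leq\; C\bigl\{ \mathcal{E}[\Psi](0) + \mathcal{E}[\partial_t\Psi](0)\bigr\},
\ee
by hypothesis 2 with $T_1 = 0$, $T_2 = t$. Combining this with the trivial bound $\mathcal{E}[\Psi](t)\leq\mathcal{E}[\Psi](0)$ for $t\in[0,1]$ yields the $n=1$ case with a constant $C_1$ depending only on $C$.

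\medskip

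\noindent\emph{Inductive step.} Suppose the estimate holds for $n-1$, i.e.\ for every solution $\Phi$ of the given equation
\be
\mathcal{E}[\Phi](t) \;\leq\; \frac{C_{n-1}}{(1+t)^{n-1}} \sum_{k=0}^{n-1} \mathcal{E}[(\partial_t)^k \Phi](0).
\ee
Apply hypothesis 2 on the interval $[t/2, t]$ and use monotonicity on the left to obtain
\be
\frac{t}{2}\,\mathcal{E}[\Psi](t) \;\leq\; \int_{t/2}^{t}\mathcal{E}[\Psi](s)\,ds \;\leq\; C\bigl\{\mathcal{E}[\Psi](t/2) + \mathcal{E}[\partial_t\Psi](t/2)\bigr\}.
\ee
Now apply the inductive hypothesis to $\Phi=\Psi$ and to $\Phi=\partial_t\Psi$ at time $t/2$, noting $1+t/2 \geq (1+t)/2$, to bound each of the two terms on the right by $C'_{n-1}(1+t)^{-(n-1)}\sum_{k=0}^{n}\mathcal{E}[(\partial_t)^k\Psi](0)$. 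Dividing through by $t/2$ and combining once more with the trivial bound $\mathcal{E}[\Psi](t)\leq \mathcal{E}[\Psi](0)$ for $t$ bounded, we conclude
\be
\mathcal{E}[\Psi](t) \;\leq\; \frac{C_n}{(1+t)^{n}} \sum_{k=0}^{n} \mathcal{E}[(\partial_t)^k \Psi](0),
\ee
as desired.

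\medskip

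\noindent\emph{Main obstacle and remarks.} There is essentially no technical obstacle in the proof itself: the only point requiring care is the commutation statement, namely that the integrated decay estimate of Theorem \ref{wave full decay}(3) applied to the commuted solution $(\partial_t)^k\Psi$ produces on its right-hand side only the quantities $\mathcal{E}[(\partial_t)^k\Psi](0)$ and $\mathcal{E}[(\partial_t)^{k+1}\Psi](0)$, with no additional spatial-derivative terms. This is automatic because $T=\partial_t$ is Killing for $g_{AdS}$ and preserves the boundary, so the whole argument of Section \ref{sec:proofmt} applied to $(\partial_t)^k\Psi$ reproduces the same statement word for word. Granted this, the dyadic iteration above gives the decay rate $(1+t)^{-n}$ at the cost of losing $n$ time-derivatives in the initial data, which is exactly the content of Corollary \ref{uniform decay}.
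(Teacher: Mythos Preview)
Your proof is correct and follows essentially the same approach as the paper's: induction on $n$, using monotonicity together with the integrated decay estimate on a half-interval $[t/2,t]$, and commutation with $\partial_t$ to feed the inductive hypothesis. The only cosmetic difference is that the paper obtains the factor $(1+t-T_1)^{-1}$ in the base case via the identity $(1+t-T_1)\mathcal{E}^{(0)}(t)=\mathcal{E}^{(0)}(t)+\int_{T_1}^t\frac{d}{ds}\bigl((s-T_1)\mathcal{E}^{(0)}(s)\bigr)\,ds$, thereby avoiding your separate treatment of small $t$; otherwise the arguments coincide.
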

\begin{proof}
Let us set
\be
\mathcal{E}^{(k)}(t) =\mathcal{E}\left [\left(\p_t\right)^k \Psi \right](t).
\ee
We calculate
\begin{align*}
(1+t-T_1)\mathcal{E}^{(0)}(t) &=  \mathcal{E}^{(0)}(t) +\int_{T_1}^t \frac{d}{ds}\left((s-T_1) \mathcal{E}^{(0)}(s)\right) ds \\&= \mathcal{E}^{(0)}(t) +\int_{T_1}^t \mathcal{E}^{(0)}(s) + (s-T_1)\dot{\mathcal{E}}^{(0)}(s) ds \\& \leq \mathcal{E}^{(0)}(t) +\int_{T_1}^t \mathcal{E}^{(0)}(s) ds,
\end{align*}
where we have used the monotonicity of $\mathcal E^{(0)}$ to obtain the last inequality.
Now, it follows from the assumptions of the Lemma that
\be
\mathcal{E}^{(0)}(t) + \int_{T_1}^t \mathcal{E}^{(0)}(s)ds   \leq  C_1 \left(\mathcal{E}^{(0)}(T_1)+ \mathcal{E}^{(1)}(T_1) \right),
\ee
which together with the preceding estimate immediately imply
\begin{equation}\label{basecase.T1}
\mathcal{E}^{(0)}(t)  \leq  \frac{C_1}{1+t-T_1} \left(\mathcal{E}^{(0)}(T_1)+ \mathcal{E}^{(1)}(T_1) \right).
\end{equation}
Taking $T_1=0$, this in particular implies the conclusion of the Lemma in the case $n=1$.

To proceed, we induct on $n$. The $n=1$ case has just been established. Suppose now that the statement holds for some $n$. Noticing that the equation commutes with $\p_t$, we use the induction hypothesis for both $\Psi$ and $\p_t\Psi$ to obtain
\be
\mathcal{E}^{(0)}(t) + \mathcal{E}^{(1)}(t) \leq \frac{C_n}{(1+t)^n}\sum_{k=0}^{n+1} \mathcal{E}^{(k)}(0).
\ee
Now, we apply \eqref{basecase.T1} with $T_1=\frac{t}{2}$ to deduce
\begin{align*}
\mathcal{E}^{(0)}(t)& \leq \frac{C_1}{1+\frac{t}{2}}\left( \mathcal{E}^{(0)}\left (\frac{t}{2} \right) + \mathcal{E}^{(1)}\left (\frac{t}{2} \right) \right) \\
& \leq \frac{C_1C_n}{(1+\frac{t}{2})^{n+1}}\sum_{k=0}^{n+1} \mathcal{E}^{(k)}(0) \\
&\leq \frac{C_1 C_n 2^n}{(1+t)^{n+1}}\sum_{k=0}^{n+1} \mathcal{E}^{(k)}(0),
\end{align*}
whence the result follows.
\end{proof}

\begin{proof}[Proof of Corollary \ref{uniform decay}]
For the conformal wave and Maxwell equations, Corollary \ref{uniform decay} follows immediately by applying Lemma \ref{semigroup decay} to the quantity
\be
\mathcal{E}[\Psi](t) = \int_{\Sigma_T}  \frac{\varepsilon \left[\Psi\right]}{\sqrt{1+r^2}}r^2 dr d\omega
\ee
which is monotone decreasing and satisfies an integrated decay statement with loss of one derivative. For the Bianchi equations, this quantity is not monotone decreasing (merely bounded by a constant times its initial value). We can circumvent this by applying Lemma \ref{semigroup decay} to the quantity:
\be
\mathcal{E}[W](t) =  \int_{\Sigma_t} Q(e_0, Y, Y, Y)  \frac{r^2}{\sqrt{1+r^2}} dr d\omega,
\ee
which is monotone decreasing and satisfies an integrated decay statement with loss of one derivative. Noting that
\be
\mathcal{E}[W](t) \sim \int_{\Sigma_T}  \frac{\varepsilon \left[W\right]}{\sqrt{1+r^2}}r^2 dr d\omega,
\ee
we are done.
\end{proof}

\subsection{Proof of Theorem \ref{theo:gb}: Gaussian beams} \label{sec:proofgb}
It is noteworthy that in the first instance, for all of the integrated decay estimates we obtained above the $r$-weight near infinity is weaker than that for the energy estimate. In particular, in order to show a uniform-in-time decay estimate, we needed to lose a derivative. In this section, we show that without any loss, there cannot be any uniform decay statements for the conformal wave equation. Moreover, an integrated decay estimate with no degeneration in the $r$-weight does not hold.

In order to show this, we will construct approximate solutions to the conformally coupled wave equation for a time interval $[0,T]$ with an arbitrarily small loss in energy. We will in fact first construct a Gaussian beam solution on the Einstein cylinder and make use of the fact that (one half of) the Einstein cylinder is conformally equivalent to the AdS spacetime to obtain an approximate solution to the conformally coupled wave equation on AdS.

In the following, we will first study the null geodesics on the Einstein cylinder. We then construct Gaussian beam approximate solutions to the wave equation on the Einstein cylinder. Such construction is standard and in particular we follow closely Sbierski's geometric approach \cite{Sbierski:2013mva} (see also \cite{Arnaud, Ralston}). After that we return to the AdS case and build solutions that have an arbitrarily small loss in energy.


\subsubsection{Geodesics in the Einstein cylinder}

We consider the spacetime $(\mathcal M_E, g_E)$, where $\mathcal M_E$ is diffeomorphic to $\mathbb R\times \mathbb S^3$ and the metric $g_E$ is given by
\begin{equation}\label{gE.def}
g_E=-dt^2+ d\psi^2+\sin^2\psi(d\th^2+\sin^2\th d\phi^2).
\end{equation}

We will slightly abuse notation and denote the subsets of $\mathcal M_E$ with notations similar to that for the AdS spacetime. More precisely, we will take
$$\Sigma_T:=\{(t,\psi,\th,\phi): t=T\},$$
$$S_{[T_1,T_2]}:=\{(t,\psi,\th,\phi): T_1\leq t\leq T_2\}.$$

Take null geodesics $\gamma:(-\infty,\infty)\to \mathcal M_E$ in the equatorial plane $\{\th = \f\pi 2\}$. In coordinates, we express $\gamma$ as
$$(t,\psi,\th,\phi)=(T(s),\Psi(s),\f\pi 2,\Phi(s)).$$
Since $\f\rd{\rd t}$ and $\f\rd{\rd \phi}$ are Killing vector fields, $E$ and $L$ defined as
$$\dot{T}=E,\quad \sin^2\Psi \dot{\Phi}=L$$
are both conserved quantities. Here, and below, we use the convention that $\dot{ }$ denotes a derivative in $s$. We require from now on that $0\leq |L|<E$. The geodesic equation therefore reduces to the ODE
\begin{equation}\label{geod.ODE}
\dot{\Psi}=\sqrt{E^2-\f{L^2}{\sin^2\Psi}}.
\end{equation}
Solving \eqref{geod.ODE} with the condition that $\Psi$ achieves its minimum at $s=0$ and $T(0)=\Phi(0)=0$, we have
$$T(s)=Es,$$
$$\sin\Psi(s)=\f{\sqrt{(E^2-L^2)\sin^2(Es)+L^2}}{E},$$
and
$$\Phi(s)=\int_0^s\f{L E^2}{(E^2-L^2)\sin^2(Es')+L^2} ds'.$$
In order to invert the sine function to recover $\Psi$, we will use the convention that for $Es\in [2k\pi-\f{\pi}{2},2k\pi+\f{\pi}{2})$, and $k\in \mathbb Z$, we require $\Psi(s) \in [0,\f\pi 2]$; while for $E s\in [(2k+1)\pi-\f{\pi}{2},(2k+1)\pi+\f{\pi}{2})$, and $k\in \mathbb Z$, we require $\Psi(s) \in [\f\pi 2, \pi]$. Notice that this choice of the inverse of the sine function gives rise to a smooth null geodesic.

Moreover, direct computations show that
$$\dot{T}(s)=E,\quad\dot{\Psi}(s)=\f{E\sqrt{E^2-L^2}\sin(Es)}{\sqrt{(E^2-L^2)\sin^2(Es)+L^2}},\quad\dot{\Phi}(s)=\f{L E^2}{(E^2-L^2)\sin^2(Es)+L^2}.$$

\subsubsection{Constructing the Gaussian beam}\label{Construct.GB}

Given a null geodesic $\gamma$ on $(\mathcal M_E, g_E)$ as above, we follow the construction in Sbierski \cite{Sbierski:2013mva} to obtain an approximate solution to the wave equation on $(\mathcal M_E, g_E)$ which is localised near $\gamma$ and has energy close to that of $\gamma$. We first define the phase function $\varphi$ and its first and second partial derivatives on $\gamma$ and then construct the function $\varphi$ in a neighbourhood of $\gamma$. We also define the amplitude $a$ on $\gamma$. More precisely, on $\gamma$, we require the following conditions:

\begin{enumerate}
\item $\varphi(\gamma(s))=0 $
\item $d\varphi(\gamma(s))=\dot\gamma(s)_{\flat} $
\item The matrix $M_{\mu\nu}:= \rd_\mu\rd_\nu\varphi(\gamma(s))$ is a symmetric matrix satisfying the ODE
$$\f d{ds} M=-A-BM-MB^T-MCM,$$
where $A$, $B$, $C$ are matrices given by
$$A_{\kappa\rho}=\f 12(\rd_\kappa\rd_\rho g^{\mu\nu})\rd_\mu\varphi\rd_\nu\varphi,$$
$$B_{\kappa\rho}=\rd_\kappa g^{\rho\mu}\rd_\mu\varphi,$$
$$C_{\kappa\rho}=g^{\kappa\rho},$$
and obeying the initial conditions
\begin{enumerate}
\item $M(0)$ is symmetric;
\item $M(0)_{\mu\nu}\dot{\gamma}^\nu=(\dot{\rd_\mu\varphi})(0)$;
\item $\Im (M(0)_{\mu\nu})dx^\mu\left.\right|_{\gamma(0)}\otimes dx^\nu\left.\right|_{\gamma(0)} $ is positive definite on a three dimensional subspace of $T_{\gamma(0)}M$ that is transversal to $\dot\gamma$.
\end{enumerate}
\item $a(\gamma(0))\neq 0$ and $a$ satisfies the ODE 
\be
2 \mbox{grad} \varphi(a)+\Box\varphi\cdot a=0
\ee 
along $\gamma$.
\end{enumerate}

The results in \cite{Sbierski:2013mva} ensure that $\left. \varphi \right|_{\gamma}, \left. \p_\mu \varphi \right|_{\gamma}, \left. \p_\mu \p_\nu\varphi \right|_{\gamma}, \left. a \right|_\gamma$ can be constructed satisfying these conditions.  We then let $\varphi$ to be a smooth extension of $\varphi$ away from $\gamma$ compatible with these derivatives. Likewise $a_{\mathcal N}$ is defined to be an extension of $\left. a\right|_{\gamma}$ as constructed above. Moreover, we require $a_{\mathcal N}$ to be compactly supported in a (small) tubular neighborhood $\mathcal N$ of the null geodesic $\gamma$.

We define the energy for a function on $\mathcal M_E$ by
$$\hat{E}_t({w}):=\frac 12\int_{\Sigma_t} \big((\rd_t w)^2+(\rd_\psi w)^2+\f{(\rd_\th {w})^2}{\sin^2\psi}+\f{(\rd_\phi {w})^2}{\sin^2\psi\sin^2\th}\big) \sin^2\psi d\psi d\omega.$$
Here, as elsewhere, $d\omega= \sin\th d\th d\phi$ is the volume form of the round unit sphere. We also associate the geodesic $\gamma$ with a conserved energy
$$E(\gamma)=-g_E(\dot\gamma,\rd_t).$$
Notice that this agrees with the convention $E=\dot{T}$ used in the previous subsection.

The main result\footnote{Translated into our notation, the result in \cite{Sbierski:2013mva} requires the following bounds on the geometry of $(\mathcal M_E, g_E)$: 
$$-C\leq g(\rd_t,\rd_t)\leq c<0,\,|\nabla \rd_t(\rd_t,\rd_t)|+|\nabla \rd_t(\rd_t,e_i)|+|\nabla \rd_t(e_i,e_j)|\leq C,$$
where $e_i$ is an orthonormal frame on the $\mathbb S^3$ slice. These estimates are obviously satisfied in our setting.} of Sbierski  regarding the approximate solution constructed above is the following theorem\footnote{Regarding point (4) in the theorem below, the original work of Sbierski gives a more general characterization of the energy of Gaussian beams in terms of the the energy of geodesics on general Lorentzian manifold. Since in our special setting, the energy of a geodesic is conserved, we will not record the most general result but will refer the readers to \cite{Sbierski:2013mva} for details.} (see Theorems 2.1 and 2.36 in \cite{Sbierski:2013mva}):

\begin{Theorem}\label{sbierski.thm}
Given a geodesic $\gamma$ parametrized by $E$ and $L$ as above, let 
$${w}_{E,L,\lambda,\mathcal N}=a_{\mathcal N} e^{i\lambda\varphi},$$
where $a_{\mathcal N}$ and $\varphi$ are defined as above. Then ${w}_{\lambda,\mathcal N}$ obeys the following conditions:
\begin{enumerate}
\item $\|\Box w_{E,L,\lambda,\mathcal N}\|_{L^2(\mathcal S_{[0,T]})}\leq C(T)$;
\item $\hat{E}_0(w_{E,L,\lambda,\mathcal N})\to \infty \mbox{ as }\lambda\to \infty$;
\item $w_{E,L,\lambda,\mathcal N}$ is supported in $\mathcal N$, a tubular neighborhood of $\gamma$;
\item Fix $\mu>0$ and normalize the initial energy of $w_{E,L,\lambda,\mathcal N}$ by
$$ \tilde{w}_{E,L,\lambda,\mathcal N}:=\f{w_{E,L,\lambda,\mathcal N}}{\sqrt{\hat{E}_0(w_{E,L,\lambda,\mathcal N})}}\cdot E(\gamma).$$
Then for $\mathcal N$ a sufficiently small neighborhood of $\gamma$ and $\lambda$ sufficiently large, the following bound holds:
$$\sup_{t\in [0,T]}\abs{\hat{E}_t(\tilde{w}_{E,L,\lambda,\mathcal N})-E(\gamma)}<\mu.$$
\end{enumerate}
\end{Theorem}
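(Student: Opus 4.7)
The proof follows the standard WKB/Gaussian-beam scheme, so the plan is to verify the four items by substituting $w = a_{\mathcal N} e^{i\lambda\varphi}$ into the wave operator and exploiting the eikonal/transport conditions imposed on $\gamma$. Direct computation gives
\[
\Box_{g_E}(a_{\mathcal N}\, e^{i\lambda\varphi}) = \Bigl[-\lambda^2\, g_E^{\mu\nu}\p_\mu\varphi\,\p_\nu\varphi \cdot a_{\mathcal N} + i\lambda\bigl(2\,\mathrm{grad}\,\varphi(a_{\mathcal N}) + \Box\varphi \cdot a_{\mathcal N}\bigr) + \Box a_{\mathcal N}\Bigr]\, e^{i\lambda\varphi},
\]
and by the imposed conditions the first bracket vanishes to second order on $\gamma$ (from the eikonal condition $d\varphi=\dot\gamma^\flat$ together with the Riccati ODE for $M$), while the second bracket vanishes on $\gamma$ (from the transport equation for $a$).

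The central analytic input, which I would establish first, is that the Riccati equation for $M(s)=\p_\mu\p_\nu\varphi(\gamma(s))$ propagates positive definiteness of $\Im M$ on the three-dimensional transversal subspace for all $s\in[0,T]$. This yields a quadratic lower bound $\Im\varphi(x)\gtrsim\mathrm{dist}(x,\gamma)^2$ inside a small tube $\mathcal N$, and hence $|e^{i\lambda\varphi}|\lesssim e^{-c\lambda\,\mathrm{dist}(\cdot,\gamma)^2}$ is a Gaussian concentrated in a strip of transversal width $\lambda^{-1/2}$. With this localization, item (1) follows by Taylor-expanding the coefficients in the bracket above along the transversal directions to the correct order; the residual $O(\lambda^2\,\mathrm{dist}^3)$ and $O(\lambda\,\mathrm{dist})$ contributions, integrated against the Gaussian weight over $\mathcal S_{[0,T]}$, produce an $L^2$ bound uniform in $\lambda$. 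Item (2) is a scaling computation: the dominant term in any derivative of $w$ is $i\lambda\,(\p\varphi)\,a_{\mathcal N}\, e^{i\lambda\varphi}$, so $\hat E_0(w)$ scales like $\lambda^2$ times the $L^2$ mass of the Gaussian, which is $\lambda^{-3/2}$, giving $\hat E_0(w)\sim\lambda^{1/2}\to\infty$. Item (3) is immediate from the compact support of $a_{\mathcal N}$ in $\mathcal N$.

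Item (4) is the main obstacle and requires a stationary-phase-type evaluation of $\hat E_t(w)$ that is uniform in $t\in[0,T]$. The key observation is that on $\gamma$ one has $d\varphi=\dot\gamma^\flat$, so the dominant $O(\lambda^2)$ part of the energy density is $\lambda^2|a_{\mathcal N}|^2\bigl((\dot T)^2+|\dot\gamma_{\Sigma_t}|_{g_E}^2\bigr)e^{-2\lambda\Im\varphi}$, and the null condition $g_E(\dot\gamma,\dot\gamma)=0$ collapses the bracket to $2E(\gamma)^2$ pointwise on $\gamma$. Since $\p_t$ is Killing on the Einstein cylinder, $E(\gamma)$ is conserved; combined with the transversal Gaussian integral, whose dependence on $t$ enters only through the shape of $\Im M(s)$, this shows that $\hat E_t(w)/\hat E_0(w)=1+o(1)$ as $\lambda\to\infty$, uniformly in $t\in[0,T]$, with subleading Taylor terms controlled by first shrinking $\mathcal N$ and then taking $\lambda$ large. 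After the normalization prescribed in the statement, this yields the advertised bound. The hardest point is precisely this uniformity, which hinges on the long-time control of the Riccati ODE from the first step: without it the transversal Gaussian would degenerate and the stationary-phase expansion would break down, which is why the argument produces a finite-$T$ statement rather than one valid for all time.
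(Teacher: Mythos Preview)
The paper does not give its own proof of this theorem: it is quoted directly from Sbierski's work (Theorems~2.1 and~2.36 of \cite{Sbierski:2013mva}), with the construction of $\varphi$, $M$, and $a$ summarized beforehand. Your sketch is essentially the standard argument that Sbierski carries out, so at the level of strategy you are aligned with the cited source.

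There is, however, a genuine gap in your treatment of item~(4). You correctly identify that the leading contribution to $\hat E_t(w)$ is $\lambda^2\cdot 2E(\gamma)^2$ times the transversal Gaussian integral $\int_{\Sigma_t}|a_{\mathcal N}|^2 e^{-2\lambda\Im\varphi}$, and you note that this integral depends on $t$ through $\Im M(s)$. But you then simply assert that $\hat E_t(w)/\hat E_0(w)=1+o(1)$ without explaining why the \emph{leading} order is constant in $t$. Both $|a(\gamma(s))|$ and $\det\Im M(s)$ vary along $\gamma$; what is needed is the conservation law (a Wronskian-type identity) that the transport equation for $a$ and the Riccati equation for $M$ together imply for the combination $|a(\gamma(s))|^2\,(\det\Im M(s))^{-1/2}$ appearing in the stationary-phase evaluation. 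Without this, the argument does not close.

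An alternative, and in some ways cleaner, route to (4) bypasses the stationary-phase computation entirely: differentiate $\hat E_t(w)$ in $t$, use the identity $\tfrac{d}{dt}\hat E_t(w)=\Re\int_{\Sigma_t}(\Box w)\overline{\partial_t w}$, and bound this by $\|\Box w\|_{L^2(\Sigma_t)}\sqrt{2\hat E_t(w)}$. Integrating and invoking items (1) and (2) gives $|\hat E_t(w)-\hat E_0(w)|\leq C(T)\sqrt{\hat E_0(w)}$, which after normalization yields the claim. This avoids the conservation-law step altogether.
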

We also need another fact regarding the second derivatives of $\varphi$ which is a consequence of the construction in \cite{Sbierski:2013mva} (see (2.14) in \cite{Sbierski:2013mva}):
\begin{Lemma}\label{d2phi}
$\Im(\varphi\left.\right|_\gamma)=\Im(\nabla\varphi\left.\right|_\gamma)=0.$ Moreover, $\Im(\nabla\nabla\varphi\left.\right|_\gamma)$ is positive definite on a $3$-dimensional subspace transversal to $\dot\gamma$.
\end{Lemma}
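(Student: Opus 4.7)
The plan is to derive each of the three assertions directly from the prescribed conditions on $\varphi$, $d\varphi$ and $M:=\nabla\nabla\varphi$ along $\gamma$ given in Section \ref{Construct.GB}.

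The first two statements are essentially immediate. Condition (1) imposes $\varphi(\gamma(s))=0$, which is manifestly real, so $\Im(\varphi|_\gamma)=0$. Condition (2) gives $d\varphi(\gamma(s))=\dot\gamma(s)_\flat$; since $\gamma$ is a real curve and the metric $g_E$ in \eqref{gE.def} is real, the co-vector $\dot\gamma(s)_\flat$ is real, so $\Im(\nabla\varphi|_\gamma)=0$. In particular, the matrices $A$, $B$, $C$ appearing in the Riccati-type ODE for $M$ are all real along $\gamma$.

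For the Hessian assertion, the plan is to show that the positive-definiteness of $\Im M$ on a $3$-dimensional subspace transversal to $\dot\gamma$, which is imposed at $s=0$ by initial condition (c), is propagated by the ODE $\dot M = -A - BM - MB^T - MCM$. Decomposing $M = M_R + iM_I$ with $M_R$, $M_I$ real symmetric, the imaginary part obeys the linear homogeneous equation
$$\dot M_I = -BM_I - M_IB^T - M_RCM_I - M_ICM_R.$$
To extract conservation of the signature of $M_I$ on a transversal subspace, the cleanest route is the classical Lagrangian/symplectic factorisation: introduce matrix-valued functions $(P,Q)$ with $M = PQ^{-1}$ solving the linear Hamiltonian system $\dot Q = CP + B^TQ$, $\dot P = -AQ - BP$. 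A direct differentiation shows that $P^*Q - Q^*P$ is conserved in $s$, hence the Hermitian form $Q^*M_IQ = \tfrac{1}{2i}(P^*Q - Q^*P)$ is independent of $s$. With initial data $(P(0), Q(0))$ chosen compatibly with (a)--(c), this Hermitian form is positive definite on a $3$-dimensional subspace transversal to $\dot\gamma$, and its signature is therefore preserved along $\gamma$.

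The main obstacle is to ensure that $Q(s)$ stays injective on the chosen transversal subspace for all $s \in [0,T]$, i.e.\ that no caustic develops in the complexified picture. This is handled by the observation that the only possible kernel direction of $Q$ along $\gamma$ is $\dot\gamma$ itself -- a consequence of the tangential compatibility condition (b) combined with the structure of the linear Hamiltonian system for $(P,Q)$ -- together with the fact that strict positive-definiteness of $\Im M$ on a transversal subspace is an open condition preserved by the flow as long as $Q$ is non-singular there. This is precisely the argument in the Gaussian-beam construction of \cite{Sbierski:2013mva}, which we invoke.
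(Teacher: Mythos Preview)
Your proposal is correct and aligns with the paper's treatment: the paper does not supply an independent proof of this lemma but simply records it as a consequence of the construction in \cite{Sbierski:2013mva} (referring specifically to (2.14) there). Your sketch of the symplectic factorisation $M=PQ^{-1}$ and the conservation of $P^*Q-Q^*P$ is precisely the standard Gaussian-beam argument carried out in \cite{Sbierski:2013mva}, so you have essentially unpacked the citation the paper makes.
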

The fact that $a_{\mathcal N}$ is independent of $\lambda$ together with Lemma \ref{d2phi} imply that the Gaussian beam approximate solution constructed above has bounded $L^2$ norm independent of $\lambda$. We record this bound in the following lemma:
\begin{Lemma}\label{L2.bd}
Let $w_{E,L,\lambda,\mathcal N}$ be as in Theorem \ref{sbierski.thm}. The following bound holds:
$$\|w_{E,L,\lambda,\mathcal N}\|_{L^2(S_{[0,T]})}\leq C(T).$$
\end{Lemma}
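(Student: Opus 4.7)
My plan is to obtain the bound by exploiting the Gaussian concentration of $|e^{i\lambda\varphi}|$ around the geodesic $\gamma$, which follows from Lemma \ref{d2phi}, combined with the fact that $a_{\mathcal N}$ is $\lambda$-independent and compactly supported in the tubular neighborhood $\mathcal N$. The starting point is the identity
\[
\|w_{E,L,\lambda,\mathcal N}\|_{L^2(S_{[0,T]})}^2 = \int_{S_{[0,T]}} |a_{\mathcal N}|^2\, e^{-2\lambda\,\Im\varphi}\, d\mathrm{vol}_{g_E},
\]
so everything comes down to controlling $\Im\varphi$ from below in $\mathcal N$.

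Next, I would set up Fermi-type tubular coordinates $(s,y)$ in a neighborhood of $\gamma$, where $s$ parametrizes $\gamma$ and $y=(y^1,y^2,y^3)\in\R^3$ parametrizes a transversal complement to $\dot\gamma$. By the first part of Lemma \ref{d2phi}, $\Im\varphi$ vanishes to first order along $\gamma$; by the second part, $\Im(\nabla\nabla\varphi|_{\gamma(s)})$ is positive definite on the transversal complement. Shrinking $\mathcal N$ if necessary and applying Taylor's theorem uniformly in $s\in[0,s_T]$ (where $s_T$ is the parameter value with $T(s_T)=T$, finite since $\dot T = E >0$), I obtain a constant $c>0$ such that
\[
\Im\varphi(s,y) \geq c\,|y|^2 \qquad \text{for all }(s,y)\in \mathcal N.
\]

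With this lower bound in hand, the estimate is immediate. Using that $|a_{\mathcal N}|$ is bounded uniformly in $\lambda$ and supported in $\mathcal N$, the volume form in tubular coordinates is comparable to $ds\,dy$, so
\[
\|w_{E,L,\lambda,\mathcal N}\|_{L^2(S_{[0,T]})}^2 \leq C\int_0^{s_T}\!\! ds \int_{\R^3} e^{-2c\lambda|y|^2}\,dy \leq C\,s_T\,\Bigl(\tfrac{\pi}{2c\lambda}\Bigr)^{3/2} \leq C(T).
\]
(In particular, the bound even improves as $\lambda\to\infty$.) The only point requiring care is the uniform positivity of the transverse Hessian of $\Im\varphi$ along the whole segment of $\gamma$ inside $S_{[0,T]}$: this is not really an obstacle since the matrix $M(s)$ satisfies a smooth ODE on a compact parameter interval and $\Im M(0)$ is positive definite on a transversal subspace by construction, so positivity is preserved on $[0,s_T]$ (this is the same uniformity used implicitly in Theorem \ref{sbierski.thm}). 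No macro obstacles; the result follows directly.
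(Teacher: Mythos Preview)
Your proof is correct and follows exactly the approach the paper indicates: the paper does not give a detailed argument but simply remarks that the bound follows from $a_{\mathcal N}$ being $\lambda$-independent together with Lemma~\ref{d2phi}, and you have filled in precisely those details. In fact you obtain more than required, since the crude bound $|e^{i\lambda\varphi}|\le 1$ (from $\Im\varphi\ge 0$ on $\mathcal N$) already suffices without extracting the $\lambda^{-3/2}$ Gaussian decay.
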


\subsubsection{The conformal transformation}

Once we have constructed the Gaussian beam for the wave equation on the Einstein cylinder, it is rather straightforward to construct the necessary sequence of functions using the conformal invariance of the operator 
$$L=\Box_g-\f 16 R(g),$$
where $R(g)$ is the scalar curvature of the metric $g$. 

We first set up some notations. We will be considering $\mathcal M_E$ restricted to $\psi\leq \f\pi 2$ as a manifold with boundary diffeomorphic to $\mathbb R\times \mathbb S^3_h$. The interior of this manifold will also be identified with $\mathcal M_{AdS}$ via identifying the coordinate functions $(t,\th,\phi)$, as well as
$$\tan\psi=r.$$
It is easy to see that $g_E$ and $g_{AdS}$ are conformal. More precisely, $g_E$ as before can be written as
\begin{equation*}
g_E=-dt^2+ d\psi^2+\sin^2\psi(d\th^2+\sin^2\th d\phi^2);
\end{equation*}
while in the $(t,\psi,\th,\phi)$ coordinate system, $g_{AdS}$ takes the following form
\begin{equation*}
g_{AdS}=\frac{1}{\cos^2 \psi} \big(-dt^2+ d\psi^2+\sin^2\psi(d\th^2+\sin^2\th d\phi^2)\big).
\end{equation*}

We now proceed to the construction of the approximate solution to the conformally coupled wave equation on AdS using the conformal invariance of $L$. More precisely, we have
\begin{Lemma}\label{conf.trans}
Let $w$ be a function on $\mathcal M_E$ restricted to $\psi\leq \f\pi 2$. Then we have
$$\f{1}{(1+r^2)^{\f32}}( \Box_{g_E} w- w) = \Box_{g_{AdS}} \f w{\sqrt{1+r^2}}+2\f w{\sqrt{1+r^2}}.$$
\end{Lemma}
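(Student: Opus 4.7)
The plan is to exploit the conformal invariance of the conformally coupled wave operator $L_g := \Box_g - \tfrac{1}{6} R(g)$ in dimension four, which transforms as $L_{\Omega^2 g}(\Omega^{-1} u) = \Omega^{-3}\, L_g(u)$ for any smooth positive conformal factor $\Omega$ (the exponents here are $(n-2)/2 = 1$ and $(n+2)/2 = 3$ for $n=4$). The claim in the lemma is exactly this identity once the operators $\Box - \tfrac{1}{6} R$ are re-expressed in terms of $\Box$ alone on each side, so the proof reduces to pinning down the conformal factor and the two scalar curvatures.

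First I would identify the conformal factor. From $\tan\psi = r$ one has $\sec^2\psi = 1+r^2$, so comparing the displayed expressions for $g_E$ and $g_{AdS}$ directly yields $g_{AdS} = \Omega^2 g_E$ with $\Omega = \sec\psi = \sqrt{1+r^2}$. Next I would compute the two scalar curvatures: since $g_E$ is the Riemannian product of flat $\mathbb{R}$ with the round unit $\mathbb{S}^3$, one has $R(g_E) = R(\mathbb{S}^3) = 6$, while for $g_{AdS}$ the normalisation $\Lambda = -3$ fixed in the introduction gives $\mathrm{Ric}(g_{AdS}) = -3\, g_{AdS}$ and hence $R(g_{AdS}) = -12$. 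Therefore
\begin{equation*}
L_{g_E}\, w \;=\; \Box_{g_E} w - w, \qquad L_{g_{AdS}}\, u \;=\; \Box_{g_{AdS}} u + 2u,
\end{equation*}
which is already precisely the combination appearing on each side of the lemma.

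Finally, applying the conformal transformation law above with $u = \Omega^{-1} w = w/\sqrt{1+r^2}$ produces
\begin{equation*}
\Box_{g_{AdS}}\!\left(\frac{w}{\sqrt{1+r^2}}\right) + 2\,\frac{w}{\sqrt{1+r^2}} \;=\; \frac{1}{(1+r^2)^{3/2}}\bigl(\Box_{g_E} w - w\bigr),
\end{equation*}
which is the desired identity. There is no real obstacle here; the only things that require care are the dimension-four exponents in the conformal weight and the sign conventions in the two scalar curvatures. One could alternatively verify the identity by brute force using the explicit coordinate expressions for $\Box_{g_E}$ and $\Box_{g_{AdS}}$ in $(t,\psi,\theta,\phi)$ and the chain rule applied to $w/\sqrt{1+r^2} = w\cos\psi$, but this would be considerably more tedious and yields no extra insight.
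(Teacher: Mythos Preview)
Your argument is correct and complete. The paper's own proof is the single sentence ``This can be verified by an explicit computation,'' while the text immediately preceding the lemma explicitly introduces the conformally invariant operator $L=\Box_g-\tfrac{1}{6}R(g)$; you have simply carried out the conformal-covariance argument that the paper sets up but does not spell out, so your approach is exactly the one the paper intends.
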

\begin{proof}
This can be verified by an explicit computation.
\end{proof}

\begin{figure}[t]
\begin{minipage}[b]{0.45\linewidth}
\centering
\includegraphics[width=7cm]{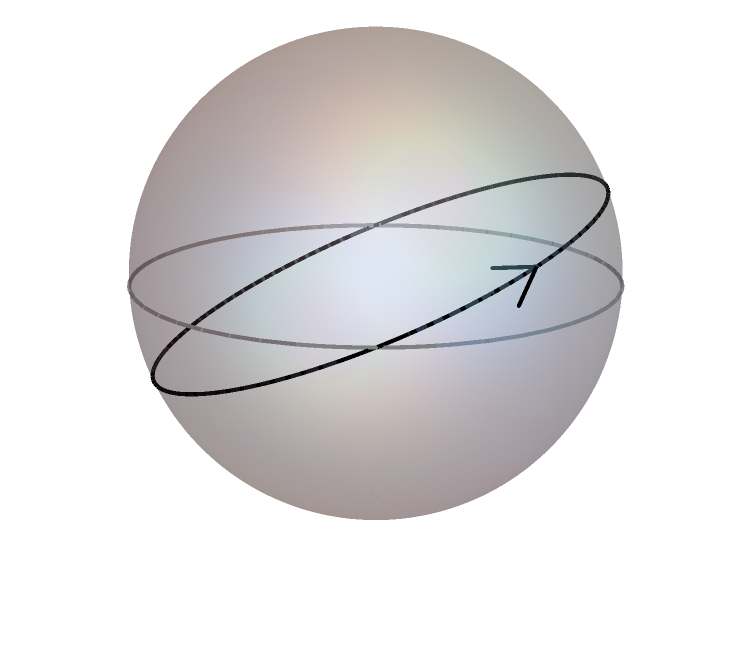}
\end{minipage}
\hspace{0.8cm}
\begin{minipage}[b]{0.45\linewidth}
\centering
\includegraphics[width=7cm]{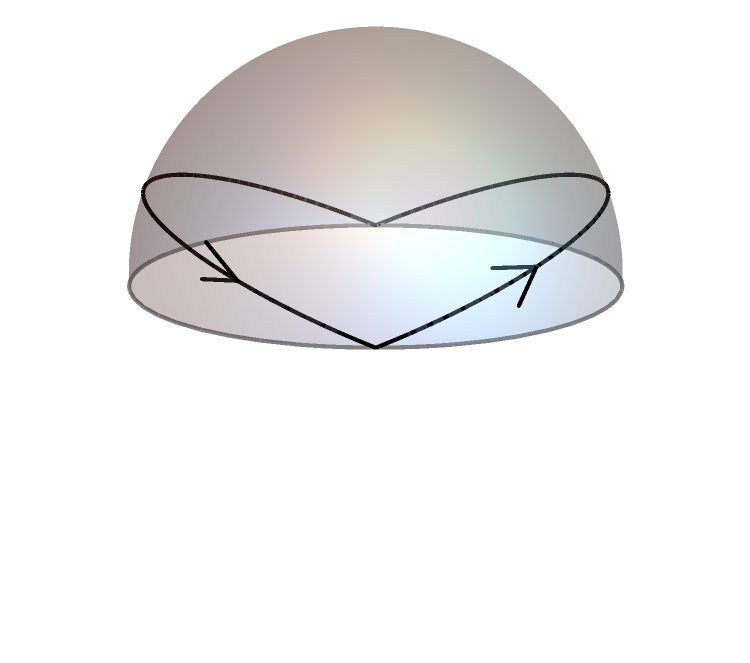}
\end{minipage}
\caption{An typical null geodesic $\gamma$ shown in black projected to a surface of constant $\theta$ in the optical geometry of the Einstein cylinder (left). The corresponding curve for the anti-de Sitter spacetime, after reflections at $\scri$ (right).} \label{fig1}
\end{figure}

We are now ready to construct the approximate solution. To heuristically explain the construction, consider Figure \ref{fig1}. On the left we sketch the curve $\gamma$ along which our approximate solution is concentrated in the Einstein cylinder. To allow this to be plotted, we project onto a surface of constant $t$, $\theta$ which carries a spherical geometry and can be visualised via its embedding in $\R^3$. We actually wish to construct an approximate solution on anti-de Sitter, which is conformal to one half of the Einstein cylinder. To do so, we restrict our attention to one hemisphere and arrange that whenever the curve $\gamma$ strikes the equator of the Einstein cylinder, it is reflected. Each time this occurs, we shall arrange that the Gaussian beam is attenuated by a factor which depends on the angle of incidence (which in turn depends on $E$ and $L$). The more shallow the reflection, the weaker the attenuation. Crucially, the time, $t$, between reflections is independent of the angle that we choose. As a result, by taking the angle of incidence to be sufficiently small, we can arrange that an arbitrarily small fraction of the initial energy is lost at the boundary over any given time interval. 

To be more precise, let us fix $T\geq 0$. We will construct an approximate solution for $t\in [0,T]$. Take $N\in \N$ be the smallest integer such that $T<\f{(4N+3)\pi}{2}$. We then construct an approximate solution for $t\in (-\f {\pi} 2, \f{(4N+3)\pi}{2})$ starting from the function $w_{E,L,\lambda,\mathcal N}$ constructed previously. This will correspond to a Gaussian beam which strikes the boundary $\sim 2N$ times between $t=0$ and $t=T$. Notice that the geodesic $\gamma$ has the property that it lies in the hemisphere $\{\psi<\f\pi 2\}$ for $t\in (2k\pi-\f\pi 2,2k\pi+\f\pi 2)$ (for $k\in \mathbb Z$) and that it lies in the other hemisphere, i.e., $\{\psi>\f \pi 2\}$, for $t\in ((2k+1)\pi-\f\pi 2,(2k+1)\pi+\f\pi 2)$ (for $k\in \mathbb Z$). Therefore, we will assume without loss of generality that the neighborhood $\mathcal N$ has been taken sufficiently small such that it lies entirely in $\{\psi<\f\pi 2\}$ for $t\in (2k\pi-\f\pi 4,2k\pi+\f\pi 4)$ (and entirely in $\{\psi>\f\pi 2\}$ for $t\in ((2k+1)\pi-\f\pi 4,(2k+1)\pi+\f\pi 4)$), where $k\in \mathbb Z$.

Now for $(t,\psi,\th,\phi)$ in $\mathcal M_E$ restricted to $\psi\leq \f\pi 2$, we define for $t\in [0,T]$
\begin{equation}\label{u.tot.def}
\begin{split}
u_{E,L,\lambda}(t,\psi,\th,\phi)
=&\sum_{k=0}^{N}R^{2k}\chi_{t\in (2k\pi-\f{3\pi}{4},2k\pi+\f{3\pi}{4}]}\f{w_{E,L,\lambda,\mathcal N}(t,\psi,\th,\phi)\left.\right|_{\{\psi\leq \f\pi 2\}}}{\sqrt{1+r^2}}\\
+&\sum_{k=0}^N R^{2k+1}\chi_{t\in ((2k+1)\pi-\f{3\pi} 4,(2k+1)\pi+\f {3\pi} 4]}\f{w_{E,L,\lambda,\mathcal N}(t,\pi-\psi,\th,\phi)\left.\right|_{\{\psi\leq \f\pi 2\}}}{\sqrt{1+r^2}},
\end{split}
\end{equation}
where $R$ is taken to be $R=-\f{E-\sqrt{E^2-L^2}}{E+\sqrt{E^2-L^2}}$ and $\chi$ is the indicator function. Notice in particular that when the time cutoff function is $0$ in the first term (resp. in the second term), the support of $w_{E,L,\lambda,\mathcal N}$ is entirely in $\{\psi >\f\pi 2\}$ (resp. $\{\psi <\f\pi 2\}$). We also depict this in Figure \ref{cutoff}, where we denote 
$$\chi_1(t)=\sum_{k=0}^{N}\chi_{t\in (2k\pi-\f{3\pi}{4},2k\pi+\f{3\pi}{4}]}, \quad \chi_2(t)=\sum_{k=0}^N\chi_{t\in ((2k+1)\pi-\f{3\pi} 4,(2k+1)\pi+\f {3\pi} 4]}.$$

\begin{figure}
\begin{picture}(0,0)%
\includegraphics{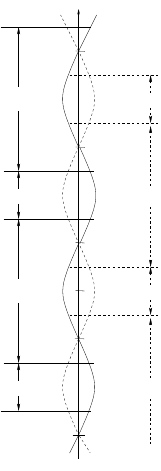}%
\end{picture}%
\setlength{\unitlength}{1263sp}%
\begingroup\makeatletter\ifx\SetFigFont\undefined%
\gdef\SetFigFont#1#2#3#4#5{%
  \reset@font\fontsize{#1}{#2pt}%
  \fontfamily{#3}\fontseries{#4}\fontshape{#5}%
  \selectfont}%
\fi\endgroup%
\begin{picture}(4002,11770)(3736,-9959)
\put(5926,464){\makebox(0,0)[lb]{\smash{{\SetFigFont{5}{6.0}{\rmdefault}{\mddefault}{\updefault}{\color[rgb]{0,0,0}$t=\frac{5}{2}\pi$}%
}}}}
\put(3751,-736){\makebox(0,0)[lb]{\smash{{\SetFigFont{5}{6.0}{\rmdefault}{\mddefault}{\updefault}{\color[rgb]{0,0,0}$\chi_1=1$}%
}}}}
\put(3751,-3136){\makebox(0,0)[lb]{\smash{{\SetFigFont{5}{6.0}{\rmdefault}{\mddefault}{\updefault}{\color[rgb]{0,0,0}$\chi_1=0$}%
}}}}
\put(3751,-5536){\makebox(0,0)[lb]{\smash{{\SetFigFont{5}{6.0}{\rmdefault}{\mddefault}{\updefault}{\color[rgb]{0,0,0}$\chi_1=1$}%
}}}}
\put(5251,-9886){\makebox(0,0)[lb]{\smash{{\SetFigFont{5}{6.0}{\rmdefault}{\mddefault}{\updefault}{\color[rgb]{0,0,0}$\psi=\pi/2$}%
}}}}
\put(3826,-7936){\makebox(0,0)[lb]{\smash{{\SetFigFont{5}{6.0}{\familydefault}{\mddefault}{\updefault}{\color[rgb]{0,0,0}$\chi_1=0$}%
}}}}
\put(6001,-1936){\makebox(0,0)[lb]{\smash{{\SetFigFont{5}{6.0}{\rmdefault}{\mddefault}{\updefault}{\color[rgb]{0,0,0}$\frac{3}{2}\pi$}%
}}}}
\put(6001,-4336){\makebox(0,0)[lb]{\smash{{\SetFigFont{5}{6.0}{\rmdefault}{\mddefault}{\updefault}{\color[rgb]{0,0,0}$\frac{1}{2}\pi$}%
}}}}
\put(6151,-5536){\makebox(0,0)[lb]{\smash{{\SetFigFont{5}{6.0}{\rmdefault}{\mddefault}{\updefault}{\color[rgb]{0,0,0}$0$}%
}}}}
\put(5926,-6736){\makebox(0,0)[lb]{\smash{{\SetFigFont{5}{6.0}{\rmdefault}{\mddefault}{\updefault}{\color[rgb]{0,0,0}$-\frac{1}{2}\pi$}%
}}}}
\put(6001,-9061){\makebox(0,0)[lb]{\smash{{\SetFigFont{5}{6.0}{\rmdefault}{\mddefault}{\updefault}{\color[rgb]{0,0,0}$-\frac{3}{2}\pi$}%
}}}}
\put(7126,-3136){\makebox(0,0)[lb]{\smash{{\SetFigFont{5}{6.0}{\rmdefault}{\mddefault}{\updefault}{\color[rgb]{0,0,0}$\chi_2=1$}%
}}}}
\put(7126,-5536){\makebox(0,0)[lb]{\smash{{\SetFigFont{5}{6.0}{\rmdefault}{\mddefault}{\updefault}{\color[rgb]{0,0,0}$\chi_2=0$}%
}}}}
\put(7126,-7936){\makebox(0,0)[lb]{\smash{{\SetFigFont{5}{6.0}{\rmdefault}{\mddefault}{\updefault}{\color[rgb]{0,0,0}$\chi_2=1$}%
}}}}
\put(7126,-736){\makebox(0,0)[lb]{\smash{{\SetFigFont{5}{6.0}{\rmdefault}{\mddefault}{\updefault}{\color[rgb]{0,0,0}$\chi_2=0$}%
}}}}
\put(5476,1664){\makebox(0,0)[lb]{\smash{{\SetFigFont{5}{6.0}{\rmdefault}{\mddefault}{\updefault}{\color[rgb]{0,0,0}$t$}%
}}}}
\end{picture}%
\caption{The cutoff functions.}
\label{cutoff}
\end{figure}

The definition above is such that in the time interval $t\in (-\f{3\pi} 4,\f {3\pi} 4]$, we take $w_{E,L,\lambda,\mathcal N}$ constructed previously, restrict it to $\psi\leq \f\pi 2$ and rescale it by $\f 1{\sqrt{1+r^2}}$. Then on the time interval $t\in (\pi-\f{3\pi} 4,\pi+\f {3\pi} 4]$, we take the part of $w_{E,L,\lambda,\mathcal N}$ that is supported in $\psi\geq \f\pi 2$, reflect it across the $\psi=\f\pi 2$ hypersurface, rescale by a factor $\f 1{\sqrt{1+r^2}}$ and then multiply by the factor $R$. As we will see later, the factor $R$ is chosen so that the boundary conditions are approximately satisfied. We then continue this successively, taking parts of the solutions in $\psi\leq \f\pi 2$ and $\psi \geq \f\pi 2$, reflecting when appropriate, and multiplying by factors of $R$'s.

\begin{Lemma}\label{lemma.u}
The function $u_{E,L,\lambda}$ defined as in \eq{u.tot.def} has the following properties:
\begin{align}
&E_{0}[u_{E,L,\lambda}] \to \infty \qquad \textrm{ as } \lambda \to \infty; \\
&\int_{S_{[0,T]}} (\Box_{g_{AdS}} u_{E,L,\lambda}+2u_{E,L,\lambda})^2 r^2(1+r^2) \,dr\,d\omega\,dt \leq C(T); \\
&\int_{\bar{\Sigma}_\infty^{[0, T]}} (\rd_t(ru_{E,L,\lambda}) +r^2\rd_r(ru_{E,L,\lambda}))^2  \,d\omega\, dt \leq C(T).
\end{align}
\end{Lemma}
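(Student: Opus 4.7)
The plan is to establish each of the three properties by reducing, via Lemma \ref{conf.trans}, to properties of the single-beam function $w_{E,L,\lambda,\mathcal{N}}$ on the Einstein cylinder. A preliminary observation: each indicator function in \eqref{u.tot.def} changes value only at times where $\gamma$ lies strictly inside the hemisphere opposite to the piece in question, and by the hypothesis on $\mathcal{N}$ the corresponding restriction of $w_{E,L,\lambda,\mathcal{N}}$ to $\{\psi \leq \pi/2\}$ vanishes identically on a whole time-neighborhood of each jump. Hence $u_{E,L,\lambda}$ is smooth and all subsequent computations can be performed piece-by-piece.

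For property (1), at $t=0$ only the $k=0$ term of the first sum in \eqref{u.tot.def} contributes, since $\gamma(0) \in \{\psi < \pi/2\}$. Under the coordinate change $r = \tan\psi$ and $u = w/\sqrt{1+r^2}$, a direct computation shows that each integrand of $E_0[u]$ coincides, up to an additional non-negative $w^2$ term, with the corresponding integrand of $\hat{E}_0[w]$. Hence $E_0[u_{E,L,\lambda}] \geq c\,\hat{E}_0[w_{E,L,\lambda,\mathcal{N}}]$, which diverges as $\lambda \to \infty$ by Theorem \ref{sbierski.thm}(2).

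For property (2), I would apply Lemma \ref{conf.trans} to each of the $O(N(T))$ pieces of \eqref{u.tot.def}, using also that $g_E$ is invariant under $\psi \mapsto \pi - \psi$ so that the reflected pieces still approximately satisfy $\Box_{g_E} w - w = 0$. The change-of-variables identity $r^2 (1+r^2)^{-2}\,dr = \sin^2\psi\,d\psi$, combined with the factor $(1+r^2)^{-3}$ from squaring the conformal identity, reduces the weighted integral for each piece exactly to $\|\Box_{g_E} w_{E,L,\lambda,\mathcal{N}} - w_{E,L,\lambda,\mathcal{N}}\|_{L^2(g_E,\,S_{[0,T]})}^2$, which is uniformly bounded in $\lambda$ by Theorem \ref{sbierski.thm}(1) and Lemma \ref{L2.bd}. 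Cauchy--Schwarz applied to the finite sum (with coefficients bounded by $|R|^k \leq 1$) gives the claimed uniform bound.

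Property (3) is the crux and motivates the precise definition of $R$. A direct computation gives $[\partial_t(ru) + r^2\partial_r(ru)]|_{\psi=\pi/2} = (\partial_t + \partial_\psi) w|_{\psi=\pi/2}$ when $u = w/\sqrt{1+r^2}$. Near the $k$-th boundary crossing of $\gamma$, two consecutive pieces of \eqref{u.tot.def} are simultaneously active and contribute to the boundary expression
\[
R^{2k}\bigl[(\partial_t + \partial_\psi) w_{E,L,\lambda,\mathcal{N}} + R(\partial_t - \partial_\psi) w_{E,L,\lambda,\mathcal{N}}\bigr]\Big|_{\psi = \pi/2},
\]
the sign flip arising from the reflection $\psi \mapsto \pi - \psi$ in the second piece. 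Writing $w = a_{\mathcal{N}} e^{i\lambda\varphi}$ and using $\partial_t\varphi|_{\gamma(s^*)} = -E$, $\partial_\psi\varphi|_{\gamma(s^*)} = \sqrt{E^2-L^2}$ at the crossing, the $O(\lambda)$-part of the bracket on $\gamma$ equals $i\lambda a_{\mathcal{N}}\bigl[(-E + \sqrt{E^2-L^2}) + R(-E - \sqrt{E^2-L^2})\bigr]$ and vanishes by the definition of $R$. Off $\gamma$ this coefficient is $O(\mathrm{dist}(q,\gamma))$, so the pointwise residue is at most $O(1) + O(\lambda \cdot \mathrm{dist}(q,\gamma))$. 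Combined with the Gaussian factor $|e^{i\lambda\varphi}|^2 = e^{-2\lambda \Im \varphi}$ from Lemma \ref{d2phi}---which, by the transversality of $\gamma$ to $\{\psi = \pi/2\}$, restricts to a positive-definite quadratic form in all three transverse boundary directions---a standard Gaussian-integral computation yields a bound $O(\lambda^{-3/2}) + O(\lambda^{-1/2})$ per crossing, uniformly in $\lambda$. Summing the $O(N(T))$ crossings completes the proof. The main technical obstacle I anticipate is justifying the positive-definiteness of $\Im M$ restricted to the boundary tangent space, which amounts to observing that $\ker(\Im M) = \mathrm{span}(\dot\gamma)$ has trivial intersection with the boundary tangent space by transversality.
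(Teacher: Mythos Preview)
Your proposal is correct and follows essentially the same route as the paper's proof: the same smoothness argument via the indicator supports, the same conformal reduction for (2), and the same cancellation-plus-Gaussian argument for (3), including the correct identification of transversality of $\gamma$ to $\{\psi=\pi/2\}$ (which holds since $\dot\Psi=\sqrt{E^2-L^2}>0$) as the reason $\Im M$ restricts positive-definitely to the boundary. The only cosmetic difference is that the paper isolates your ``standard Gaussian-integral computation'' as a separate Lemma \ref{gaussian lemma}, obtaining the bound $\lambda^{-5/2}$ for the squared $L^2$-norm of the linearly-vanishing coefficient times $e^{i\lambda\varphi}$, which matches your $O(\lambda^{-1/2})$ after multiplying by $\lambda^2$.
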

\begin{proof}
First note that $\hat{E}_0[w_{E,L,\lambda}]\leq CE_{0}[u_{E,L,\lambda}]$, so the first claim follows from Theorem \ref{sbierski.thm}. Next recall that we have assumed that the neighbourhood $\mathcal N$ has been taken sufficiently small so that for each of the summands, the indicator function $\chi$ is constant on the support of $w_{E,L,\lambda,\mathcal N}$ when restricted to a hemisphere. It therefore suffices to consider only the contributions from $w_{E,L,\lambda,\mathcal N}$ (as no derivatives fall on the indicator functions). We will write 
$$u_{E,L,\lambda}=u_{E,L,\lambda,1}+u_{E,L,\lambda,2},$$
where $u_{E,L,\lambda,1}$ is the first sum in \eqref{u.tot.def} and $u_{E,L,\lambda,2}$ is the second sum in \eqref{u.tot.def}. For $u_{E,L,\lambda,1}$, we apply Lemma \ref{conf.trans} to get
\begin{equation*}
\begin{split}
&\int_{S_{[0,T]}} (\Box_{g_{AdS}} u_{E,L,\lambda,1}+2u_{E,L,\lambda,1})^2 r^2(1+r^2) \,dr\,d\omega\,dt\\
\leq &\int_{S_{[0,T]}} \f{1}{(1+r^2)^3}( \Box_{g_E} w_{E,L,\lambda,\mathcal N}- w_{E,L,\lambda,\mathcal N})^2 r^2(1+r^2) \,dr\,d\omega\,dt\\
=&\int_{S_{[0,T]}} ( \Box_{g_E} w_{E,L,\lambda,\mathcal N}- w_{E,L,\lambda,\mathcal N})^2 \sin^2\psi \,d\psi\,d\omega\,dt\\
\leq &C(T).
\end{split}
\end{equation*}
In the last line, we have used the estimates in Theorem \ref{sbierski.thm} and Lemma \ref{L2.bd}. By a straightforward identification of the two hemispheres in $\mathcal M_E$, we can prove a similar bound for $u_{E,L,\lambda,2}$:
\begin{equation*}
\begin{split}
&\int_{S_{[0,T]}} (\Box_{g_{AdS}} u_{E,L,\lambda,2}+2u_{E,L,\lambda,2})^2 r^2(1+r^2) \,dr\,d\omega\,dt\\
\leq &\int_{S_{[0,T]}} ( \Box_{g_E} w_{E,L,\lambda,\mathcal N}- w_{E,L,\lambda,\mathcal N})^2 \sin^2\psi \,d\psi\,d\omega\,dt
\leq C(T).
\end{split}
\end{equation*}
This concludes the proof of the second claim.

We now show that the boundary terms are appropriately bounded in $L^2$.  By construction, there are only contributions to the boundary terms in a neighborhood of $\f 1\pi(t-\f \pi 2)\in \mathbb N$. Since there are at most $O(N)=O(T)$ such contributions, it suffices to show that one of them is bounded. We will look at the boundary contribution near $t=\f{\pi}{2}$, which takes the form
\begin{equation}\label{main.bdry.term}
\begin{split}
&\lim_{r\to \infty}(r\rd_t u_{E,L,\lambda}+r^2\rd_r(r u_{E,L,\lambda}))(t,r,\th,\phi)\\
= &\lim_{\psi\to \f\pi 2^-} (\rd_t w_{E,L,\lambda,\mathcal N}+\rd_\psi w_{E,L,\lambda,\mathcal N})(t,\psi,\th,\phi)\\
&+\lim_{\psi\to \f\pi 2^-}R(\rd_tw_{E,L,\lambda,\mathcal N}+\rd_\psi w_{E,L,\lambda,\mathcal N})(t,\pi-\psi,\th,\phi)\\
=&i\lambda a_{\mathcal N}((\rd_t\varphi+\rd_\psi\varphi)+R(\rd_t\varphi-\rd_\psi\varphi)) e^{i\lambda \varphi(t,\psi=\f\pi 2,\th,\phi)}\\
&+((\rd_t a_{\mathcal N}+\rd_\psi a_{\mathcal N})+R(\rd_t a_{\mathcal N}-\rd_\psi a_{\mathcal N})) e^{i\lambda \varphi(t,\psi=\f\pi 2,\th,\phi)}.
\end{split}
\end{equation}
The latter term is clearly bounded pointwise independent of $\lambda$. The first term has a factor of $\lambda$ and we will show that it is nevertheless bounded in $L^2$ since by the choice of $R$, $((\rd_t\varphi+\rd_\psi\varphi)+R(\rd_t\varphi-\rd_\psi\varphi))$ vanishes on $\gamma$. More precisely, by points (1), (2) in Section \ref{Construct.GB} and Lemma \ref{d2phi}, we have
$$\Im\varphi(t,\psi=\f\pi 2,\th,\phi)\geq \alpha ((t-\f{\pi}{2})^2+(\th-\f \pi 2)^2+(\phi-\int_0^{\f{E\pi}{2}}\f{L E^2}{(E^2-L^2)\sin^2(Es')+L^2} ds')^2)$$
for some $\alpha>0$. We further claim that 
$$((\rd_t\varphi+\rd_\psi\varphi)+R(\rd_t\varphi-\rd_\psi\varphi))\left.\right|_{(t=\f\pi 2,\,\psi=\f{\pi}{2},\, \th=\f\pi 2,\, \phi= \int_0^{\f{E\pi}{2}}\f{L E^2}{(E^2-L^2)\sin^2(Es')+L^2})}=0.$$
This follows from the fact that $d\varphi=\dot\gamma_\flat$ on $\gamma$ and the choice of $R$. More precisely, we have
\begin{equation*}
\begin{split}
&((\rd_t\varphi+\rd_\psi\varphi)+R(\rd_t\varphi-\rd_\psi\varphi))\left.\right|_{(t=\f\pi 2,\,\psi=\f{\pi}{2},\, \th=\f\pi 2,\, \phi= \int_0^{\f{E\pi}{2}}\f{L E^2}{(E^2-L^2)\sin^2(Es')+L^2})}\\
=& (-E+\sqrt{E^2-L^2}+R(-E-\sqrt{E^2-L^2}))\\
=&0.
\end{split}
\end{equation*}
The desired $L^2$ bound on the boundary then follows from Lemma \ref{gaussian lemma}.
\end{proof}

\begin{Lemma}\label{gaussian lemma}
Suppose that $f$ is a function defined on $\{\psi=\f\pi 2\}$ which vanishes to order $0$ at $(t=\f\pi 2,\, \th=\f\pi 2,\, \phi= \int_0^{\f{E\pi}{2}}\f{L E^2}{(E^2-L^2)\sin^2(Es')+L^2})$. Then
\be
\int d\omega\, dt |f e^{i \lambda \varphi}|^2 \leq C_{f, \varphi} \lambda^{- \frac{5}{2}}.
\ee
\end{Lemma}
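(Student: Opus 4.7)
The strategy is the standard stationary-phase / Gaussian-beam estimate, exploiting that $|e^{i\lambda\varphi}|^2=e^{-2\lambda\,\Im\varphi}$ decays like a Gaussian transverse to $\gamma$ with width $\sim \lambda^{-1/2}$, while the prefactor $|f|^2$ provides an extra power of the squared distance to the center. Let me denote the center point on the boundary by
\[
p_0=\Bigl(t=\tfrac{\pi}{2},\ \th=\tfrac{\pi}{2},\ \phi=\int_0^{E\pi/2}\tfrac{LE^2}{(E^2-L^2)\sin^2(Es')+L^2}\,ds'\Bigr),
\]
and write $x=(t,\th,\phi)$ for the coordinates on $\{\psi=\pi/2\}$. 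I will interpret the hypothesis ``$f$ vanishes to order $0$ at $p_0$'' as $f(p_0)=0$, which is exactly what is verified in the proof of Lemma \ref{lemma.u} via the choice of the reflection coefficient $R$.

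\textbf{Step 1 (lower bound on $\Im\varphi$).} By construction (see Lemma \ref{d2phi}) one has $\Im\varphi|_\gamma=0$, $\Im(\nabla\varphi)|_\gamma=0$, and $\Im(\nabla\nabla\varphi)|_\gamma$ is positive definite on a $3$-dimensional subspace transversal to $\dot\gamma$. The slice $\{\psi=\pi/2\}$ is transversal to $\dot\gamma$ at $p_0$, so restricting $\Im\varphi$ to this slice yields a function of $(t,\th,\phi)$ with a non-degenerate critical point at $p_0$ and positive-definite Hessian. Taylor expansion gives a neighborhood $U$ of $p_0$ and constants $\alpha,\beta>0$ with
\[
\Im\varphi(x)\geq \alpha\,|x-p_0|^2\quad\text{for }x\in U,\qquad \Im\varphi(x)\geq \beta\quad\text{for }x\in \mathrm{supp}(f)\setminus U,
\]
the second inequality using compact support of $f$ (inherited from $a_\mathcal N$) together with the fact that on $\{\psi=\pi/2\}$ the geodesic $\gamma$ crosses only at $p_0$ within the relevant time window.

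\textbf{Step 2 (bound on the prefactor).} Since $f$ is smooth and $f(p_0)=0$, Taylor's theorem yields $|f(x)|\leq C\,|x-p_0|$ on the support of $f$, hence $|f(x)|^2\leq C\,|x-p_0|^2$.

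\textbf{Step 3 (split and estimate).} Split the integral as $\int_U+\int_{\mathrm{supp}(f)\setminus U}$. On $\mathrm{supp}(f)\setminus U$, $|f|^2$ is bounded and $e^{-2\lambda\Im\varphi}\leq e^{-2\lambda\beta}$, so this contribution decays faster than any power of $\lambda$. On $U$, combine Steps 1--2:
\[
\int_U |f|^2 e^{-2\lambda\Im\varphi}\,d\omega\,dt \leq C\int_U |x-p_0|^2\,e^{-2\alpha\lambda |x-p_0|^2}\,dx.
\]
Change variables $y=\sqrt{\lambda}\,(x-p_0)$; the Jacobian contributes $\lambda^{-3/2}$ (three variables), and the integrand becomes $\lambda^{-1}|y|^2 e^{-2\alpha|y|^2}$, giving a total of $\lambda^{-5/2}$ times an absolutely convergent Gaussian integral. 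Combining the two regions yields the claim with $C_{f,\varphi}$ depending on $\alpha,\beta$, the $C^1$ norm of $f$, and the $C^2$ norm of $\Im\varphi$ near $p_0$.

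\textbf{Expected obstacle.} The only subtle point is justifying the pointwise quadratic lower bound $\Im\varphi \geq \alpha|x-p_0|^2$ on the three-dimensional slice $\{\psi=\pi/2\}$: one must verify that the Hessian of $\Im\varphi$, which is positive definite only on a $3$-plane transversal to $\dot\gamma$ in the ambient $4$-space, remains positive definite when restricted to the tangent space of the boundary slice. This holds because $\dot\gamma$ is not tangent to $\{\psi=\pi/2\}$ at $p_0$ (the geodesic strikes the boundary with nonzero $\psi$-component of velocity, since $|L|<E$ forces $\dot\Psi(p_0)\neq 0$), so the boundary slice is indeed transversal to $\dot\gamma$ at $p_0$, and the restriction of $\Im(\nabla\nabla\varphi)$ is positive definite there.
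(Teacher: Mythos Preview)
Your proof is correct and follows essentially the same approach as the paper: bound $|f|^2\lesssim |x-p_0|^2$ using smoothness and vanishing at $p_0$, bound $e^{-2\lambda\Im\varphi}$ by a Gaussian via the quadratic lower bound on $\Im\varphi$, then rescale $y=\sqrt{\lambda}(x-p_0)$ to extract $\lambda^{-5/2}$. The only minor differences are that the paper establishes the quadratic lower bound on $\Im\varphi$ just \emph{before} stating the lemma (in the proof of Lemma~\ref{lemma.u}), so its proof of the lemma omits your Step~1 and your near/far split, working directly with the global Gaussian bound; your transversality discussion in the ``Expected obstacle'' is the content the paper implicitly uses there.
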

\begin{proof}
In order to simplify notation, we define
$$\phi_0:=\int_0^{\f{E\pi}{2}}\f{L E^2}{(E^2-L^2)\sin^2(Es')+L^2} ds'.$$
The statement that $f$ vanishes to order $0$ is equivalent to
\be
\abs{f} \leq C \left( (t-\f{\pi}{2})^2+(\th-\f \pi 2)^2+(\phi-\phi_0)^2 \right)^{\frac{1}{2}}.
\ee
Thus
\begin{align*}
&\int d\omega\, dt |f e^{i \lambda \varphi}|^2 
= \int d\omega\, dt \abs{f}^2 e^{-2\lambda  \Im \varphi} \\
 \leq & C \int_0^\pi \sin \th d\th \int_0^{2\pi} d\phi \int_0^\pi dt \left( (t-\f{\pi}{2})^2+(\th-\f \pi 2)^2+(\phi-\phi_0)^2 \right) e^{-2 \lambda \alpha  ((t-\f{\pi}{2})^2+(\th-\f \pi 2)^2+(\phi-\phi_0)^2)} \\
\leq &C \int_{\R^3} \abs{\bm{x}}^2 e^{-2 \alpha \lambda \abs{\bm{x}}^2} dx
\leq C \lambda^{ -\frac{5}{2}},
\end{align*}
where in the last line we simply scale $\lambda$ out of the integral.
\end{proof}

\subsubsection{Building a true solution}

Given the approximate solution constructed above, we are now ready to build a true solution to the homogeneous conformally coupled wave equation with dissipative boundary condition. To this end, we need a strengthening of Theorem \ref{wave conservation} which includes inhomogeneous terms:
\begin{Theorem}\label{inhom wave conservation}
Let $u$ be a solution of the inhomogeneous conformally coupled wave equation
\begin{equation}\label{wave.eqn.inho}
\Box_{g_{AdS}} u+2u = f \qquad \textrm{ in AdS}
\end{equation}
with finite (renormalized) energy initial data
\be
E_{T_1}(u)<\infty
\ee
and subject to inhomogeneous dissipative boundary conditions
\be
\frac{\partial (r u)}{\partial t} + r^2\frac{\partial (r u)}{\partial r} \to  g, \qquad \textrm{ as }r\to \infty.
\ee
Then we have for any $T_1<t<T_2$:
\be
E_{t}[u] \leq C_{T_1, T_2}\left(  E_{T_1}[u] + \int_{\tilde{\Sigma}_\infty^{[T_1, T_2]}} g^2 d\omega dt + \int_{S_{[T_1, T_2}]} f^2 r^2 (1+r^2) dr d\omega dt\right).
\ee
\end{Theorem}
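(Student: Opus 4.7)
The plan is to repeat the Killing-field energy identity from Proposition \ref{wave conservation}, but now carefully book-keeping the two inhomogeneous contributions: an interior term coming from $f$, and a boundary term coming from $g$. A Gronwall argument will then absorb the interior contribution.

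First I would apply the divergence theorem (Lemma \ref{Divergence Theorem}) to the current ${}^T\J^a = \T^a{}_b T^b$ on the slab $S_{[T_1, t]}$, where $\T$ is the twisted energy-momentum tensor \eq{reem}. Since $T = \partial_t$ is Killing and $T(r)=0$, we have $T^b \tilde\nabla_b 1 = 0$, and the identity \eq{twisted em divergence} reduces to
\be
\nabla_a {}^T\J^a = (\Box_g u + 2u)\, T^b \tilde\nabla_b u = f\, \partial_t u,
\ee
using that $T^b \tilde\nabla_b u = \partial_t u$ by direct computation. The spatial flux integrals on $\Sigma_{T_1}, \Sigma_t$ still reproduce $E_{T_1}[u], E_t[u]$ exactly as before.

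Next I would redo the boundary calculation on $\tilde\Sigma_r^{[T_1,t]}$ verbatim as in Proposition \ref{wave conservation}. The algebraic identity
\be
2 r^2(1+r^2)(\partial_t u)(\tilde\partial_r u) = -\left[r^2 (\partial_t u)^2 + r^2(1+r^2)^2 (\tilde\partial_r u)^2\right] + \left\{\partial_t(ru) + r(1+r^2)(\tilde\partial_r u)\right\}^2
\ee
remains valid; however, as $r\to\infty$ the expression $r(1+r^2)\tilde\partial_r u$ coincides (to leading order) with $r^2 \partial_r(ru)$, so under the inhomogeneous boundary condition the braced term converges to $g$ rather than $0$. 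Combining everything gives the identity
\begin{align*}
E_t[u] &+ \tfrac{1}{2}\int_{\tilde\Sigma_\infty^{[T_1, t]}} \!\!\! \left(r^2(\partial_t u)^2 + r^6 (\tilde\partial_r u)^2\right) d\omega\, dt \\
&= E_{T_1}[u] + \tfrac{1}{2} \int_{\tilde\Sigma_\infty^{[T_1, t]}} g^2\, d\omega\, dt - \int_{S_{[T_1, t]}} f\, \partial_t u\; d\eta.
\end{align*}

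Finally, I would control the interior cross term by Cauchy--Schwarz with the weight dictated by the energy: writing $f\,\partial_t u\, r^2 = \bigl(f\, r\sqrt{1+r^2}\bigr)\bigl(r\partial_t u /\sqrt{1+r^2}\bigr)$ we obtain
\be
\left|\int_{S_{[T_1,t]}} f\,\partial_t u\; d\eta\right| \leq \left(\int_{S_{[T_1,t]}}\!\! f^2 r^2 (1+r^2)\, dr\, d\omega\, dt\right)^{\!\!\tfrac12} \!\left(2\int_{T_1}^t E_s[u]\, ds\right)^{\!\!\tfrac12},
\ee
since $(\partial_t u)^2/(1+r^2)$ is a summand in the energy density. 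Discarding the nonnegative boundary flux on $\scri$, applying Young's inequality $ab \leq a^2 + \tfrac14 b^2$ to split the product, and invoking Gronwall's inequality on $[T_1, T_2]$ yields
\be
E_t[u] \leq e^{(T_2-T_1)/2}\left(E_{T_1}[u] + \tfrac{3}{2}\int_{\tilde\Sigma_\infty^{[T_1, T_2]}}\!\! g^2\, d\omega\, dt + \tfrac{3}{2}\int_{S_{[T_1, T_2]}}\!\! f^2 r^2(1+r^2)\, dr\, d\omega\, dt\right),
\ee
which gives the claimed estimate with $C_{T_1, T_2}$ growing exponentially in $T_2 - T_1$.

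The only conceptual point that requires care is the boundary calculation: one must check that the convergence $\{\partial_t(ru) + r(1+r^2)\tilde\partial_r u\} \to g$ holds in a strong enough sense to justify passing to the limit under the integral (i.e.\ that the local well-posedness theory for the inhomogeneous problem produces solutions whose asymptotic expansion is compatible with $g$). Everything else is the straightforward inhomogeneous analogue of the calculations already performed in the proof of Proposition \ref{wave conservation}.
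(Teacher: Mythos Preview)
Your argument is correct and follows the paper's proof almost verbatim: the same current ${}^T\J$, the same boundary computation producing the $g^2$ contribution, and the same weighted Cauchy--Schwarz on the bulk $f\,\partial_t u$ term. The only difference is in the absorption step: the paper splits via Young with weight $\tfrac{1}{4(T_2-T_1)}$, bounds $\int_{T_1}^{T_2} \tfrac{(\partial_t u)^2}{1+r^2} r^2\,dr\,d\omega\,dt \leq (T_2-T_1)\sup_{[T_1,T_2]} 2E_t[u]$, and absorbs the sup to the left, giving $C_{T_1,T_2}$ linear in $T_2-T_1$; your Gronwall closure gives an exponential constant, which is weaker but perfectly sufficient for the stated theorem.
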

\begin{proof}
We have by \eq{twisted em divergence} that
\be
\textrm{Div} \left({}^T\J \right)= \left(\Box_{AdS} u+2u  \right) \p_t u = f \p_t u.
\ee
Integrating this over $S_{[T_1, T_2]}$ and applying the divergence theorem we pick up terms on the left hand side from $\Sigma_{T_1}, \Sigma_{T_2}$ and $\tilde{\Sigma}_\infty^{[T_1, T_2]}$. A straightforward calculation shows
\be
\int_{\Sigma_t} {}^T\J_a n^a dS_{\Sigma_t} = E_t[u]
\ee
We also find
\begin{align*}
\int_{\tilde{\Sigma}_r^{[T_1, T_2]}} {}^T\J_a m^a dS_{\tilde{\Sigma}_r} &= \int_{\mathbb S^2} r^2(1+r^2) \left( \p_t u\right ) (\tilde{\p}_r u) d\omega dt \\
&= -\frac{1}{2} \int_{\mathbb S^2} \Bigg[ \left(r^2  (\p_t u)^2 + r^2(1+r^2)^2 (\tilde{\p}_r u)^2\right) \\
&\qquad - \left\{ \p_t( r u) + r (1+r^2) (\tilde{\p}_r u) \right\}^2 \Bigg] d\omega dt.
\end{align*}
As $r \to \infty$, the term in braces may be replaced with $g$, so we have
\be
\lim_{r \to \infty} \int_{\tilde{\Sigma}_r^{[T_1, T_2]}} {}^T\J_a m^a dS_{\tilde{\Sigma}_r} =-\frac{1}{2} \int_{\tilde{\Sigma}_\infty^{[T_1, T_2]}} \left(r^2  (\p_t u)^2 + r^6 (\tilde{\p}_r u)^2 - g^2\right) d\omega dt.
\ee
Applying Lemma \ref{Divergence Theorem}, we have:
\be
E_{t}[u] - E_{T_1}[u] = \frac{1}{2} \int_{\tilde{\Sigma}_\infty^{[T_1, t]}} \left(-r^2  (\p_t u)^2 - r^6 (\tilde{\p}_r u)^2 + g^2\right) d\omega dt + \int_{S_{[T_1, t]}} f \p_t u r^2 dr dt d\omega
\ee
so that
\begin{align} \nonumber
\sup_{t\in[T_1, T_2]} E_{t}[u] &\leq E_{T_1}[u]  +  \frac{1}{2} \int_{\tilde{\Sigma}_\infty^{[T_1, T_2]}}  g^2 d\omega dt + \frac{1}{4(T_2 - T_1)} \int_{S_{[T_1, T_2]}} \frac{(\p_t u)^2}{1+r^2} r^2 dr dt d\omega  \\ &\quad + (T_2 - T_1)  \int_{S_{[T_1, T_2}]} f^2 r^2 (1+r^2) dr d\omega dt. \label{inhom estimate}
\end{align}
Now
\be
 \frac{1}{4(T_2 - T_1)}  \int_{S_{[T_1, T_2]}} \frac{(\p_t u)^2}{1+r^2} r^2 dr dt d\omega  \leq \frac{1}{4} \sup_{t \in [T_1, T_2]} \int_{\Sigma_t} \frac{(\p_t u)^2}{1+r^2} r^2 dr d\omega \leq \frac{1}{2} \sup_{t\in[T_1, T_2]} E_{t}[u].
\ee
Applying this estimate to \eq{inhom estimate} and absorbing the energy term on the left hand side, we are done.
\end{proof}

After taking $\lambda$ to be large, we now construct true solutions of the wave equation with dissipative boundary conditions which only have a small loss of energy. To do this, we define $\tilde{u}_{E,L,\lambda}$ to be $u_{E,L\lambda}$ (defined by \eqref{u.tot.def} in the previous subsection) multiplied by a constant factor in such a way that
\be
E_0(\tilde{u}_{E,L,\lambda}) = 1.
\ee
Let $U_{E,L,\lambda}$ solve the homogeneous initial-boundary value problem:
\be
\Box_{g_{AdS}} U_{E,L,\lambda}+2U_{E,L,\lambda} = 0
\ee
subject to dissipative boundary conditions
\be
\frac{\partial (r U_{E,L,\lambda})}{\partial t} + r^2 \frac{\partial (rU_{E,L,\lambda})}{\partial r} \to 0, \qquad \textrm{ as }r\to \infty,
\ee
and initial conditions
\be
\left. U_{E,L,\lambda} \right|_{t=0} = \left. \tilde{u}_{E,L,\lambda}\right|_{t=0}, \quad \left. \frac{\partial U_{E,L,\lambda}}{\partial t} \right|_{t=0} = \left. \frac{\partial \tilde{u}_{E,L,\lambda}}{\partial t}\right|_{t=0}.
\ee
We obtain the following theorem:
\begin{Theorem}
Fix $E>0$ and $0\leq |L|<E$. Fix also $T, \epsilon>0$. There exists a solution $U_{E,L,\lambda}$ of the homogeneous conformally coupled wave equation with dissipative boundary conditions such that $U_{E,L,\lambda}$ has energy $1$ at time $0$, and
\be
\inf_{t\in[0,T]}E_t(U_{E,L,\lambda}) \geq \left( \frac{E-\sqrt{E^2-L^2}}{E+\sqrt{E^2-L^2}}\right)^{CT}-\epsilon,
\ee
where $C>0$ is some universal constant.
\end{Theorem}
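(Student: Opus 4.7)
My strategy is to compare the true solution $U_{E,L,\lambda}$ with the approximate solution $\tilde u_{E,L,\lambda}$, using the inhomogeneous energy estimate from Theorem \ref{inhom wave conservation} to show they are close, and then to compute $E_t(\tilde u_{E,L,\lambda})$ directly from the reflected-Gaussian-beam construction of \eqref{u.tot.def}.

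First I would control the error. Set $V := \tilde u_{E,L,\lambda} - U_{E,L,\lambda}$; then $V$ solves $\Box_{g_{AdS}} V + 2V = f$ with inhomogeneous dissipative boundary data $g := \lim_{r\to\infty}[\partial_t(rV) + r^2\partial_r(rV)]$ and zero initial data, where $f := \Box_{g_{AdS}} \tilde u_{E,L,\lambda} + 2 \tilde u_{E,L,\lambda}$ and $g := \lim_{r\to\infty}[\partial_t(r\tilde u_{E,L,\lambda}) + r^2 \partial_r(r\tilde u_{E,L,\lambda})]$. Since $\tilde u_{E,L,\lambda} = u_{E,L,\lambda}/\sqrt{E_0(u_{E,L,\lambda})}$, Lemma \ref{lemma.u} gives
$$\int_{S_{[0,T]}} f^2\, r^2(1+r^2)\, dr\, d\omega\, dt \;+\; \int_{\tilde\Sigma_\infty^{[0,T]}} g^2\, d\omega\, dt \;\leq\; \frac{C(T)}{E_0(u_{E,L,\lambda})} \; \longrightarrow\; 0$$
as $\lambda \to \infty$, by Theorem \ref{sbierski.thm}(2). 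Theorem \ref{inhom wave conservation} then yields $\sup_{t\in[0,T]} E_t(V) \to 0$ as $\lambda \to \infty$.

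Next I would bound $E_t(\tilde u_{E,L,\lambda})$ from below. Because the tubular neighborhood $\mathcal N$ is chosen sufficiently small, for $t$ away from the reflection times $(2k+1)\pi/2$ only a single summand in \eqref{u.tot.def} is nonzero, and this summand equals $R^{k(t)}$ times either $w_{E,L,\lambda,\mathcal N}/\sqrt{1+r^2}$ or its hemispheric reflection, with $k(t) \leq N \leq 2T/\pi + 1$. Using the conformal identification of Lemma \ref{conf.trans}, one checks that the AdS energy $E_t$ of $w_{E,L,\lambda,\mathcal N}/\sqrt{1+r^2}$ restricted to the hemisphere $\{\psi \leq \pi/2\}$ is comparable, up to a factor $(1+O(\mu))$, to the Einstein-cylinder energy $\hat E_t(w_{E,L,\lambda,\mathcal N})$ whenever $w_{E,L,\lambda,\mathcal N}$ is supported in that hemisphere (which it is, in the relevant time windows). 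The Gaussian-beam conservation statement Theorem \ref{sbierski.thm}(4) then gives $\hat E_t(w_{E,L,\lambda,\mathcal N}) \sim E(\gamma)^2$ uniformly in $t\in[0,T]$, so
$$E_t(\tilde u_{E,L,\lambda}) \;\geq\; (1-\mu)\, |R|^{2 k(t)}\, E_0(\tilde u_{E,L,\lambda}) \;\geq\; (1-\mu) \left(\frac{E-\sqrt{E^2-L^2}}{E+\sqrt{E^2-L^2}}\right)^{CT}$$
for some universal $C>0$.

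Finally, since $E_t(\cdot)$ is induced by a non-negative quadratic form, the triangle inequality $\sqrt{E_t(U_{E,L,\lambda})} \geq \sqrt{E_t(\tilde u_{E,L,\lambda})} - \sqrt{E_t(V)}$ holds. Given $\epsilon > 0$, I would first fix $\mu$ small enough to absorb the $(1-\mu)$ factor above, and then choose $\lambda$ large enough that $\sup_{t\in[0,T]} E_t(V) < \epsilon^2/4$; squaring the triangle inequality then yields the claimed lower bound. The main obstacle is Step 2: the Gaussian beam construction of \cite{Sbierski:2013mva} is phrased in terms of the Einstein-cylinder energy $\hat E_t(w)$ and does not a priori control the energy restricted to a single hemisphere, nor relate it to the AdS energy of $w/\sqrt{1+r^2}$. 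Resolving this requires the hemispheric localization of $\mathcal N$ (to discard the portion straddling $\{\psi = \pi/2\}$ except in short intervals near reflections) together with an explicit conformal comparison of the two energy densities near the beam.
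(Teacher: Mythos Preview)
Your proposal is correct and follows essentially the same approach as the paper: apply Theorem~\ref{inhom wave conservation} to the difference $U_{E,L,\lambda}-\tilde u_{E,L,\lambda}$ using Lemma~\ref{lemma.u} to kill the inhomogeneities as $\lambda\to\infty$, then read off the lower bound for $E_t(\tilde u_{E,L,\lambda})$ directly from the $R^{k}$ coefficients in \eqref{u.tot.def}. In fact you supply more detail than the paper does---the paper asserts the lower bound ``by the construction of $u_{E,L,\lambda}$'' and says the conclusion ``follow[s] straightforwardly,'' whereas you spell out the hemispheric localization, the conformal comparison of energies, and the triangle inequality; the obstacle you flag in Step~2 is real but minor, and is handled exactly by the support properties of $\mathcal N$ stated just before \eqref{u.tot.def}.
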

\begin{proof}
Consider $U_{E,L,\lambda}$ as defined above, which is a solution to the homogeneous conformal wave equation. By Lemma \ref{lemma.u}, as $\lambda\to \infty$, we have
$$\int_{S_{[0,T]}} (\Box_{g_{AdS}} \tilde u_{E,L,\lambda}+2\tilde u_{E,L,\lambda})^2 r^2(1+r^2) \,dr\,d\omega\,dt \to 0 $$
and
$$\int_{\bar{\Sigma}_\infty^{[0, T]}} (\rd_t(r\tilde u_{E,L,\lambda}) +r^2\rd_r(r\tilde u_{E,L,\lambda}))^2  \,d\omega\, dt \to 0.$$
Therefore, after taking $\lambda$ to be sufficiently large and applying Theorem \ref{inhom wave conservation} to $U_{E,L,\lambda}-\tilde{u}_{E,L,\lambda}$,
we can assume
$$\sup_{t\in[0,T]} E_t[U_{E,L,\lambda}-\tilde{u}_{E,L,\lambda}]<\epsilon.$$
On the other hand, by the construction of $u_{E,L,\lambda}$, we have
$$\inf_{t\in[0,T]} E_t[\tilde{u}_{E,L,\lambda}]\geq \left( \frac{E-\sqrt{E^2-L^2}}{E+\sqrt{E^2-L^2}}\right)^{2N+1}\geq \left( \frac{E-\sqrt{E^2-L^2}}{E+\sqrt{E^2-L^2}}\right)^{CT}$$
for some $C>0$. The results follow straightforwardly.
\end{proof}

In particular, by taking $|L|$ sufficiently close to $E$ and $\epsilon$ sufficiently small, this implies that on the time interval $[0,T]$, the loss of energy can be arbitrarily small. This also implies that any uniform integrated decay estimates without loss do not hold:
\begin{Corollary} \label{cor:ert}
There exists no constant $C>0$, such that
\be
\int_0^\infty E_t[u] dt \leq C E_0[u]
\ee
holds for every solution $u$ to the conformal wave equation with finite initial energy subject to dissipative boundary conditions.

Similarly, there exists no continuous positive function $f: \R_+ \to \R_+$, such that $f(t) \to 0$ as $t \to \infty$ and
\be
E_t[u] \leq f(t) E_0[u],
\ee
holds for every solution $u$ to the conformal wave equation with finite initial energy subject to dissipative boundary conditions.
\end{Corollary}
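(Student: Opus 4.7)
The plan is to derive both assertions by contradiction, using the family of solutions $U_{E,L,\lambda}$ supplied by the preceding theorem. The key point is that, given any $T>0$ and any $\delta>0$, by choosing $|L|$ sufficiently close to $E$ so that the ratio $\frac{E-\sqrt{E^2-L^2}}{E+\sqrt{E^2-L^2}}$ is close enough to $1$, and then choosing $\lambda$ sufficiently large and $\epsilon$ small, one obtains a solution $U = U_{E,L,\lambda}$ of the conformal wave equation with dissipative boundary conditions satisfying $E_0[U]=1$ and
\be
\inf_{t\in[0,T]} E_t[U] \geq 1-\delta.
\ee
This is the only ingredient I will need beyond standard arguments.

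For the first assertion, I would argue as follows. Assume for contradiction that there exists $C>0$ with $\int_0^\infty E_t[u]\,dt \leq C E_0[u]$ for every admissible $u$. Fix any $T>2C$, and using the construction above with $\delta = \tfrac{1}{2}$ produce $U$ with $E_0[U]=1$ and $E_t[U]\geq \tfrac{1}{2}$ throughout $[0,T]$. Then
\be
\frac{T}{2} \leq \int_0^T E_t[U]\,dt \leq \int_0^\infty E_t[U]\,dt \leq C,
\ee
contradicting the choice $T>2C$.

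For the second assertion, assume for contradiction that there is a continuous positive function $f$ with $f(t)\to 0$ satisfying $E_t[u] \leq f(t) E_0[u]$ for every admissible $u$. Pick $t_0>0$ so large that $f(t_0) < \tfrac{1}{2}$, and apply the construction above with $T=t_0$ and $\delta=\tfrac{1}{2}$ to obtain $U$ with $E_0[U]=1$ and $E_{t_0}[U] \geq \tfrac{1}{2}$. Then $\tfrac{1}{2} \leq E_{t_0}[U] \leq f(t_0) < \tfrac{1}{2}$, a contradiction.

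The main obstacle has already been overcome in the previous theorem, namely the Gaussian beam construction producing solutions that retain an arbitrarily large fraction of their initial energy on an arbitrarily long time interval. Given that, the corollary reduces to the two elementary contradictions above, and no further technical work is required.
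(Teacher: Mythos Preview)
Your argument is correct and is exactly the intended one: the paper states the corollary immediately after the Gaussian beam theorem without further proof, precisely because the two elementary contradictions you wrote down are all that remain once one has solutions retaining an arbitrarily large fraction of their initial energy on arbitrarily long time intervals. The only cosmetic point is that in the theorem as stated the parameter you actually choose is $\epsilon$ (the choice of $\lambda$ is absorbed into the existence statement), but this does not affect the argument.
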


\begin{Remark}
We are grateful to an anonymous referee, who points out that our construction above can in fact be adapted to establish the stronger statement that the degeneracy in $r$ for the integrated decay rate established above is in fact optimal. That is to say for any $\delta>0$, there can exist no constant $C$ such that Proposition \ref{Morawetz wave} holds with the weight $\frac{r^2}{\sqrt{1+r^2}}$ replaced by $\frac{r^{2+\delta}}{\sqrt{1+r^2}}$. This in particular suggests that the trapping phenomenon present here is different to the normally hyperbolic trapping observed, for example, at the photon sphere of the Schwarzschild black hole.
\end{Remark}

\section{Generalizations} \label{sec:gen}
\subsection{Alternative boundary conditions} \label{boundary conditions section}

\subsubsection{Conformal Wave}
We assumed that our solution $u$ satisfied the boundary condition
\be
\frac{\partial (r u)}{\partial t} + r^2\frac{\partial (r u)}{\partial r} \to  0, \qquad \textrm{ as }r\to \infty \, .
\ee
We can also consider the boundary conditions
\be
\frac{\partial (r u)}{\partial t} + \beta(\omega)  r^2\frac{\partial (r u)}{\partial r} \to  0, \qquad \textrm{ as }r\to \infty \, .
\ee
where $\beta :\mathbb S^2 \to \R$ is a smooth function satisfying the uniform positivity condition:
\be
\beta(\omega) \geq \kappa^2
\ee
for some $\kappa>0$ for all $\omega \in \mathbb S^2$. This can be treated exactly as above, with the same results although the constants in the various estimates will now depend on $\beta$. One could also imagine adding some small multiples of tangential derivatives of $u$ to the boundary condition. This can also be handled by the methods above, but this will require combining the energy and integrated decay estimates.

\subsubsection{Maxwell}
The boundary conditions that we assumed, 
\be
r^2 \left(E_A + \epsilon_A{}^{B} H_B\right) \to 0, \qquad \textrm{ as } r \to \infty,
\ee
can also be generalised without materially affecting the results. In particular, we could choose as boundary conditions
\be
r^2 \left(E_A + \beta_{AB}(\omega) \epsilon^{BC} H_C\right) \to 0, \qquad \textrm{ as } r \to \infty
\ee
where the symmetric matrix valued function $\beta:\mathbb S^2 \to M(2\times 2)$ satisfies a uniform positivity bound:
\begin{equation} \label{leo}
\beta_{AB}(\omega) \xi^A \xi^B \geq \kappa^2 \abs{\xi}^2,
\end{equation}
for some $\kappa>0$ and for all $\xi \in \R^2$, $\omega \in \mathbb S^2$. In particular, our results hold for any Leontovic boundary condition \cite[\S 87]{LandauLifshitz8} satisfying (\ref{leo}). Again, one could also permit other components of $E$, $H$ to appear in the boundary condition with small coefficients and this can be handled by combining the energy and integrated decay estimates.

\subsubsection{Bianchi}
Recall that we are considering boundary conditions of the form:
\begin{equation} \label{optd}
r^3 \left( \hat{E}_{AB} + \epsilon_{(A}{}^{C} \hat{H}_{B) C}\right) \to 0, \qquad \textrm{ as } r \to \infty.
\end{equation}
To generalise these, let us introduce a $4-$tensor on $\mathbb S^2$, $\beta_{ABCD}(\omega)$, symmetric on its first and last pairs of indices and also under interchange of the first and last pair of indices:
\be
\beta_{ABCD}= \beta_{(AB)CD}= \beta_{AB(CD)}= \beta_{CDAB}
\ee
we also require that $\beta$ is trace free on its first (or last) indices, i.e. $\beta^A{}_{ABC}=0$. In other words, $\beta$ represents a symmetric bilinear form on the space of symmetric trace-free tensors at each point of  $\mathbb S^2$. We can consider boundary conditions
\ben{new bcs}
r^3 \left( \beta_{AB}{}^{CD} \hat{E}_{CD} +\epsilon_{(A}{}^{C} \hat{H}_{B) C}\right) \to 0, \qquad \textrm{ as } r \to \infty.
\een
Provided that  $\beta_{ABCD}$ is uniformly close to the canonical inner product on symmetric trace-free tensors, our methods will apply. More concretely, there is a $\delta>0$, which could be explicitly calculated, such that if
\be
(1+\delta) \abs{{\Xi}}^2 \geq \beta_{ABCD}(\omega) \Xi^{AB} \Xi^{CD} \geq (1-\delta) \abs{{\Xi}}^2 
\ee
holds for any symmetric traceless $\Xi$ and $\omega \in \mathbb S^2$, then our results hold for the boundary conditions \eq{new bcs}.

Of course, this does not imply that boundary conditions which don't satisfy this inequality lead to growth, simply that our approach breaks down when the boundary conditions are too far from the ``optimally dissipative'' ones (\ref{optd}).

\subsection{The Dirichlet problem for (B)}  \label{sec:Dirichlet}
Let us briefly discuss the Dirichlet problem for the Weyl tensor and its relation to the Dirichlet problem from the point of view of metric perturbations. In the latter, one wishes to fix, to linear order, the conformal class of the metric at infinity. Fixing the conformal class to be that of the unperturbed anti-de Sitter spacetime is equivalent to requiring that the Cotton-York tensor of the perturbed boundary metric vanishes. One may verify that the Cotton-York tensor of the boundary metric vanishes if and only if
\be
\abs{r^3\hat{H}_{AB}} + \abs{r^3E_{A\r}}+  \abs{r^3H_{\r\r}} \to 0\qquad \textrm{as }r\to \infty \, ,
\ee
as can be established by considering the metric in Fefferman-Graham coordinates. Let us illustrate in what way fixing the conformal class of the metric on the boundary is a more restrictive condition than fixing Dirichlet-conditions on the Weyl tensor, $\abs{r^3\hat{H}_{AB}} \rightarrow 0$.

It is possible to extract from the Bianchi equations a symmetric hyperbolic system on $\scri$ involving only ${r^3E_{A\r}},  {r^3H_{\r\r}}$, where $r^3\hat{H}_{AB}$ appears as a source term. Using this system it is possible to show that if ${r^3E_{A\r}}$ and ${r^3H_{\r\r}}$ vanish at infinity for the initial data then this condition propagates. Moreover, it is easy to see how to construct a large class initial data satisfying this vanishing condition at infinity, as well as a large class \emph{not} satisfying it illustrating that fixing the conformal class of the metric on the boundary is a more restrictive condition than purely fixing Dirichlet-conditions on the Weyl tensor. 


In conclusion, for initial data satisfying the vanishing condition, we may return to the estimate \eq{naive energy} and establish directly that solutions of the Bianchi system representing a linearised gravitational perturbation fixing the conformal class of the boundary metric are bounded. This is in accordance with the results of \cite{WaldIshi}, in which it is shown that metric perturbations obeying the linearised Einstein equations can be decomposed into components which separately obey wave equations admitting a conserved energy.

\subsection{The relation to the Teukolsky equations} \label{Teukolsky section}
We finally contrast our result with an alternative approach to study the spin 2 equations on AdS, which has a large tradition in the asymptotically flat context and relies on certain curvature components satisfying decoupled wave equations. As we will see below, however, in the AdS context this approach merely obscures the geometric nature of the problem and does not provide any obvious simplification as the resulting decoupled equations couple via the boundary conditions (and moreover do not admit a conserved energy). 

To decouple the spin $2$ equations, we take the standard $\theta, \phi$ coordinates for the spherical directions\footnote{The $(r, t,\theta,\phi)$ do not quite cover AdS, and so some care must be taken at the axis. For the purposes of this section, we shall ignore this difficulty.}, and choose as basis $e_1 = r^{-1} \p_\theta$, $e_2 = (r \sin \theta)^{-1} \p_\phi$. We then write:
\be
\Psi^{\pm} = E_{11} - E_{22} \mp2 H_{12} \mp i \left( H_{11} - H_{22} \pm2 E_{12}\right) \, .
\ee
These quantities obey the Teukolsky equations:
\begin{align}  \nonumber
0&= -\frac{r^2}{1+r^2} \p_t^2 \Psi^\pm  \pm 4 \frac{r}{1+r^2} \p_t \Psi^\pm + \frac{1+r^2}{r^3} \p_r \left(\frac{r^4} {1+r^2} \p_r \left[r (1+r^2)\Psi^\pm \right] \right) \\& \qquad + \frac{1}{\sin \theta} \p_\theta \left( \sin \theta \p_\theta \Psi^\pm \right) +\frac{1}{\sin^2\theta} \p_\phi^2 \Psi^\pm - 4 i \frac{\cos \theta}{\sin^2 \theta} \p_\phi \Psi^\pm \label{teukolsky}\\
&\qquad  -\left(\frac{4 }{\sin^2 \theta}-2\right)  \Psi^\pm \nonumber \,.
\end{align}
Once $\Psi^\pm$ have been found, the other components of $W$ can be recovered by solving an elliptic system coupled to the symmetric hyperbolic system in the boundary discussed in Section \ref{sec:Dirichlet}. It might appear that one can simply study the decoupled equations for $\Psi^\pm$ separately. Unfortunately, in general, the correct boundary conditions couple the equations. 

Let us see what the appropriate boundary conditions to impose on $\Psi^\pm$ are in order to fix the conformal class of the boundary metric. Clearly the vanishing of $\hat{H}_{AB}$ at $\scri$ is equivalent to the condition
\ben{Dirichlet bc}
\abs{r^3(\Psi^+ - \Psi^-  )} \to 0\qquad \textrm{as }r\to \infty.
\een
This however only gives us one condition for two equations. For the other condition we use the fact that in the context of the Dirichlet problem (fixing the conformal class discussed in Section \ref{sec:Dirichlet}) we know that ${r^3E_{A\r}},  {r^3H_{\r\r}}$ vanish on the boundary.\footnote{In the general case, their trace on the boundary can de determined by solving transport equations as outlined in Section \ref{sec:Dirichlet}} Inserting this into the Bianchi equations we can derive that
\be
\abs{r^2 \frac{\p}{\p r} \left( r^3\hat{E}_{AB}\right) } \to 0\qquad \textrm{as }r\to \infty.
\ee
This gives us a Neumann condition for the Teukolsky equations:
\ben{Neumann bc}
\abs{r^2 \frac{\p}{\p r} \left[ r^3(\Psi^+ + \Psi^-  )\right] } \to 0\qquad \textrm{as }r\to \infty.
\een
We should of course not be surprised that the two Teukolsky equations couple at the boundary. The two scalar functions $\Psi^\pm$ represent the outgoing and ingoing radiative degrees of freedom. Since the Dirichlet boundary conditions are reflecting, one should of course expect that the two components couple at the boundary. The pair of equations \eq{teukolsky} with the boundary conditions \eq{Dirichlet bc}, \eq{Neumann bc} forms a well posed system of wave equations, as may be seen for instance with the methods of \cite{Warnick:2012fi}. In any case, there seems to be no advantage in studying this coupled system of wave equations over the first order techniques of this paper.

We remark that the Teukolsky approach was recently used in \cite{Dias:2013sdc} to consider perturbations of the Kerr-AdS family of metrics (which includes anti-de Sitter for $a=m=0$). In this paper, the Teukolsky equation is separated and a boundary condition (preserving the conformal class of the metric at infinity) is proposed for the radial part of each mode of $\Psi^{\pm}$ \emph{separately}. This appears to contradict our discussion above. When one examines equations (3.9-12) of \cite{Dias:2013sdc} one sees spectral parameters appearing up to fourth order\footnote{In the presence of rotation things are even worse, as the square root in the boundary conditions implies that the operator on the boundary is \emph{non-local}.}. Returning to a physical space picture, these will appear as fourth order operators on the boundary. Accordingly, it is far from clear whether these boundary conditions can be meaningfully interpreted as giving boundary conditions for a dynamical evolution problem. Indeed, our heuristic argument for the coupling strongly suggests that the conditions of \cite{Dias:2013sdc} understood as boundary conditions for a dynamical problem \emph{cannot} give rise to a well posed evolution. That is not to say that these boundary conditions are not suitable for calculating (quasi)normal modes: any such mode will certainly obey these conditions, providing a useful trick to simplify such computations.

\section{Appendix}

\subsection{Proof of Lemma \ref{elliptic wave}} \label{elliptic wave proof}
\begin{proof}
Let us first consider the first two terms of $K$. We set
\be
K_1 :=\p_r \left(r^2 \p_r \left(u\sqrt{1+r^2} \right) \right) (\slashed{\nabla}^Au)\,  e_A -  (r \slashed{\nabla}_A u) \tilde{\p}_r \left(r \slashed{\nabla}^Au\right)\, e_\r \, .
\ee
Calculating with the expression for the divergence of a vector field \eq{vector divergence}, we find
\begin{align*}
\textrm{Div }K_1 &= \slashed{\nabla}_A \left[ \p_r \left(r^2 \p_r \left(u\sqrt{1+r^2} \right) \right) (\slashed{\nabla}^Au) \right] - \frac{1}{r^2}\frac{\p}{\p r} \left[ r^2 \sqrt{1+r^2} (r\slashed{\nabla}_A u) \tilde{\p}_r \left(r\slashed{\nabla}^Au\right)\right] \\
&= \p_r \left(r^2 \p_r \left(u\sqrt{1+r^2} \right) \right)\left[\slashed{\nabla}_A \slashed{\nabla}^A u\right] - \sqrt{1+r^2} \abs{\tilde{\p}_r\left( r\slashed{\nabla} u\right)}^2 \\
&\quad  +  \slashed{\nabla}_A \left[ \p_r \left(r^2 \p_r \left(u\sqrt{1+r^2} \right) \right)\right] (\slashed{\nabla}^Au) - \frac{\sqrt{1+r^2}}{r^2} (r\slashed{\nabla}_A u) \frac{\p}{\p r} \left[r^2 \tilde{\p}_r \left(r\slashed{\nabla}^Au\right)\right] \\
&= \p_r \left(r^2 \p_r \left(u\sqrt{1+r^2} \right) \right)\left[\slashed{\nabla}_A \slashed{\nabla}^A u\right] - \sqrt{1+r^2} \abs{\tilde{\p}_r\left( r\slashed{\nabla} u\right)}^2 \\
&\quad + \left[ \p_r \left(r^2 \p_r \left( r\slashed{\nabla}_A u\sqrt{1+r^2} \right) \right)\right] \frac{(\slashed{\nabla}^Au)}{r} - \frac{\sqrt{1+r^2}}{r^2} (r\slashed{\nabla}_A u) \frac{\p}{\p r} \left[r^2 \tilde{\p}_r \left(r\slashed{\nabla}^Au\right)\right]\\
&= \p_r \left(r^2 \p_r \left(u\sqrt{1+r^2} \right) \right)\left[\slashed{\nabla}_A \slashed{\nabla}^A u\right] - \sqrt{1+r^2} \abs{\tilde{\p}_r\left( r\slashed{\nabla} u\right)}^2 \\
&\quad +\frac{r}{\sqrt{1+r^2}} \left( r\slashed{\nabla}_A u\right)\tilde{\p}_r\left( r\slashed{\nabla}^A u\right) \, .
\end{align*}
Finally, taking
\be
K_2:= - \frac{r }{2(1+r^2)} \abs{r \slashed{\nabla} u}^2 e_\r \, ,
\ee
we calculate
\begin{align*}
\textrm{Div }K_2 &= - \frac{1}{r^2}\frac{\p}{\p r} \left[ r^2 \sqrt{1+r^2} \frac{r }{2(1+r^2)} \abs{r \slashed{\nabla} u}^2\right] \\
&= - \frac{r}{\sqrt{1+r^2}} \left( r\slashed{\nabla}_A u\right)\tilde{\p}_r\left( r\slashed{\nabla}^A u\right) - \frac{1+r^2}{2 r^2}  \abs{r \slashed{\nabla} u}^2 \frac{\p}{\p r}\left( \frac{r^3}{(1+r^2)^{\frac{3}{2}}}\right)
\\
&= - \frac{r}{\sqrt{1+r^2}} \left( r\slashed{\nabla}_A u\right)\tilde{\p}_r\left( r\slashed{\nabla}^A u\right) - \frac{3 \abs{ r\slashed{\nabla} u}^2}{2 (1+r^2)^{\frac{3}{2}}} \, .
\end{align*}
Adding these two contributions, we arrive at the result.
\end{proof}

\subsection{Proof of Lemma \ref{maxwell cross}}\label{maxwell cross proof}
Let us introduce
\be
\alpha := -\frac{2}{\sqrt{1+r^2}}  \partial_r \left( r \sqrt{1+r^2} H_A \right) r\slashed{\nabla}^A H_\r.
\ee
From the expression \eq{vector divergence} for the divergence of a vector field, we can quickly establish that, owing to cancellation between the mixed partial derivatives, we have
\begin{align*}
\textrm{Div } K &= 2\sqrt{1+r^2}  \frac{\partial}{\partial r} \left(\frac{r^2 H_\r}{\sqrt{1+r^2}} \right) \slashed{\nabla}^B H_B -  2 \frac{\partial}{\partial r} \left( r \sqrt{1+r^2} H_B \right) \frac{r \slashed{\nabla}^B H_\r}{\sqrt{1+r^2}} \\
&- \frac{1}{r^2}  \frac{\partial}{\partial r} \left(\frac{r^5}{ \sqrt{1+r^2}} H_\r^2 \right)
\end{align*}
so that
\begin{align*}
\alpha &= \textrm{Div } K -   2\sqrt{1+r^2}  \frac{\partial}{\partial r} \left(\frac{r^2 H_\r}{\sqrt{1+r^2}} \right) \slashed{\nabla}^B H_B + \frac{1}{r^2}  \frac{\partial}{\partial r} \left(\frac{r^5}{ \sqrt{1+r^2}} H_\r^2 \right) \\
&= \textrm{Div } K + 2 \frac{1+r^2}{r^2} \frac{\partial}{\partial r} \left(\frac{r^2 H_\r}{\sqrt{1+r^2}} \right) \frac{\partial}{\partial r} \left(r^2 H_\r \right) + \frac{1}{r^2}  \frac{\partial}{\partial r} \left(\frac{r^5}{ \sqrt{1+r^2}} H_\r^2 \right) \\
&= \textrm{Div } K + 2 \frac{\sqrt{1+r^2}}{r^2}\abs{ \frac{\partial}{\partial r} \left(r^2 H_\r \right) }^2 - \frac{2 r}{\sqrt{1+r^2}} H_\r  \frac{\partial}{\partial r} \left(r^2 H_\r \right) \\
&\quad + \frac{2 r}{\sqrt{1+r^2}} H_\r  \frac{\partial}{\partial r} \left(r^2 H_\r \right) + \frac{r^2}{(1+r^2)^{\frac{3}{2}}} \abs{H_r}^2 \\
&= 2\frac{\sqrt{1+r^2}}{r^2} \abs{\frac{\partial }{\partial r}\left ( r^2H_\r \right)}^2 + \frac{r^2}{\left(1+r^2\right)^{\frac{3}{2}}} \abs{H_\r}^2 + \textrm{Div } K \, .
\end{align*}
Here we have used the constraint equation to pass from the first to the second line. This completes the proof. 

\subsection{Proof of Lemma \ref{bianchi cross}}\label{bianchi cross proof}
This is a straightforward calculation with the formula for the divergence of a vector field \eq{vector divergence}. We find
\begin{align*}
\textrm{Div }K &= \frac{1}{r^2}\frac{\partial }{\partial r} \left(2 r^5 \sqrt{1+r^2} \slashed{\nabla}^C H_{BC} H^B{}_\r - \abs{r^3 H_{B\r}}^2 \right) \\
& \quad - \slashed{\nabla}^C\left( \frac{2 r^2 H_{B\r}}{\sqrt{1+r^2}} \frac{\partial}{\partial r}\left[ r(1+r^2) H^{B}{}_{C}\right]  \right) \\
&= \frac{2 r H_{B\r}}{\sqrt{1+r^2}} \frac{\partial}{\partial r}\left[ r^2(1+r^2) \slashed{\nabla}^C H^{B}{}_{C}\right]  + 2(1+r^2) \slashed{\nabla}^C H_{BC} \frac{\partial}{\partial r}\left[ \frac{r^3}{\sqrt{1+r^2}}  H^B{}_\r\right] \\
&\quad - \frac{2 r H_{B\r}}{\sqrt{1+r^2}} \frac{\partial}{\partial r}\left[ r^2(1+r^2) \slashed{\nabla}^C H^{B}{}_{C}\right] - \frac{2 r^2}{\sqrt{1+r^2}} \slashed{\nabla}^C H_{B\r}  \frac{\partial}{\partial r}\left[ r(1+r^2) H^B{}_C\right] \\
&\quad - 2 r H_{B\r} \frac{\partial}{\partial r} \left(r^3 H^B{}_\r \right) \\
&= -\frac{2(1+r^2)^{\frac{3}{2}} }{r^3}  \frac{\partial}{\partial r}\left( r^3 H_{B\r}\right) \left[  \frac{1}{\sqrt{1+r^2}} \frac{\partial}{\partial r}\left( r^3 H^B{}_\r\right) - \frac{r^4}{(1+r^2)^{\frac{3}{2}}} H^B{}_\r \right] \\
&\quad- \frac{2 r^2}{\sqrt{1+r^2}} \slashed{\nabla}^C H_{B\r}  \frac{\partial}{\partial r}\left[ r(1+r^2) H^B{}_C\right]  - 2 r H_{B\r} \frac{\partial}{\partial r} \left(r^3 H^B{}_\r \right) \\
&= -\frac{2 r^2}{\sqrt{1+r^2}} \slashed{\nabla}^C H_{B\r}  \frac{\partial}{\partial r}\left[ r(1+r^2) H^B{}_C\right] - 2\frac{1+r^2}{r^3} \abs{\partial_r \left(r^3 H_{B\r}\right)}^2 \, .
\end{align*}
Here, we have used the constraint equation in passing from the second equality to the third by replacing $\slashed{\nabla}^C H_{BC}$ with a term involving $\partial_r(r^3 H_{B\r})$.

\providecommand{\href}[2]{#2}\begingroup\raggedright\endgroup


\end{document}